\documentclass[11pt]{article}
\usepackage{etoolbox}
\usepackage{amssymb}
\usepackage{amsmath}
\usepackage{amsthm}
\usepackage{forloop}
\usepackage[paper=letterpaper,margin=1in]{geometry}
\usepackage[ruled,vlined,linesnumbered]{algorithm2e}
\usepackage{paralist}
\usepackage{nicefrac}
\usepackage{algpseudocode}
\usepackage{multirow}
\usepackage{array}
\usepackage{thmtools}
\usepackage{hyperref}
\usepackage{caption}
\usepackage{subcaption}
\usepackage{tkz-graph}

% Make the hyper-links look better
\hypersetup{
  colorlinks   = true, %Colours links instead of ugly boxes
  urlcolor     = blue, %Colour for external hyperlinks
  linkcolor    = blue, %Colour of internal links
  citecolor   = red %Colour of citations
}

%%%%%%%%%%%%%%%%%%%%%%%%%%%%%%%%%%%%%%%%%%%%%%%%%%%%%%%%%%%%%%%%%%%%%%%%%%%
%% Space saving tricks.
% \usepackage[normalbib, normalmargins, normaltitle]{savetrees}
% \usepackage{typearea}
% \typearea{15}
% \paperwidth 8.5in \paperheight 11in
% \usepackage[compact]{titlesec}
% \usepackage{times}
% \usepackage{txfonts}
% \renewcommand{\baselinestretch}{0.97}
% \addtolength{\parskip}{-.1ex}
%%%%%%%%%%%%%%%%%%%%%%%%%%%%%%%%%%%%%%%%%%%%%%%%%%%%%%%%%%%%%%%%%%%%%%%%%%%

% Theorem Environments
\newtheorem{theorem}{Theorem}[section]
\newtheorem{lemma}[theorem]{Lemma}

\newtheorem{corollary}[theorem]{Corollary}

\newtheorem{observation}[theorem]{Observation}

% Caligraphics and Board Letters
\newcommand{\defcal}[1]{\expandafter\newcommand\csname c#1\endcsname{{\mathcal{#1}}}}
\newcommand{\defbb}[1]{\expandafter\newcommand\csname b#1\endcsname{{\mathbb{#1}}}}
\newcounter{calBbCounter}
\forLoop{1}{26}{calBbCounter}{
    \edef\letter{\Alph{calBbCounter}}
    \expandafter\defcal\letter
		\expandafter\defbb\letter
}

% Margin notes

    \makeatletter
\def\@fnsymbol#1{\ensuremath{\ifcase#1\or *\or **\or \dagger\or \ddagger\or
   \mathsection\or \mathparagraph\or \|\or \dagger\dagger
   \or \ddagger\ddagger \else\@ctrerr\fi}}

% Other Commands
\newcommand{\eps}{\varepsilon}
\newcommand{\ie}{{\it i.e.}}
\newcommand{\eg}{{\it e.g.}}
\newcommand{\trianglefreealg}{{\textnormal{\textsc{TriangleFreeAlg}}}}
\newcommand{\email}[1]{{\href{mailto:#1}{#1}}}

\newcommand{\naive}{{na\"{i}ve}}
\newcommand{\scomp}{{\text{(\#single)}}}
\newcommand{\dcomp}{{\text{(\#double)}}}
\newcommand{\tcomp}{{\text{(\#triangle)}}}
\newcommand{\dnpotential}{{\text{(\#component-free)}}}
\newcommand{\ddpotential}{{\text{(\#component-component)}}}
\newcommand{\dspotential}{{\text{(\#single-component)}}}
\newcommand{\sspotential}{{\text{(\#single-single)}}}
\newcommand{\dmpotential}{{\text{(\#middle)}}}
\newcommand{\ftriangles}{{\text{(\#non-$M^*$-triangles)}}}

\newcommand{\misseddd}{{\text{(\#lost-component-component)}}}
\newcommand{\ftrianglesdn}{{\text{(\#non-$M^*$-triangles-$A_1$)}}}
\newcommand{\ftrianglesdd}{{\text{(\#non-$M^*$-triangles-$A_2$)}}}
\newcommand{\MMF}{{\texttt{MMF}}}

\author{Moran Feldman\thanks{Department of Computer Science, University of Haifa, E-mail: \email{moranfe@cs.haifa.ac.il}.} \and Ariel Szarf\thanks{Department of Mathematics and Computer Science, Open University of Israel, E-mail: \email{aszarf@gmail.com}.}}

\title{Maximum Matching sans Maximal Matching: A New Approach for Finding Maximum Matchings in the Data Stream Model}

\begin{document}

\maketitle
\pagenumbering{Alph}
\thispagestyle{empty}
% Cardinality vs. size: see remark in the preliminaries section
% In two-pass or three-pass, the word ``pass'' is usually written in the singular. I think this is just a standard in the literature, both ways are gramatically fine.
% Regarding the ``We have to write that explicitly?''. It is a matter of tactic. I feel that it is better to write it ourselves than let the reviewers note it (which they are highly likely to).
% Regarding ``and? or?''. Shorter is better in the abstract in my opinion, but we can change it if you really want. Sharf: I understand, but in the last paragraph we write "two and three passes, respectively", and it's not coherent in my opinion.

\begin{abstract}
The problem of finding a maximum size matching in a graph (known as the \emph{maximum matching} problem) is one of the most classical problems in computer science. Despite a significant body of work dedicated to the study of this problem in the data stream model, the state-of-the-art single-pass semi-streaming algorithm for it is still a simple greedy algorithm that computes a maximal matching, and this way obtains $\nicefrac{1}{2}$-approximation. Some previous works described two/three-pass algorithms that improve over this approximation ratio by using their second and third passes to improve the above mentioned maximal matching. One contribution of this paper continuous this line of work by presenting new three-pass semi-streaming algorithms that work along these lines and obtain improved approximation ratios of $0.6111$ and $0.5694$ for triangle-free and general graphs, respectively.

Unfortunately, a recent work~\cite{DBLP:journals/corr/abs-2107-07841} shows that the strategy of constructing a maximal matching in the first pass and then improving it in further passes has limitations. Additionally, this technique is unlikely to get us closer to single-pass semi-streaming algorithms obtaining a better than $\nicefrac{1}{2}$-approximation. Therefore, it is interesting to come up with algorithms that do something else with their first pass (we term such algorithms non-maximal-matching-first algorithms). No such algorithms are currently known (to the best of our knowledge), and the main contribution of this paper is describing such algorithms that obtain approximation ratios of $0.5384$ and $0.5555$ in two and three passes, respectively, for general graphs (the result for three passes improves over the previous state-of-the-art, but is worse than the result of this paper mentioned in the previous paragraph for general graphs). The improvements obtained by these results are, unfortunately, numerically not very impressive, but the main importance (in our opinion) of these results is in demonstrating the potential of non-maximal-matching-first algorithms.

\medskip
\textbf{Keywords:} Maximum matching, semi-streaming algorithms, multi-pass algorithms
\end{abstract}
\newpage

\pagenumbering{arabic}

\section{Introduction} \label{sec:introduction}

% Regarding cardinality vs. size: see the remark in the preliminaries
The problem of finding a maximum size matching in a graph (known as the \emph{maximum matching} problem) is one of the most classical problems in computer science, and many polynomial time algorithms have been designed for it over the years (see, \eg, \cite{BalinskiG91,edmonds1965maximum,HopcroftK73}). Due to its central role, the maximum matching problem is often one of the first problems considered when new computational models are suggested. One such model is the data stream model, which is motivated by Big-Data applications, and has been the subject of an enormous amount of research over the last couple of decades.

In the data stream model, the algorithm receives the input in the form of a stream which it can read sequentially, but due to memory restrictions, the algorithm can store only a small part of this stream. This means that the algorithm has to process (in some sense) the input stream while reading it, and never gets an opportunity to see all the parts of the input at the same time. Traditional algorithms for this model, known as \emph{streaming algorithms}, are allowed only memory that is poly-logarithmic in the natural parameters of the problem. Obtaining a streaming algorithm for a problem is very desirable, but is often not possible. In particular, many graph problems provably do not admit streaming algorithms, and the maximum matching problem is among these problems if one would like an algorithm for the problem to output an (approximately) maximum matching because such a matching might be of linear size in the number of vertices. Nevertheless, non-trivial streaming algorithms have been designed for the maximum matching problem when only the (approximate) size of a maximum matching is desired (see Section~\ref{ssc:related_work} for details).

The resistance of many graph problems to streaming algorithms has motivated Feigenbaum et al.~\cite{FeigenbaumKMSZ04} to suggest semi-streaming algorithms, which are algorithms for the data stream model that are allowed a space complexity of $O(n \log^c n)$ for some constant $c \geq 0$, where $n$ is the number of vertices in the graph. Such algorithms turn out to be a sweet-spot that on the one hand allows many results of interest, and on the other hand, does not lead to triviality because $O(n \log^c n)$ is less than the space necessary for storing the input graph (unless this graph is very sparse). In particular, Feigenbaum et al.~\cite{FeigenbaumKMSZ04} observed that one can obtain $\nicefrac{1}{2}$-approximation for the maximum matching problem using a simple semi-streaming algorithm that greedily constructs a maximal matching.\footnote{A maximal matching is a matching that is inclusion-wise maximal, and it is well-known that the size of any maximal matching is a $\nicefrac{1}{2}$-approximation for the size of a maximum matching.}

% Detail is rarely put in the plural in English. Similar to the word ``work''
The above $\nicefrac{1}{2}$-approximation semi-streaming algorithm for the maximum matching problem also has the desirable property that it reads the input stream only once (\ie, it makes a single pass over it). Surprisingly, no single-pass semi-streaming algorithm improving over the approximation ratio of this simple algorithm was suggested in the decade and a half that has already passed since the work of~\cite{FeigenbaumKMSZ04} (in contrast, Kapralov~\cite{DBLP:conf/soda/Kapralov21} showed that no such algorithm can have an approximation ratio better than $1/(1+\ln 2) \approx 0.59$, improving over previous inapproximability results due to~\cite{DBLP:conf/soda/GoelKK12,DBLP:conf/soda/Kapralov13}). Given this lack of progress, interest arose in obtaining improved approximation ratios for relaxed versions of the above problem. Perhaps, the simplest such relaxation is to allow the algorithm to make multiple (sequential) passes over the input stream. Some works tried to understand the approximation ratio that can be obtained as the number of passes grows (but remains constant)---see Section~\ref{ssc:related_work} for more detail. Another line of work is interested in studying semi-streaming algorithms with very few passes (usually two or three).

\begin{table}
\caption{The state-of-the-art approximation ratios for semi-streaming algorithms using two or three passes, and our improvements over these ratios (the number to the right of each improvement is the number of the theorem formally stating it).} \label{tbl:results}

\begin{center}
\begin{tabular}{l|l|cr|cr}
\multicolumn{1}{c|}{\textbf{Number}}&\multicolumn{1}{c|}{\textbf{Type of}}&\multicolumn{2}{c|}{\multirow{2}{*}{\textbf{State-of-the-Art}}}&\multicolumn{2}{c}{\multirow{2}{*}{\textbf{This Paper}}}\\
\multicolumn{1}{c|}{\textbf{of Passes}}&\multicolumn{1}{c|}{\textbf{Graphs}}&&&&\\
\hline
\rule{0pt}{2.6ex}\multirow{3}{*}{Two-Pass}&Bipartite&$2 - \sqrt{2} \approx \frac{1}{2} +\frac{1}{11.66} \approx 0.5857$&\cite{Konrad18}&-&\\[1mm]
&Triangle-Free&$\frac{1}{2} + \frac{1}{16} = 0.5625$&\cite{KaleT17}&-&\\[1mm]
&General&$\frac{1}{2} + \frac{1}{32} = 0.53125$&\cite{KaleT17}&$\frac{1}{2} + \frac{1}{26} \approx 0.5385$&(\ref{thm:two_pass})\\[1mm]
\hline
\rule{0pt}{2.6ex}\multirow{3}{*}{Three-Pass}&Bipartite&$0.6067 \approx \frac{1}{2} + \frac{1}{9.37}$ & \cite{Konrad18}&$\frac{1}{2} + \frac{1}{9} \approx 0.6111$&(\ref{thm:improvement_paths_triangle_free})\\[1mm]
&Triangle-Free&$\frac{1}{2} + \frac{1}{10} = 0.6$&\cite{KaleT17}&$\frac{1}{2} + \frac{1}{9} \approx 0.6111$&(\ref{thm:improvement_paths_triangle_free})\\[1mm]
&General&$\frac{1}{2} + \frac{81}{1600} \approx \frac{1}{2} + \frac{1}{19.753} \approx 0.5506$&\cite{KaleT17}&$\frac{1}{2} + \frac{1}{14.4} \approx 0.5694$&(\ref{thm:improvement_paths_general})
\end{tabular}
\end{center}
\end{table}

%\textcolor{red}{XXX The precise number of three passes in general graph is 81/1600, but even the article mentioned to that result as 1/19.753 in the abstract. XXX}

% The last line of work is exactly two or three passes.
The state-of-the-art results for the last line of work are summarized in Table~\ref{tbl:results}. We note that beside the state-of-the-art results for general input graphs, Table~\ref{tbl:results} also gives improved results for bipartite and triangle-free graphs. All the known results in this line of work (to the best of our knowledge) start by greedily constructing a maximal matching during the first pass over the input stream, and then augmenting this matching in the subsequent passes. Recently, Konrad and Naidu~\cite{DBLP:journals/corr/abs-2107-07841} showed that this technique has limitations (specifically, even for bipartite graphs, a two-pass semi-streaming algorithm based on this technique cannot obtain a better than $2/3$-approximation). Additionally, and arguably more importantly, multi-pass algorithms that use their first pass for constructing a maximal matching are unlikely to be a step towards a single-pass semi-streaming algorithm with a better than $\nicefrac{1}{2}$-approximation guarantee.

Given the above observations, it is natural to believe that the future of the study of semi-streaming algorithms for the maximum matching problem lies in algorithms that use their first pass in a more sophisticated way than simply constructing the traditional maximal matching. We term such algorithms \emph{non-maximal-matching-first algorithms} (or non-{\MMF} algorithms for short). In this paper, we present the first non-{\MMF} algorithms, which leads to improvements over the state-of-the-art both for two and three passes. Admittedly, the improvements we obtain are numerically not very impressive, but their main importance (in our opinion) is in demonstrating the potential of non-{\MMF} algorithms.

To intuitively understand our non-{\MMF} algorithms, one should note that greedily constructing a maximal matching is equivalent to greedily constructing a graph whose connected components are of size at most $2$ (where the size of a connected component is defined as the number of vertices in it). Therefore, a natural generalization is to greedily construct in the first pass a graph whose connected components are of size at most $3$. There are two intuitive advantages for doing that compared to constructing a maximal matching.
\begin{itemize}
	\item If many connected components end up to be of size $2$ rather than $3$, then it is not possible for many of the edges of a maximum matching to intersect only a single connected component of the constructed graph; and therefore, the constructed graph must have many connected components compared to the size of a maximum matching.
	\item A connected component of size $3$ can contribute two edges to the output matching if it is ``augmented'' during in the next passes with a single additional edge. In contrast, doing the same with a connected component of size $2$ requires ``augmenting'' it with two additional edges. It is important to note that there is a significant conceptual difference between an augmentation of a connected component with one or two edges. Augmenting a connected component with two edges requires finding pairs of edges that augment the \emph{same} connected component, while augmenting with a single edge does not require such a synchronization.
\end{itemize}
Using the above ideas, we prove in Section~\ref{sec:two_passes} and Appendix~\ref{app:three_pass} the following two theorems, respectively.

\begin{restatable}{theorem}{ThmTwoPass} \label{thm:two_pass}
There exists a non-{\MMF} $2$-pass ($\nicefrac{7}{13} = \nicefrac{1}{2} + \nicefrac{1}{26}$)-approximation semi-streaming algorithm for finding a maximum size matching in a general graph.
\end{restatable}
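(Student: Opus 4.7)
The plan is to design a 2-pass algorithm whose first pass greedily maintains a graph $H$ in which every connected component has at most $3$ vertices: upon arrival of each stream edge $e$, add $e$ to $H$ if and only if doing so keeps every connected component of size at most $3$. Writing $s$ and $t$ for the numbers of size-$2$ and size-$3$ components of $H$, the rule ``pick one matching edge per component'' already yields a matching of size $s + t$. The second pass augments this baseline by searching for edges that let size-$3$ components contribute $2$ output edges instead of $1$; each such augmentation requires a single edge from a ``free'' vertex (one incident to no edge of $H$) to a vertex of a size-$3$ component such that the component still admits a matching of size $1$ after removing that vertex. I would store, for each size-$3$ component of $H$, a small constant number of candidate augmenting edges together with a candidate ``middle-to-free'' edge and a small set of bridge edges to other components, so that a post-processing step can pick a consistent collection of augmentations (including the coordinated ones described below) and output a maximum matching in the stored subgraph.

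For the analysis, the structural backbone is an upper bound on $\mu = |M^*|$ derived from the greedy property of the first pass. An $M^*$-edge not in $H$ must fall into one of three categories: internal to some component of $H$, between two components whose sizes sum to at least $4$, or incident to a free vertex whose partner lies in a size-$3$ component. Combined with the fact that each vertex appears in at most one $M^*$-edge, a vertex-counting argument across size-$2$ and size-$3$ components yields $2\mu \le 2s + 3t + \mu_F$, where $\mu_F$ denotes the number of $M^*$-edges incident to a free vertex. Letting $k$ be the number of augmentations produced by pass~$2$, so that the output has size $s + t + k$, this structural inequality reduces the desired bound $s + t + k \ge \tfrac{7}{13}\mu$ to showing $12 s + 5 t + 26 k \ge 7 \mu_F$.

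The main obstacle---and the step that forces the algorithm to do more than ``one augmenting edge per size-$3$ component''---is lower-bounding $k$ when $\mu_F$-edges target the \emph{middle} vertex of a size-$3$ path, since removing the middle leaves no matching of size $1$ in the residual component. A tight instance is $s=0$ with every size-$3$ component a path whose middle is $M^*$-matched to a free vertex and whose endpoints are $M^*$-matched to endpoints of other paths; the naive single-edge augmentation scheme achieves $k=0$. I would handle such cases by additionally detecting coordinated patterns---e.g., two paths joined by an $M^*$-bridge with both middles having free $M^*$-partners can together output $3$ matching edges instead of $2$---and by carefully partitioning the $\mu_F$-edges into (a) those feeding genuine single-edge augmentations, (b) those absorbed by coordinated multi-component augmentations, and (c) those paid for by slack in the baseline $s + t$. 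Setting up this case analysis correctly, so that every unit of $\mu_F$ is charged exactly once, is the primary technical work; once in place, the arithmetic that yields the ratio $\nicefrac{7}{13} = \nicefrac{1}{2} + \nicefrac{1}{26}$ is routine.
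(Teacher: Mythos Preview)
Your first pass and the structural inequality $2\mu \le 2s + 3t + \mu_F$ match the paper's setup (in the paper's notation, $s = \scomp$ and $t = \dcomp + \tcomp$, and a finer version of your inequality appears as the combination of Observations~\ref{obs:upper_bound_charge} and~\ref{obs:lower_bound_charge}). You also correctly identify the central obstruction: $M^*$-edges hitting the \emph{middle} vertex of a size-$3$ path cannot be turned into single-edge augmentations. However, your proposed remedy---storing ``a candidate middle-to-free edge and a small set of bridge edges'' and then running an unspecified post-processing case analysis on ``coordinated multi-component augmentations''---is precisely the part that carries the entire weight of the $\nicefrac{7}{13}$ bound, and it is left as a sketch. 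As written, there is no concrete second-pass rule, no bound on how many bridge edges must be stored per component (so semi-streaming space is not established), and no charging argument showing $12s + 5t + 26k \ge 7\mu_F$. Your tight instance (all size-$3$ paths, middles matched to free vertices, endpoints bridged) shows that the naive scheme gives $k=0$; you would need to prove that your coordinated scheme recovers enough augmentations there, and that recovery interacts with the space bound in a way you have not specified.

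The paper sidesteps the middle-vertex issue rather than attacking it head-on. It designates only the \emph{endpoints} of a size-$3$ path (and all vertices of a triangle) as ``connection vertices,'' and the second pass builds \emph{two} edge sets in parallel: $A_1$ greedily collects edges from a connection vertex to a free vertex, while $A_2$ greedily collects edges from a connection vertex either to a free vertex or to a connection vertex of another component. The output is a maximum matching in $P \cup A_1 \cup A_2$, and the analysis lower-bounds $\scomp + \dcomp + \tcomp + \max\{|A_1|,|A_2|\}$. Middle-vertex $M^*$-edges are never treated as augmentation candidates; instead they are absorbed into a separate term $\dmpotential$ in the charging scheme, which is controlled by the trivial bound $\dmpotential \le \dcomp + \tcomp$ (each size-$3$ component contributes at most one such edge). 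The two parallel greedy processes give two complementary inequalities (Corollaries~\ref{cor:A_1_bound} and~\ref{cor:A_2_bound}), and taking a convex combination of them yields $\nicefrac{7}{13}$ directly. The key algorithmic idea you are missing is this parallel $A_1$/$A_2$ construction; the key analytic idea is that middle-vertex edges are \emph{charged away} rather than augmented, which is why the paper never needs the coordinated patterns you propose.
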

\begin{restatable}{theorem}{ThmThreePass} \label{thm:three_pass}
There exists a non-{\MMF} $3$-pass ($\nicefrac{5}{9} = \nicefrac{1}{2} + \nicefrac{1}{18}$)-approximation semi-streaming algorithm for finding a maximum size matching in a general graph.
\end{restatable}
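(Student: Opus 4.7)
The plan is to describe a three-pass algorithm following the non-\MMF{} paradigm and to analyze it by carefully accounting for how a fixed optimum matching $M^*$ interacts with the algorithm's output. In the first pass I greedily construct a graph $G_1$ whose connected components each have size at most $3$: for every stream edge $(u,v)$, I add it to $G_1$ exactly when the union of the components containing $u$ and $v$ has total size at most $3$. The resulting components partition into \emph{singles} (two vertices and one edge), \emph{doubles} (three vertices forming a path $P_3$), and \emph{triangles}. A central structural observation is that no two isolated vertices of $G_1$ can be joined by any stream edge (otherwise that edge would have been added), so every edge of $M^*$ touches at least one non-isolated vertex; this yields a useful bound relating the counts $s := \#\text{singles}$, $d := \#\text{doubles}$, $t := \#\text{triangles}$ to $m^* := |M^*|$.

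In the second pass I attempt to augment each size-$3$ component with a single external edge: for a double on vertices $a, b, c$ (with $b$ the middle), I search the stream for an edge from $a$ or $c$ to a vertex outside the component (typically an isolated vertex) whose inclusion lets the component contribute two matching edges instead of one; the triangle case is analogous. In the third pass I look for $3$-augmenting paths through each still-unaugmented single: for a single $(u,v)$, I seek stream edges $(x,u)$ and $(v,y)$ with $x$ and $y$ unmatched and distinct, and realise the augmentation whenever such a pair can be assembled, resolving competition among candidates online by keeping a bounded number of candidate endpoints per single. The output matching thus has size $s + d + t + A_2 + A_3$, where $A_2$ and $A_3$ are the numbers of successful augmentations in passes~$2$ and~$3$; the whole procedure uses only $O(n \log^{c} n)$ space because each pass stores $O(1)$ candidate edges per component.

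The analysis fixes $M^*$ and classifies each of its edges by the pair of components of $G_1$ that its endpoints lie in; depending on the category, an edge is charged either to $s + d + t$ directly (when it is internal to a size-$\geq 2$ component), to a potential pass-$2$ augmentation (when it joins a size-$3$ component to an isolated vertex or to another suitable component), or to a potential $3$-augmenting path through a single. The hard part will be the case analysis itself: realised pass-$2$ augmentations consume isolated vertices and can thereby destroy potential pass-$3$ augmenting paths, while singles whose vertices lie on several candidate paths must arbitrate among them. I plan to unify this bookkeeping via a potential function that assigns weights to components and to candidate augmentations so that, in every worst-case configuration, the sum of the realised quantities is at least $\tfrac{5}{9}\,m^*$. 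Tightness at exactly $\tfrac{1}{2} + \tfrac{1}{18}$ will come from balancing the loss incurred when a size-$3$ component fails to be augmented against the loss incurred when a single cannot be completed to a $3$-augmenting path, in the spirit of the two-pass analysis of Theorem~\ref{thm:two_pass} but exploiting the additional slack afforded by the third pass.
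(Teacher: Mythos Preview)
Your third pass is not what the paper does, and the divergence matters. In the paper (Appendix~\ref{app:three_pass}), both passes~2 and~3 augment \emph{size-$3$} components: pass~2 greedily collects a set $A_1$ of edges from a connection vertex of a na\"{i}ve size-$3$ component to an isolated vertex, and pass~3 greedily collects a set $A_2$ of edges joining connection vertices of two size-$3$ components that are still na\"{i}ve after pass~2. The output size is $L_2 = \scomp+\dcomp+\tcomp+|A_1|+|A_2|$, and the analysis hinges on the $\ddpotential$ edges of $M^*$: a strengthened Lemma~\ref{lem:dn_greedy} together with a bound on how many $\ddpotential$ edges are ``blocked'' by $A_1$ (Lemma~\ref{lem:size_of_misseddd}) yields $12|A_1|+12|A_2|\ge 4\,\dnpotential+3\,\ddpotential-4\,\ftriangles$, and combining this with Inequalities~\eqref{eq:basic_inequality}--\eqref{eq:auxilary_2} gives $L_2\ge \tfrac{5}{9}|M^*|$. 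Your plan never touches $\ddpotential$ at all.

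Instead, your pass~3 tries to find length-$3$ augmenting paths $x\!-\!u\!-\!v\!-\!y$ through each single $(u,v)$, which is precisely the two-wing ``synchronization'' problem that the introduction flags as the hard case the non-\MMF{} framework is meant to sidestep. There is also a concrete structural obstacle you do not address: if $(u,v)$ is a single in the final $G_1$, then \emph{no} edge of $G$ joins $u$ (or $v$) to an isolated vertex of $G_1$---any such edge would have been absorbed in pass~1 since the union of the two components would have size at most $3$. Hence your wings $x,y$ must be the unmatched vertices of \emph{un-augmented size-$3$ components}, not free isolated vertices; this ties pass~3 back to the doubles/triangles in a way your accounting does not reflect. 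Finally, ``keeping a bounded number of candidate endpoints per single'' and ``a potential function that assigns weights'' are not an algorithm and not a proof: to match the paper's $\tfrac{5}{9}$ you would need a quantitative lower bound on the number of realised pass-3 augmentations comparable to Lemma~\ref{lem:size_of_A2}, and none is offered.
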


As mentioned above, both Theorems~\ref{thm:two_pass} and~\ref{thm:three_pass} represent an improvement over the state-of-the-art. However, it turns out that we can further improve over Theorem~\ref{thm:three_pass} using new {\MMF} algorithms (\ie, algorithms that construct a maximal matching in their first pass). This leads to the following theorems whose proofs appear in Sections~\ref{sec:triangle_free} and~\ref{sec:general}, respectively.
\newbool{skipfootnote}
\begin{restatable}{theorem}{ThmImprovementPathsTriangleFree} \label{thm:improvement_paths_triangle_free}
There exists a $3$-pass ($\nicefrac{11}{18} = \nicefrac{1}{2} + \nicefrac{1}{9}$)-approximation semi-streaming algorithm for finding a maximum size matching in a triangle-free graph.\ifbool{skipfootnote}{}{\global\booltrue{skipfootnote}\footnote{We recall that every bipartite graph is triangle-free, and therefore, the same result is obtained also for bipartite graphs.}}
\end{restatable}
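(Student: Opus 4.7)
The plan is to prove this with a three-pass MMF algorithm: pass~1 greedily builds a maximal matching $M_1$, and passes~2--3 harvest vertex-disjoint length-three augmenting paths that improve $M_1$.

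\textbf{Structural reduction.} Let $M^*$ be a fixed maximum matching. Since $M_1$ is maximal, every $M^*$-edge has at least one endpoint in $V(M_1)$. I will partition $M^*$ into $s$ ``single'' edges (one endpoint in $V(M_1)$) and $d$ ``double'' edges (both in $V(M_1)$). Combining $s+2d \le 2|M_1|$ with $s+d=|M^*|$ gives $s \ge 2(|M^*|-|M_1|)$. Let $T$ be the set of $M_1$-edges whose two endpoints are both matched by singles; a double-counting argument (each $M_1$-endpoint is incident to at most one single edge of $M^*$) yields $|T| \ge s-|M_1| \ge 2|M^*|-3|M_1|$. Each $e \in T$ is the middle edge of a length-three augmenting path in $M^* \triangle M_1$, and in a triangle-free graph these paths are automatically vertex-disjoint (otherwise two free endpoints would form a triangle with the middle edge). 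If the algorithm augments a $\beta$-fraction of $T$, the output has size at least $(1-3\beta)|M_1| + 2\beta|M^*|$; for $\beta \le \nicefrac{1}{3}$ this is minimised at $|M_1|=|M^*|/2$ and equals $(\nicefrac{1}{2} + \nicefrac{\beta}{2})|M^*|$, reducing the theorem to proving $\beta \ge \nicefrac{2}{9}$ (the complementary regime $|M_1| > \nicefrac{11}{18}\,|M^*|$ is trivial).

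\textbf{Algorithm.} In pass~2 the algorithm scans the stream and, for each arriving edge $(u,v)$ with $u$ free and $v \in V(M_1)$, greedily attempts to ``attach'' $u$ to the $M_1$-edge $e$ containing $v$, subject to rules that (i) keep at most one half-edge per endpoint of $e$ and (ii) never reuse a free vertex. Pass~3 revisits the stream and, for each $M_1$-edge that received one half-edge, tries to complete the augmenting path on its other endpoint using a fresh free vertex. Triangle-freeness enters at two crucial places: it guarantees that the two ``target'' free vertices for a single edge of $T$ are always distinct, and it forbids the pathological situation in which a single free vertex is adjacent to both endpoints of an $M_1$-edge (which would otherwise block two augmenting paths simultaneously).

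\textbf{Main obstacle.} The crux is proving $\beta \ge \nicefrac{2}{9}$, essentially a competitive analysis of the greedy choices of passes~2--3 against the adversarial input stream. I plan a charging argument: every $T$-edge the algorithm fails to augment is charged either to a distinct augmenting path actually output or to a bounded number of collision events on the free side. Triangle-freeness trims the catalogue of collisions substantially, since a free vertex greedily attached to the ``wrong'' endpoint of some $M_1$-edge can be blamed on only a bounded number of $T$-edges. Balancing these charges carefully---and tuning the pass-2 attachment rule to optimise the amortisation (for instance, by preserving separate ``left'' and ``right'' candidates per $M_1$-edge and resolving conflicts in a way that protects $T$-edges with two demands)---should deliver $\beta = \nicefrac{2}{9}$ and, via the reduction above, the claimed $\nicefrac{11}{18}$-approximation. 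The $O(n \log n)$ space bound is immediate because only $M_1$, a constant number of candidate half-edges per $M_1$-edge, and the final augmenting paths are stored. The hardest piece of the proof, as highlighted, is pushing the recovery fraction from the $\nicefrac{1}{5}$ attainable by the Kale--Tirodkar approach up to $\nicefrac{2}{9}$ by exploiting triangle-freeness systematically.
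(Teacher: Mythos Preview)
Your structural reduction is sound: the bound $|T|\ge 2|M^*|-3|M_1|$ is correct, and the arithmetic showing that an augmentation fraction $\beta\ge \tfrac{2}{9}$ yields $\tfrac{11}{18}$ is right. (A small quibble: the $T$-paths are vertex-disjoint simply because $M^*$ is a matching; triangle-freeness is needed only to ensure each one is a genuine path rather than a triangle, i.e.\ that its two free endpoints are distinct.)

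The gap is that you never actually establish $\beta\ge \tfrac{2}{9}$. Your pass-2 rule---one half-edge per $M_1$-endpoint, never reuse a free vertex---is essentially the Kale--Tirodkar scheme, and as you yourself note it yields only $\beta=\tfrac{1}{5}$. You then say that ``tuning the pass-2 attachment rule'' and ``balancing these charges carefully \dots\ should deliver $\beta=\tfrac{2}{9}$,'' but no concrete rule or charging argument is given. That is the whole theorem.

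The paper closes this gap with two specific ingredients absent from your sketch. First, in pass~2 it allows each free vertex to appear in up to \emph{two} wings rather than one; this is what upgrades the greedy guarantee from $|W|\ge\tfrac12|W_M|$ to $|W|\ge\tfrac{2}{3}|W_M|$, which in turn pushes the count of candidate augmenting paths from $3|M^*|-5|M_0|$ to $\tfrac{10}{3}|M^*|-\tfrac{16}{3}|M_0|$. Second, after pass~2 the algorithm runs an \emph{offline maximum matching} on an auxiliary graph over the free vertices to extract a first family $\cP_1$ of augmenting paths; pass~3 then greedily collects a second family $\cP_2$ disjoint from $\cP_1$. The maximality of that auxiliary matching is reused inside the pass-3 charging: it guarantees that any $M_0$-edge surviving into pass~3 has at most one usable wing, which is exactly what makes the factor-$6$ charging (hence $|\cP_2|\ge\tfrac16|\cP''|$) go through. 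Running the same analysis with your degree-one rule and no auxiliary matching gives only $\tfrac35|M^*|$---the Kale--Tirodkar bound---so without these two ideas (or a genuinely different mechanism of comparable strength) your plan does not reach $\tfrac{11}{18}$.
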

\begin{restatable}{theorem}{ThmImprovementPathsGeneral} \label{thm:improvement_paths_general}
There exists a $3$-pass ($\nicefrac{1}{2} + \nicefrac{1}{14.4}$)-approximation semi-streaming algorithm for finding a maximum size matching in a general graph.
\end{restatable}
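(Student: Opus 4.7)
The plan is to adapt the three-pass algorithmic framework used in Theorem~\ref{thm:improvement_paths_triangle_free} for triangle-free graphs to the general setting, where the presence of triangles accounts for the weaker approximation ratio. As this is an {\MMF} algorithm, the first pass greedily constructs a maximal matching $M$, which has size at least $|M^*|/2$, where $M^*$ is a fixed optimum matching. The goal of the remaining two passes is to discover vertex-disjoint augmenting paths (of length $3$ and possibly length $5$) that each convert a single $M$-edge into two output-matching edges.

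In the second pass, for each matched edge $(u,v) \in M$, I would try to identify free neighbors of $u$ and $v$, and also record limited structural information that will let the third pass complete length-$5$ augmenting paths when length-$3$ augmentations are unavailable. A natural implementation stores per $M$-edge a small bounded number of free-neighbor candidates together with a few pointers recording how these candidates interact with other $M$-edges, fitting within the semi-streaming memory budget. In the third pass I would sweep the stream once more to certify augmenting paths by checking the required ``middle'' edges that connect candidates anchored at different $M$-edges, then extract a maximum vertex-disjoint collection of augmenting paths and flip them to obtain the output matching.

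The main obstacle is the analysis: bounding the number of edges of $M^*$ that fail to participate in an augmentation. I would partition the edges of $M^*$ according to how they interact with $M$---which $M$-edges they touch at each endpoint, and whether they lie inside a triangle sharing a vertex with an $M$-edge. In the triangle-free case, clean structural facts (e.g., two distinct $M^*$-edges sharing endpoints with the same $M$-edge immediately yield a length-$3$ augmentation) make the accounting straightforward. For general graphs, triangles containing $M$-edges can cause $M^*$-edges to be ``absorbed'' into the triangle rather than to induce augmentations, and the candidates collected in pass $2$ can likewise be trapped if the only free neighbor of a matched vertex closes a triangle with its mate. I would introduce non-negative potential functions tracking (i) the number of $M$-edges actually augmented by each of the auxiliary processes of passes $2$ and $3$, (ii) the $M^*$-edges lost inside each type of triangle configuration, and (iii) the candidates lost to such configurations, and then package these bounds as a linear program whose optimum yields the $1/14.4$ factor. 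The hardest part will be bounding the interaction between different triangle configurations tightly enough to keep the worst case at $1/14.4$ rather than degrading further, since a naive treatment of triangles would reproduce (at best) the Kale--Tirodkar ratio $1/2+81/1600$ that we aim to improve upon.
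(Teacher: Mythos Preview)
Your proposal sketches the right high-level framework---maximal matching $M_0$ in pass one, wings in pass two, augmenting paths completed in pass three---but it misses the specific algorithmic device that drives the improvement to $\nicefrac{1}{2}+\nicefrac{1}{14.4}$, and the analysis you outline is not concrete enough to pin down any particular constant.

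The paper does \emph{not} use length-$5$ augmenting paths or ``middle'' edges connecting two $M_0$-edges; all augmentations remain length~$3$, exactly as in the triangle-free case. The new idea is purely in how wings are collected in pass two: instead of a single wing set $W$, the algorithm greedily builds \emph{two disjoint} wing sets $W_1$ and $W_2$, where $W_2$ is constructed under the same degree constraints as $W_1$ but only from edges rejected by $W_1$. The point is that for a fixed edge $(a,b)\in M_0$ and a fixed wing $(u,a)\in W_M$, at most one of $W_1,W_2$ can contain the edge $(b,u)$ that closes a triangle. Hence the multi-set $\cP'$ of candidate augmenting structures (formed using wings from $W_1$ \emph{and} from $W_2$ separately) essentially doubles in size, while the number of triangles in $\cP'$ is still at most $|M_0|$: two triangles in $\cP'$ cannot share their $M_0$-edge, by a short argument combining the disjointness of $W_1$ and $W_2$ with the maximality of $M_A$. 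The rest of the analysis then mirrors the triangle-free case with scaled constants: $|\cP'|\geq \tfrac{20}{3}|M^*|-\tfrac{32}{3}|M_0|$, then $|\cP''|\geq |\cP'|-12|\cP_1|-|M_0|$, then $|\cP_2|\geq |\cP''|/12$, which combined with $|M_0|\geq |M^*|/2$ gives $|M_0|+|\cP_1|+|\cP_2|\geq \tfrac{5}{9}|M^*|+\tfrac{1}{36}|M_0|\geq (\tfrac{1}{2}+\tfrac{1}{14.4})|M^*|$.

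Your proposal, by contrast, gestures at storing extra structural information and solving an unspecified linear program, but never supplies a mechanism that bounds the loss to triangles. You yourself note that a naive treatment recovers at best the Kale--Tirodkar $\nicefrac{1}{2}+\nicefrac{81}{1600}$; the missing ingredient is precisely the two-wing-set trick above, and without it (or some other concrete device with a worked-out analysis) there is no reason the constants should come out to $\nicefrac{1}{14.4}$.
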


The algorithms used to prove Theorems~\ref{thm:improvement_paths_triangle_free} and~\ref{thm:improvement_paths_general} are strongly based on the algorithms suggested by Kale and Tirodkar~\cite{KaleT17}. For example, the first two passes of the algorithm suggested by Theorem~\ref{thm:improvement_paths_triangle_free} are identical to a two-pass algorithm presented by~\cite{KaleT17}, and the third pass of this algorithm is very similar to the third pass of the three-pass algorithm of~\cite{KaleT17}. Our novelty, however, is in our ability to analyze the algorithm obtained by putting these two components together.
\subsection{Related Work} \label{ssc:related_work}

%For the classic problem, there are polynomial algorithms that achieve a maximum matching, like \cite{edmonds1965maximum} and \cite{HopcroftK73} by the classic concept of finding basic matching and iteratively augment the matching by finding augmenting paths, and like \cite{BalinskiG91} that achieves a maximum matching by build a spanning tree with special properties that contains a maximum matching.

%The streaming mode for Big-Data problems defined by Henzinger et al. in \cite{HenzingerRR98}.
As mentioned in Section~\ref{sec:introduction}, streaming algorithms are not appropriate for the maximum matching problem when the algorithm is required to output an (approximately) maximum matching. However, some non-trivial streaming algorithms are known for this problem when the algorithm is only required to estimate the size of the maximum matching. Kapralov et al.~\cite{DBLP:conf/soda/KapralovKS14} designed a poly-log approximation streaming algorithm for this problem under the assumption that the edges in the input stream are ordered in a uniformly random order. A different line of work~\cite{DBLP:conf/esa/CormodeJMM17,DBLP:journals/talg/EsfandiariHLMO18,DBLP:conf/approx/McGregorV16} considered graphs of bounded arboricity $\alpha$, comulating with the work of McGregor and Vorotnikova~\cite{0001V18}, who designed $(\alpha + 2)(1 + \varepsilon)$-approximation streaming algorithm for this problem requiring only $O(\varepsilon^{-2}\log n)$ space.

%However, for low arboricity graphs, McGregor and Vorotnikova in \cite{0001V18} showed streaming algorithm that find maximal matching with approximation ratio that depends on the graph arboricity $\alpha$ and chosen constant $\epsilon$ - $(\alpha + 2)(1 + \epsilon)$-approximation ratio with $O(\epsilon^{-2}\log n)$ space.

% \eps is usually used for an arbitrarily small value. The cited paper also used \eps_0, probably for the same reason. Sharf: I understand, but we use \eps for another article, and it's not coherent in my opinion.
Recall that, to date, the best single-pass semi-streaming algorithm for the maximum matching problem is still the natural greedy algorithm, which guarantees $\nicefrac{1}{2}$-approximation. Chitnis et al.~\cite{ChitnisCEHMMV16} presented an exact single-pass algorithm for this problem. However, this algorithm requires $\tilde{O}(k^2)$ memory, where $k$ is an upper bound on the size of the maximum matching (which the algorithm needs to know upfront), and thus, this algorithm is a semi-streaming algorithm only when $k = \tilde{O}(\sqrt{n})$. Given the difficultly to improve over the guarantee of the greedy algorithm using single-pass semi-streaming algorithms, people started to considered relaxed versions of the maximum matching problem. One standard relaxation is to allow the algorithm to make multiple passes over the input stream. Section~\ref{sec:introduction} surveys algorithms of this kind that use two or three passes. Another line of work considers algorithms that assume a constant (but possibly large) number of passes. The first result of this kind was presented by Feigenbaum et al.~\cite{FeigenbaumKMSZ04} (in the same paper that also introduced the notion of semi-streaming algorithms), and guaranteed $(2/3 - \eps)$-approximation using $O(\eps^{-1} \log \eps^{-1})$ passes for bipartite graphs. Later~\cite{DBLP:conf/approx/McGregor05} showed how to obtain $(1 - \eps)$-approximation for general graphs using $(\eps^{-1})^{O(\eps^{-1})}$ passes, and the number of passes necessary to obtain this guarantee was improved by many further works (see, \eg,~\cite{DBLP:journals/topc/AhnG18,DBLP:conf/sosa/AssadiLT21,DBLP:journals/corr/abs-2106-04179}). Another standard relaxation for the maximum matching problem is to assume that the edges of the input stream appear in a uniformly random order. The state-of-the-art for this relaxation is a $(2/3+\eps_0)$-approximation single-pass semi-streaming algorithm, where $\eps_0 > 0$ is some absolute constant~\cite{DBLP:conf/icalp/AssadiB21} (see also the references therein for previous works on this relaxation).

%To break this barrier and develop more efficient algorithms, it is common to assume one of these relaxations: random order of edges (instead of classic arbitrary order) (like \cite{0001HMRR20, abs-2102-07011, Konrad18}), or the ability of multi-pass in the model (like \cite{KonradMM12, KaleT17, Konrad18, assadi2020semi}).
%By assuming random order of edges, Assadi and Behnezhad in \cite{abs-2102-07011} achieved $(2/3+\epsilon)$-approximation ratio, and this is the state of the art.
%In the case of multi-pass model, there are two branches -  maximal matching in two and three passes, and in a few passes.
%By two and three passes, the state of the art for general graphs shown by Kale and Tirodkar in \cite{KaleT17} - $(1/2 + 1/32)$-approximation ratio in two passes, and approximately $0.5506$-approximation ratio in three passes.
%By few passes, the first who assume this relaxation are Feigenbaum et al. \cite{FeigenbaumKMSZ04} for bipartite graphs with $(2/3-\epsilon)$-approximation ratio, and recently Assadi et al. in \cite{assadi2020semi} achieved $(1-\epsilon)$-approximation ratio for general graphs.

%Except from general graphs, there are algorithms for more restricted graphs, like bipartite graphs. Konard in \cite{Konrad18}, for example, achieved $0.5857$-approximation ratio in two passes to bipartite graphs.

The related maximum weight matching problem was also studied heavily in the context of the data stream model. Here, it is not immediately clear that one can obtain a constant approximation ratio using a single-pass semi-streaming algorithm. However, Feigenbaum et al.~\cite{FeigenbaumKMSZ04} presented the first such algorithm guaranteeing $\nicefrac{1}{6}$-approximation, and this ratio was improved in series of works~\cite{CrouchS14,DBLP:journals/siamdm/EpsteinLMS11,DBLP:conf/approx/McGregor05,DBLP:journals/algorithmica/Zelke12}. The current state-of-the-art for the problem is $(\nicefrac{1}{2} - \eps)$-approximation due to Paz and Schwartzman~\cite{PazS17}. Since this approximation ratio is essentially identical to the state-of-the-art for the (unweighted) maximum matching problem, any further progress on the maximum weight matching problem will imply an improvement over the guarantee of the greedy algorithm for the (unweighted) maximum matching problem.

%Another relevant fields of research is to barrier the achievable approximation ratio under the space bound, and to lower bound the space we need to produce maximum matching.
%For bipartite graphs in single-pass, Kapralov in \cite{abs-2103-11669} bounded the best achievable approximation ratio in semi-streaming model  by $1/(1+\ln 2) \approx 0.59$.
%This result improve the previous result that Kapralov achieved in \cite{DBLP:conf/soda/Kapralov13} - bound of $1 - 1/ e$.
%For bipartite graphs in two-pass, Konrad and Naidu in \cite{abs-2107-07841} show that any algorithm who run in the first pass the greedy algorithm for maximal matching (standard technique at maximal matching in semi-streaming model in two and three passes), can achieve at most $2/3$-approximation ratio, unless it use memory of size $n^{\Omega(1 + \frac{1}{\log \log n})}$.

%In 2004, Feigenbaum et al. in \cite{FeigenbaumKMSZ04} defined the semi-streaming model to graphs problems. This work started a significant research branch of developing semi-streaming algorithms for graphs, include maximal matching for high density graphs.
%There are other methods to deal with Big-Data problems, like map-reduce model, distributed algorithms, and sublinear algorithms. Maximal matching considered in these methods, too (like \cite{DBLP:conf/spaa/LattanziMSV11, Bar-YehudaCGS17, abs-2006-07628}).

\section{Preliminaries}

% Notation is a group noun.
% Cardinality and size are synonyms. We use ``size'' throughout the paper, so there is no reason to use cardinality here instead.
In this section we present the problem that we study more formally, and also introduce the notation used throughout the rest of the paper. We are interested in semi-streaming algorithms for the problem of finding a maximum size matching in a graph $G = (V, E)$ of $n$ vertices. A semi-streaming algorithm for this problem is an algorithm with a space complexity of $O(n \log^c n)$ (for some constant $c \geq 0$) that initially has no knowledge about the edges of $E$. Instead, the edges of $E$ appear sequentially in an ``input stream'', and the algorithm may make one or more passes over this input stream. In each pass the algorithm sees the edges one by one, and may do arbitrary calculations after viewing each edge. It is important to note that the space complexity allowed for the algorithm does not suffice for storing all the edges of the graph (unless the graph is very sparse), and this is the reason that the algorithm might benefit from doing multiple passes over the input stream. It is standard to assume that the vertices of $V$ are known upfront, and that each vertex of $V$ can be stored using $O(\log n)$ bits (which implies that every edge of $E$ can also be stored using this asymptotic number of bits).

Throughout the paper, we consider only unweighted graphs and matchings. We also denote by $M^*$ an arbitrary maximum matching of $G$ (\ie, an arbitrary optimal solution for our problem). Notation-wise, we treat $M^*$ (and any other matching considered in the paper) as a set of the edges included in it. Similarly, when considering a connected component $C$ of a graph, we treat it as a set of the vertices in it, which in particular, implies that $|C|$ is the number of such vertices. 

Given a set of edges $S$ or a path $P$ in a graph, we denote by $V(S)$ and $V(P)$ the set of vertices intersecting any edge of $S$ or $P$, respectively. Similarly, the set of edges included in the path $P$ is denoted by $E(P)$. Often we need to consider collections of paths (or triangles) in a given graph. For clarity, such collections are always denoted using calligraphic letters, and we extend the above notation to such collections. In other words, if $\cP$ is a collection of paths, then $V(\cP)$ and $E(\cP)$ is the set of vertices and edges, respectively, that are included in these paths. Finally, given a set $S$ of edges and a vertex $v$, we use $\deg_S(v)$ to denote the degree of the vertex $v$ in the subgraph $(V, S)$.
\section{Two-Pass Non-{\MMF} Algorithm} \label{sec:two_passes}

In this section we prove Theorem~\ref{thm:two_pass}, which we repeat below for convenience.
\ThmTwoPass*
The algorithm whose existence is guaranteed by Theorem~\ref{thm:two_pass} appears as Algorithm~\ref{alg:triangles}. In its first pass, this algorithm greedily grows a set $P$ of edges that form either triangles or partial triangles (\ie, isolated edges or paths of length $2$). For simplicity, we refer below to the connected components of $(V, P)$ that are not isolated vertices as partial triangles although, technically, they can also be full triangles. In the second pass of Algorithm~\ref{alg:triangles}, the algorithm tries to convert the partial triangles of $P$ into more involved structures in one of two ways. To understand these ways, we need to define some terms. First, we designate some of the vertices of every partial triangle as ``connection vertices''. Specifically, all the vertices of a triangle are considered connection vertices; in a path of length $2$ only the two end points are considered to be connection vertices; and finally, in an isolated edge there are no connection vertices. We refer to a partial triangle that was not converted yet into a more involved structure as a ``\naive'' partial triangle. The first way in which Algorithm~\ref{alg:triangles} tries to convert the partial triangles of $P$ into more involved structures is by greedily adding edges that connect a connection vertex of a {\naive} partial triangle with an isolated vertex. The set $A_1$ in the algorithm includes the edges that were added in this way. In parallel, the algorithm also tries a second way to convert the partial triangles of $P$ into more involved structures, which is to greedily add edges that connect a connection vertex of a {\naive} partial triangle either to a connection vertex of another {\naive} partial triangle or to an isolated vertex. The set $A_2$ in the algorithm includes the edges that were added in this way. Upon termination, Algorithm~\ref{alg:triangles} outputs a maximum matching in the set of all the edges that it kept. We recall that given a connected component $C$ of a graph, the notation $|C|$ represents the number of vertices in $C$.

\begin{algorithm}[th]
\caption{\textsc{Maximum Matching via Greedy Triangles - Two Passes}} \label{alg:triangles}
\tcp{First Pass}
Let $P \gets \varnothing$.\\
\For{every edge $e$ that arrives}
{
	\If{every connected component of the graph $(V, P \cup \{e\})$ is either a path of length at most $2$ or a triangle (cycle of size $3$)}
	{
		Add $e$ to $P$.
	}
}

\BlankLine

\tcp{Second Pass}
Let $A_1 \gets \varnothing$ and $A_2 \gets \varnothing$.\\
\For{every edge $(u, v) \not \in P$ that arrives}
{
	Let $C_u$ and $C_v$ be the connected components of $u$ and $v$, respectively, in $(V, P)$. We assume without loss of generality that $|C_u| > 1$, otherwise we swap the roles of $u$ and $v$. \tcp{Note that we cannot have $|C_u| = |C_v| = 1$ because the edge $(u, v)$ was not added to $P$ in the first pass.}
	\If{no edge of $A_1$ intersects $C_u$ and $C_v$, $|C_v| = 1$ and $u$ is a connection vertex of $C_u$}
	{
		Add the edge $(u, v)$ to $A_1$.
	}
	\If{no edge of $A_2$ intersects $C_u$ and $C_v$, $|C_v| = 1$ and $u$ is a connection vertex of $C_u$}
	{
		Add the edge $(u, v)$ to $A_2$.
	}
	\ElseIf{no edge of $A_2$ intersects $C_u$ and $C_v$, and $u$ and $v$ are connection vertices of $C_u$ and $C_v$, respectively}
	{
		Add the edge $(u, v)$ to $A_2$.
	}
}

\BlankLine

\Return{a maximum matching in the graph $(V, P \cup A_1 \cup A_2)$}.
\end{algorithm}

We begin the analysis of Algorithm~\ref{alg:triangles} by showing that it is indeed a semi-streaming algorithm.

\begin{observation}\label{obs:streaming}
Algorithm~\ref{alg:triangles} is a semi-streaming algorithm.
\end{observation}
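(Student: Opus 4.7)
My plan is to bound the total number of edges the algorithm ever stores and to observe that each edge costs only $O(\log n)$ bits, so that the whole data structure fits in $O(n \log n)$ bits. The algorithm keeps exactly three edge sets, $P$, $A_1$, and $A_2$; all other quantities it needs (connection-vertex status, the component containing a given vertex of $(V,P)$, etc.) can be recomputed on the fly from $P$ using at most $O(n \log n)$ auxiliary bits. So the whole task reduces to proving $|P|, |A_1|, |A_2| = O(n)$.

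For $P$, I would observe that by the acceptance rule in the first pass, every non-isolated connected component of $(V,P)$ is either an isolated edge, a path of length $2$, or a triangle. Each such component $C$ satisfies $|E(C)| \leq |C|$, and since the components are vertex-disjoint, summing yields $|P| \leq \sum_C |C| \leq n$.

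For $A_1$, each edge $(u,v)$ added to $A_1$ has $v$ in a singleton component $C_v$ of $(V,P)$. The guard that ``no edge of $A_1$ intersects $C_u$ and $C_v$'' ensures that the isolated vertex $v$ cannot have been touched by any earlier $A_1$-edge, so distinct edges of $A_1$ consume distinct isolated vertices; hence $|A_1| \leq n$. For $A_2$ the argument is analogous, with a small case split: an $A_2$-edge either consumes one singleton component and one partial triangle, or it consumes two partial triangles, and in both situations the same guard makes the consumed components distinct across the $A_2$-edges already selected. Since there are at most $n$ isolated vertices and at most $n/2$ partial triangles in $(V,P)$, we obtain $|A_2| = O(n)$.

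Putting these bounds together, the total number of edges stored is $O(n)$, and each edge takes $O(\log n)$ bits, so the space complexity is $O(n \log n)$, which satisfies the semi-streaming requirement with $c = 1$. There is no real obstacle here beyond careful accounting; the only subtlety is reading the condition ``no edge of $A_i$ intersects $C_u$ and $C_v$'' as meaning ``neither $C_u$ nor $C_v$ has already been touched by an $A_i$-edge'', which is what provides the uniqueness needed in the $A_1$ and $A_2$ bounds.
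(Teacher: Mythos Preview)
Your proposal is correct and follows essentially the same approach as the paper: bound $|P|$, $|A_1|$, and $|A_2|$ each by $O(n)$ using the structural guards in the algorithm, and conclude that the total storage is $O(n\log n)$. The only minor difference is that the paper obtains the slightly tighter bounds $|A_1|,|A_2|\le n/2$ via a single unified argument---each edge of $A_i$ intersects two connected components of $(V,P)$ and no component is intersected twice---whereas you count only the singleton side for $A_1$ and do a case split for $A_2$; both routes yield the needed $O(n)$ bound.
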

\begin{proof}
Since every connected component of the graph $(V, P)$ is of size at most $3$, the set $P$ contains at most $n$ edges. Furthermore, each connected component of $(V, P)$ intersects at most a single edge of the set $A_1$ and at most a single edge of the set $A_2$, and therefore, each one of these sets can include at most $n/2$ edges. Hence, in total, Algorithm~\ref{alg:triangles} keeps only $O(n)$ edges.
\end{proof}

In the rest of this section we analyze the approximation ratio of Algorithm~\ref{alg:triangles}.
Recall that we use $M^*$ to denote some maximum matching of $G$. Our first objective in the analysis of the approximation ratio of Algorithm~\ref{alg:triangles} is to lower bound the number of edges of $M^*$ that can potentially be added either to $A_1$ or to $A_2$. Towards this goal, we define a charging scheme $\pi$. Under the charging scheme $\pi$, every edge $(u, v) \in M^*$ charges the connected components of $u$ and $v$ in $(V, P)$. Each one of these connected components is charged one unit by $(u, v)$, unless it is an isolated edge or an isolated vertex, in which case it is charged only half a unit or nothing by $(u, v)$, respectively. We note that when $u$ and $v$ belong to the same connected component of $(V, P)$, then this connected component is charged twice by $(u, v)$.\footnote{Intuitively, the charge assigned to the connected components of $u$ and $v$ is proportional to the ``blame'' that can be assigned to them if $(u, v)$ ends up to be outside $P$. For example, an isolated edge could not alone prevent $(u, v)$ from being added to $P$, but two such edges (one intersecting $u$ and the other intersecting $v$) could, together, prevent $(u, v)$ from being added to $P$. Therefore, we assign a charge of $\nicefrac{1}{2}$ to isolated edges. Observation~\ref{obs:lower_bound_charge} is based on this intuition.}

The following observation provides an upper bound on the total charged by all the edges of $M^*$ together. Let $\scomp$ be the number of isolated edges in $P$, $\dcomp$ be the number of connected components in $(V, P)$ that are paths of length $2$ and $\tcomp$ be the number of triangles in $P$.
\begin{observation} \label{obs:upper_bound_charge}
The total charge according to $\pi$ is at most $\scomp + 3\dcomp + 3\tcomp$.
\end{observation}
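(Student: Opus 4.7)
The plan is to rewrite the charging scheme on a per-endpoint basis rather than on a per-edge basis, and then swap the order of summation so that each connected component is treated individually. More precisely, I would define a per-endpoint charge function $\chi$ on components: $\chi(C) = 0$ if $C$ is an isolated vertex, $\chi(C) = \nicefrac{1}{2}$ if $C$ is an isolated edge, and $\chi(C) = 1$ otherwise (\ie, if $C$ is a path of length $2$ or a triangle). The scheme $\pi$ can then be re-expressed by saying that every edge $(u,v) \in M^*$ contributes exactly $\chi(C_u) + \chi(C_v)$ to the total charge. The stipulation that a single component is charged \emph{twice} when $u$ and $v$ belong to it is automatically captured by this per-endpoint formulation.

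Next, I would exchange the order of summation. The total charge equals $\sum_{(u,v) \in M^*} [\chi(C_u)+\chi(C_v)] = \sum_{w \in V(M^*)} \chi(C_w)$, where the second equality uses the fact that $M^*$ is a matching, so each vertex contributes as an endpoint to at most one edge of $M^*$. Grouping the resulting sum by the connected component $C$ containing $w$ yields
\[
\text{total charge} \;=\; \sum_{C} \chi(C) \cdot |C \cap V(M^*)| \;\le\; \sum_{C} \chi(C) \cdot |C|,
\]
where the inequality just uses $|C \cap V(M^*)| \le |C|$.

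Finally, I would evaluate $\chi(C) \cdot |C|$ for each possible type of component: isolated vertices contribute $0$; each isolated edge contributes at most $2 \cdot \nicefrac{1}{2} = 1$; each length-$2$ path contributes at most $3 \cdot 1 = 3$; and each triangle contributes at most $3 \cdot 1 = 3$. Summing over all components and using the definitions of \scomp, \dcomp, and \tcomp yields the upper bound $\scomp + 3\dcomp + 3\tcomp$, as desired.

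There is no serious obstacle here; the only point that needs care is the re-interpretation of the charging scheme in the first step, because the statement of $\pi$ assigns charge to components rather than to endpoints. Making sure the ``charged twice'' clause exactly matches the per-endpoint formulation is the one place where a careless reading could cause an off-by-a-factor-of-two error, so I would spell that equivalence out explicitly before performing the sum exchange.
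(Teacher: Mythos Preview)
Your proposal is correct and follows essentially the same approach as the paper: both arguments bound the charge received by each connected component by the per-endpoint charge times the number of vertices in the component, using that $M^*$ is a matching so each vertex contributes to at most one edge. The paper states this directly per component type, while you cast it as a sum exchange; the content is identical.
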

\begin{proof}
Every positive amount charged by $\pi$ is charged to some connected component of $(V, P)$ which is not an isolated vertex. Therefore, to prove the observation we only need to show that every isolated edge of $(V, P)$ is charged at most one unit, and every connected component of $(V, P)$ that is either a path of length $2$ or a triangle is charged at most $3$ units. Below we are argue that this is indeed the case.

Each connected component $C$ of $(V, P)$ can be charged at most once for every one of its vertices since the fact that $M^*$ is a matching implies that every vertex of $C$ can appear in at most a single edge of $M^*$. For isolated edges of $(V, P)$, this implies that they can be charged at most twice, and therefore, they are charged at most one unit because they are charged half a unit in each charge. Similarly, connected components of $(V, P)$ that are either paths of length $2$ or triangles contain $3$ vertices, and therefore, can be charged at most three times. Since every one of these charges is of a single unit, the total charge to each connected component of these kinds is at most $3$.
\end{proof}

To complement the last observation, let us now describe a simple lower bound on the total charging done by all the edges of $M^*$ according to $\pi$.
Let $\dnpotential$ be the number of edges of $M^*$ that connect a connection vertex of a connected component of $(V, P)$ to an isolated vertex of $(V, P)$, $\ddpotential$ be the number of edges of $M^*$ that connect connection vertices of two different connected components of $(V, P)$, $\sspotential$ be the number of edges of $M^*$ whose two end points belong to (not necessary distinct) isolated edges of $(V, P)$, $\dspotential$ be the number of edges of $M^*$ that connect a vertex of an isolated edge of $(V, P)$ with a connection vertex of some (other) connected component of $(V, P)$ and $\dmpotential$ be the number of edges that either intersect the middle vertex of a length $2$ path connected component of $(V, P)$ or are included within a triangle connected component of $(V, P)$.
\begin{observation} \label{obs:lower_bound_charge}
The total charge of all the edges of $M^*$ according to the charging scheme $\pi$ is at least $\dnpotential + 2\ddpotential + \sspotential + 1.5\dspotential + \dmpotential$.
\end{observation}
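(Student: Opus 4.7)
The plan is to establish the inequality by analyzing each edge of $M^*$ separately: I will show that an edge falling into any of the five categories counted by $\dnpotential$, $\ddpotential$, $\sspotential$, $\dspotential$, $\dmpotential$ contributes at least the advertised coefficient to the total charge, and then sum these contributions. Edges of $M^*$ that lie outside all five categories contribute non-negatively and can therefore be ignored for a lower bound.

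The first step is to verify that the five categories are pairwise disjoint, so that no edge of $M^*$ is counted more than once on the right-hand side. This is immediate from the type of component to which each endpoint belongs: a connection vertex lies only in a length-$2$ path (as an endpoint) or in a triangle, while isolated vertices, middle vertices of length-$2$ paths, and vertices of isolated edges are all non-connection. Consequently, $\ddpotential$ (two connection vertices in different components) cannot overlap $\dnpotential$, $\sspotential$, $\dspotential$, each of which requires an endpoint in an isolated edge or an isolated vertex; and $\dmpotential$ is disjoint from all others because either both its endpoints lie inside a single triangle, or one endpoint is a middle vertex, and neither of these situations meets the endpoint-type requirements of the remaining categories.

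Next, I would read the charge directly off the definition of $\pi$ for a representative edge in each category. An edge contributing to $\dnpotential$ charges $1$ unit to the length-$3$ component of its connection-vertex endpoint and $0$ to the isolated vertex, so it contributes $1$ in total. An edge in $\ddpotential$ charges $1$ unit to each of two distinct length-$3$ components, contributing $2$. An edge in $\sspotential$ charges $\nicefrac{1}{2}$ unit per endpoint in an isolated edge: if the two isolated edges coincide the single component is charged twice for a total of $1$, and if they differ each is charged once for the same total $1$. An edge in $\dspotential$ charges $\nicefrac{1}{2}$ to the isolated edge plus $1$ to the other (length-$3$) component, contributing $\nicefrac{3}{2}$. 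Finally, an edge in $\dmpotential$ contributes at least $1$: if it lies inside a triangle, the triangle is charged twice for a total of $2$; and if it intersects the middle vertex of a length-$2$ path, the path component is charged $1$ for that endpoint (any additional charge from the other endpoint only improves the bound).

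Summing these per-edge contributions over all of $M^*$ yields a total charge of at least $\dnpotential + 2\ddpotential + \sspotential + 1.5\dspotential + \dmpotential$, which is the desired inequality. The argument is essentially mechanical; the only real obstacle is keeping the bookkeeping straight across the five disjoint cases, which is why I handle disjointness carefully before computing the contributions.
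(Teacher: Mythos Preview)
Your proposal is correct and follows essentially the same approach as the paper's own proof: establish that the five categories are pairwise disjoint, compute the charge contributed by a representative edge in each category, and sum. The paper's proof is terser (it handles one case explicitly and then writes ``similar logic shows\ldots'' for the rest), whereas you spell out each case and the disjointness argument in full, but the underlying argument is the same.
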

\begin{proof}
Since the edges of $M^*$ counted by $\dnpotential$ intersect a connection vertex, they must intersect a connected component of $(V, P)$ which is not an isolated vertex or an isolated edge, and therefore, they charge this connected component one unit. Hence, the total charge by all the edges counted by $\dnpotential$ is at least $\dnpotential$. Similar logic shows that the total charge by all the edges counted by $\ddpotential$, $\sspotential$, $\dspotential$ and $\dmpotential$ are at least $2\ddpotential$, $\sspotential$ , $1.5\dspotential$ and $\dmpotential$, respectively.
%
%Bounding the total charge due to the edges of $M^*$ counted by $\dspotential$ is a bit more involved. Every such edge has a single end point included in an isolated edge of $(V, P)$. The other end point of these edges must belong to a connected component that is either a triangle or a path of length $2$, because otherwise we can get a contradiction. Specifically, if $uv$ is an edge of $M^*$ such that $u$ belongs to an isolated edge of $(V, P)$ and $v$ is an isolated vertex of $(V, P)$ then the edge $uv$ should have been added by Algorithm~\ref{alg:triangles} to $P$ upon arrival, contradicting the fact that its end point $v$ ended up as an isolated vertex of $(V, P)$. Hence, the toal charge by all the edges of $M^*$ counted by $\dspotential$ is $1.5\dspotential$.
%
The observation now follows since the edges of $M^*$ counted by $\dnpotential$, $\ddpotential$, $\sspotential$, $\dmpotential$ and $\dspotential$ are distinct.
\end{proof}

Combining Observations~\ref{obs:upper_bound_charge} and~\ref{obs:lower_bound_charge}, we get the following inequality.
\begin{align} \label{eq:basic_inequality}
	\dnpotential \mspace{-50mu}&{}\mspace{50mu}+ 2\ddpotential + \sspotential \\\nonumber&{}+ 1.5\dspotential + \dmpotential
	\leq
	\scomp + 3\dcomp + 3\tcomp
	\enspace.
\end{align}
In its current form, Inequality~\eqref{eq:basic_inequality} is not very useful. We later derive from it a more convenient inequality, but before doing this we need to prove a few other inequalities. Let $\ftriangles$ denote the number of triangle connected components of $(V, P)$ that do not include any edge of $M^*$ within them.
\begin{lemma} \label{lem:auxiliary_inequalities}
The following inequalities hold
\begin{gather}
	\begin{aligned}
	\dnpotential + \ddpotential + \sspotential &\\+ \dmpotential + \dspotential &{}\geq |M^*| \label{eq:partition} \enspace,\end{aligned}\\
	\dcomp + \tcomp - \ftriangles \geq \dmpotential \label{eq:auxilary_1} \enspace,\\
	\dspotential \leq 2\sspotential + \dspotential \leq 2\scomp \label{eq:auxilary_2} \enspace,
\end{gather}
and they imply together
\begin{align*}
	\dnpotential &{}+ \ddpotential + 2\scomp \\&{}+ \dcomp + \tcomp - \ftriangles
	\geq
	|M^*|
	\enspace.
\end{align*}
\end{lemma}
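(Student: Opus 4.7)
My plan is to prove the three inequalities separately via case analysis / double counting, and then combine them by substitution.

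For inequality~\eqref{eq:partition}, I will argue that every edge $(u,v) \in M^*$ falls into at least one of the five categories $\dnpotential$, $\ddpotential$, $\sspotential$, $\dmpotential$, $\dspotential$ by doing a case analysis on the types of the connected components $C_u$ and $C_v$ in $(V,P)$ (each component is an isolated vertex, an isolated edge, a length-$2$ path, or a triangle, with each vertex further classified as a connection vertex or a middle vertex). The key tool is the greedy maximality of $P$: if the hypothetical component obtained by adding $(u,v)$ to $P$ would still consist only of paths of length at most~$2$ and triangles, then $(u,v)$ would have been added to $P$ when it arrived. This rules out, for example, the case where both $u$ and $v$ are isolated vertices, the case where one is isolated and the other lies in an isolated edge, and the case where $u,v$ are the two endpoints of the same length-$2$ path (which would close into an allowed triangle). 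After ruling out these ``impossible'' configurations, every remaining configuration is directly seen to be captured by one of the five categories, which gives~\eqref{eq:partition}.

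For inequality~\eqref{eq:auxilary_1}, I will split $\dmpotential$ according to its definition into edges of $M^*$ hitting the middle vertex of a length-$2$ path and edges of $M^*$ contained inside a triangle. Because $M^*$ is a matching, every middle vertex can be touched by at most one $M^*$-edge, contributing at most $\dcomp$ edges, and every triangle can contain at most one $M^*$-edge (a triangle has only three vertices, so a matching restricted to it has at most one edge), contributing at most $\tcomp - \ftriangles$ since $\ftriangles$ of the triangles have zero such edges by definition. Summing gives~\eqref{eq:auxilary_1}. For inequality~\eqref{eq:auxilary_2}, I will use a double-counting argument on the endpoints of $M^*$-edges that land in isolated-edge components of $(V,P)$: each edge counted by $\sspotential$ contributes exactly two such endpoints (whether its two ends lie in the same isolated edge or in two distinct ones), and each edge counted by $\dspotential$ contributes exactly one. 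Since $M^*$ is a matching, every vertex of $V$ is incident to at most one edge of $M^*$, so the total endpoint count is at most the number of vertices in isolated edges, namely $2\scomp$. This gives $2\sspotential + \dspotential \le 2\scomp$, and the left half $\dspotential \le 2\sspotential + \dspotential$ is immediate.

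To combine the three inequalities, I will start from~\eqref{eq:partition}, substitute the bound $\dmpotential \le \dcomp + \tcomp - \ftriangles$ from~\eqref{eq:auxilary_1}, and then observe that $\sspotential + \dspotential \le 2\sspotential + \dspotential \le 2\scomp$ by~\eqref{eq:auxilary_2}; rearranging yields the target inequality. I expect the main obstacle to be a clean execution of the case analysis for~\eqref{eq:partition}: the number of subcases is large, and for each I need to verify both that the hypothetical addition of $(u,v)$ really would have violated the invariant maintained during the first pass (so the configuration is consistent with $(u,v) \notin P$) and that the configuration is captured by the appropriate one of the five categories. The other two inequalities are essentially bookkeeping once the correct double counts are set up.
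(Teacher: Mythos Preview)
Your proposal is correct and follows essentially the same approach as the paper's proof: both establish \eqref{eq:partition} by showing (via the maximality of $P$ after the first pass) that the configurations not covered by the five categories cannot occur, prove \eqref{eq:auxilary_1} and \eqref{eq:auxilary_2} by the same double-counting arguments you describe, and derive the final inequality by direct substitution. Your case analysis for \eqref{eq:partition} is in fact slightly more careful than the paper's, since you explicitly rule out the case where $u$ and $v$ are the two endpoints of the \emph{same} length-$2$ path component (which would close into an allowed triangle and hence contradict maximality); the paper's proof glosses over this case with the imprecise assertion that every edge contained in a length-$2$ path component must include the middle vertex.
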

\begin{proof}
Since every edge that is included in a connected component of $(V, P)$ which is a path of length $2$ must include the middle vertex of this path, every edge $e \in M^*$ that is not counted by either $\dnpotential$, $\ddpotential$, $\sspotential$, $\dspotential$ or $\dmpotential$ must either connect a vertex of an isolated edge of $(V, P)$ to an isolated vertex or connect two isolated vertices of $(V, P)$. However, such edges cannot exists. Specifically, assume towards a contradiction that $(u, v)$ is an edge of $M^*$ such that $u$ is an isolated vertex of $(V, P)$ and $v$ is either another isolated vertex of $(V, P)$ or belongs to an isolated edge of this graph. Then, the edge $(u, v)$ should have been added by Algorithm~\ref{alg:triangles} to $P$ upon arrival, which contradicts the fact that its end point $u$ ended up as an isolated vertex of $(V, P)$. Hence, every edge $e \in M^*$ is counted by either $\dnpotential$, $\ddpotential$, $\sspotential$, $\dspotential$ or $\dmpotential$, which implies Inequality~\eqref{eq:partition}.

Recall that every edge counted by $\dmpotential$ must either be included in a triangle connected component of $(V, P)$ or intersect the middle vertex of a path of length $2$ connected component of $(V, P)$. Since $M^*$ is a matching, only one edge of $M^*$ can intersect the middle vertex of a given length $2$ path or be included in a given triangle, and therefore, every edge counted by $\dmpotential$ can be associated with a distinct path of length $2$ or triangle component of $(V, P)$ that is not counted by $\ftriangles$, which implies Inequality~\eqref{eq:auxilary_1}. 

Every edge counted by $\sspotential$ touches two end-points of isolated edges of $(V, P)$. Similarly, every edge counted by $\dspotential$ intersects an end-point of an isolated edge of $(V, P)$. Since every end-point of an isolated edge of $(V, P)$ can be touched by at most a single edge of $M^*$ because $M^*$ is a matching, this implies that the number of end points of the isolated edges of $(V, P)$ is at least $2\sspotential + \dspotential$. However, this number is also equal to $2\scomp$, which implies Inequality~\eqref{eq:auxilary_2}.
\end{proof}

The last inequality in the previous lemma provides a lower bound on $\dnpotential + \ddpotential$, and one can view $\dnpotential + \ddpotential$ as a count of edges of $M^*$ that have potential to be added to $A_2$ in Algorithm~\ref{alg:triangles}. The next lemma is the promised derivative of Inequality~\eqref{eq:basic_inequality}, and it provides a lower bound on $\dnpotential$. Observe that $\dnpotential$ is a count of edges of $M^*$ that have the potential to be added to $A_1$.
\begin{lemma} \label{lem:dn_lower_bound}
$2|M^*|
	\leq
	\dnpotential - \ftriangles + 2\scomp + 4\dcomp + 4\tcomp$.
\end{lemma}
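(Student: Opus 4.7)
The plan is to combine the four inequalities that have already been established — \eqref{eq:basic_inequality}, \eqref{eq:partition}, \eqref{eq:auxilary_1}, and \eqref{eq:auxilary_2} — in a linear fashion to obtain the desired bound. The idea is that \eqref{eq:partition} gives an upper bound on $|M^*|$ in terms of the five ``potential'' quantities, and the other three inequalities let us trade in each of those terms (except $\dnpotential$) for component counts with the right coefficients.

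Concretely, I would start by doubling \eqref{eq:partition} to obtain
\[
	2|M^*| \leq 2\dnpotential + 2\ddpotential + 2\sspotential + 2\dmpotential + 2\dspotential.
\]
Next, I would use \eqref{eq:basic_inequality} to eliminate the $2\ddpotential$ term, substituting the upper bound
\[
	2\ddpotential \leq \scomp + 3\dcomp + 3\tcomp - \dnpotential - \sspotential - 1.5\dspotential - \dmpotential.
\]
After combining and simplifying, this leaves
\[
	2|M^*| \leq \dnpotential + \scomp + 3\dcomp + 3\tcomp + \sspotential + 0.5\dspotential + \dmpotential.
\]

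Finally, I would use \eqref{eq:auxilary_2}, which (after dividing by $2$) gives $\sspotential + 0.5\dspotential \leq \scomp$, and \eqref{eq:auxilary_1}, which gives $\dmpotential \leq \dcomp + \tcomp - \ftriangles$, to replace the last three summands in the previous display. This produces exactly
\[
	2|M^*| \leq \dnpotential - \ftriangles + 2\scomp + 4\dcomp + 4\tcomp,
\]
as required.

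I do not really expect an obstacle here: the four inequalities are already in hand, and the derivation is a single careful bookkeeping computation whose coefficients happen to line up. The only thing to be cautious about is that the substitution of \eqref{eq:basic_inequality} for $2\ddpotential$ is what cancels the ``heavy'' coefficients on $\sspotential$, $\dspotential$, and $\dmpotential$ down to the lighter ones that \eqref{eq:auxilary_1} and \eqref{eq:auxilary_2} can afford to absorb; any other natural combination (for instance, just doubling \eqref{eq:basic_inequality} and then invoking \eqref{eq:partition}) would leave a residual term with the wrong sign.
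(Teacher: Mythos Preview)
Your proposal is correct and follows essentially the same route as the paper: both arguments combine twice \eqref{eq:partition} with \eqref{eq:basic_inequality} to reach the intermediate bound $2|M^*| \leq \dnpotential + \scomp + 3\dcomp + 3\tcomp + \sspotential + 0.5\dspotential + \dmpotential$, and then apply \eqref{eq:auxilary_1} and half of \eqref{eq:auxilary_2} to finish. The only difference is cosmetic---you phrase the first step as ``substitute \eqref{eq:basic_inequality} for $2\ddpotential$'' whereas the paper phrases it as ``add twice \eqref{eq:partition} to \eqref{eq:basic_inequality}''---but the underlying linear combination is identical.
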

\begin{proof}
Adding twice Inequality~\eqref{eq:partition} to Inequality~\eqref{eq:basic_inequality}, we get
\begin{align*}
	&
	2|M^*| - \dnpotential - \sspotential - 0.5\dspotential - \dmpotential\\
	\leq{} &
	\scomp + 3\dcomp + 3\tcomp
	\enspace.
\end{align*}
The lemma now follows by adding Inequality~\eqref{eq:auxilary_1} and half of Inequality~\eqref{eq:auxilary_2} to the last inequality.
\end{proof}

So far we have shown lower bounds on the size of the sets of edges that have a potential to be added to $A_1$ or $A_2$ by Algorithm~\ref{alg:triangles}. Our next step is to lower bound the size of the sets $A_1$ and $A_2$ that Algorithm~\ref{alg:triangles} ends up constructing using this potential.

\begin{figure}[ht]
\begin{center}
\hfill\begin{subfigure}{0.4\textwidth}
\centering
\begin{tikzpicture}
	\SetGraphUnit{1.4}
	\tikzset{VertexStyle/.append style={minimum size=5pt, inner sep=5pt}, EdgeStyle/.style={font=\scriptsize,above,sloped,midway}}
	\Vertex[NoLabel]{A}
	\SOEA[NoLabel](A){B}
	\EA[NoLabel](B){C}
	\EA[NoLabel](C){D}
	\NOEA[NoLabel](D){E}
	\Edge[label = $M^*$](A)(B)
	\Edge[style={blue, very thick}](B)(C)
	\Edge[style={blue, very thick}](C)(D)
	\Edge[label = $M^*$](D)(E)
\end{tikzpicture}
\subcaption{Path of length 2} \label{sfg:path_of_length_2}
\end{subfigure}\hfill
\begin{subfigure}{0.4\textwidth}
\centering
\begin{tikzpicture}
	\SetGraphUnit{1.4}
	\tikzset{VertexStyle/.append style={minimum size=5pt, inner sep=5pt}, EdgeStyle/.style={font=\scriptsize,above,sloped,midway}}
	\Vertex[NoLabel]{B}
	\NOEA[NoLabel](B){A}
	\SOEA[NoLabel](A){C}
	\EA[NoLabel](C){D}
	\Edge[style={blue, very thick},label = $M^*$](A)(B)
	\Edge[style={blue, very thick}](B)(C)
	\Edge[style={blue, very thick}](A)(C)
	\Edge[label = $M^*$](C)(D)
\end{tikzpicture}
\subcaption{$M^*$-triangle} \label{sfg:M_star_triangle}
\end{subfigure}\hspace*{\fill}\\
\begin{subfigure}{0.4\textwidth}
\centering
\begin{tikzpicture}
	\SetGraphUnit{1.4}
	\tikzset{VertexStyle/.append style={minimum size=5pt, inner sep=5pt}, EdgeStyle/.style={font=\scriptsize,above,sloped,midway}}
	\Vertex[NoLabel]{B}
	\NOEA[NoLabel](B){A}
	\SOEA[NoLabel](A){C}
	\SOEA[NoLabel](C){D}
	\SOWE[NoLabel](B){E}
	\NO[NoLabel](A){F}
	\Edge[style={blue, very thick}](A)(B)
	\Edge[style={blue, very thick}](B)(C)
	\Edge[style={blue, very thick}](A)(C)
	\Edge[label = $M^*$](C)(D)
	\Edge[label = $M^*$](B)(E)
	\Edge[label = $M^*$](A)(F)
\end{tikzpicture}
\subcaption{Non-$M^*$-triangle} \label{sfg:non_M_star_triangle}
\end{subfigure}
\end{center}
\caption{A graphical study of the maximum number of $M^*$ edges counted by $\dnpotential$ that can intersect connection vertices of various types of partial triangles. Sub-figures~(\subref{sfg:path_of_length_2}) and~(\subref{sfg:M_star_triangle}) show that at most two such edges can intersect the connection vertices of a path of length $2$ and an $M^*$-triangle (\ie, a triangle that includes an edge of $M^*$). Sub-figure~(\subref{sfg:non_M_star_triangle}) shows that the connection vertices of a non-$M^*$-triangle can intersect up to $3$ edges of $M^*$.}
\end{figure}
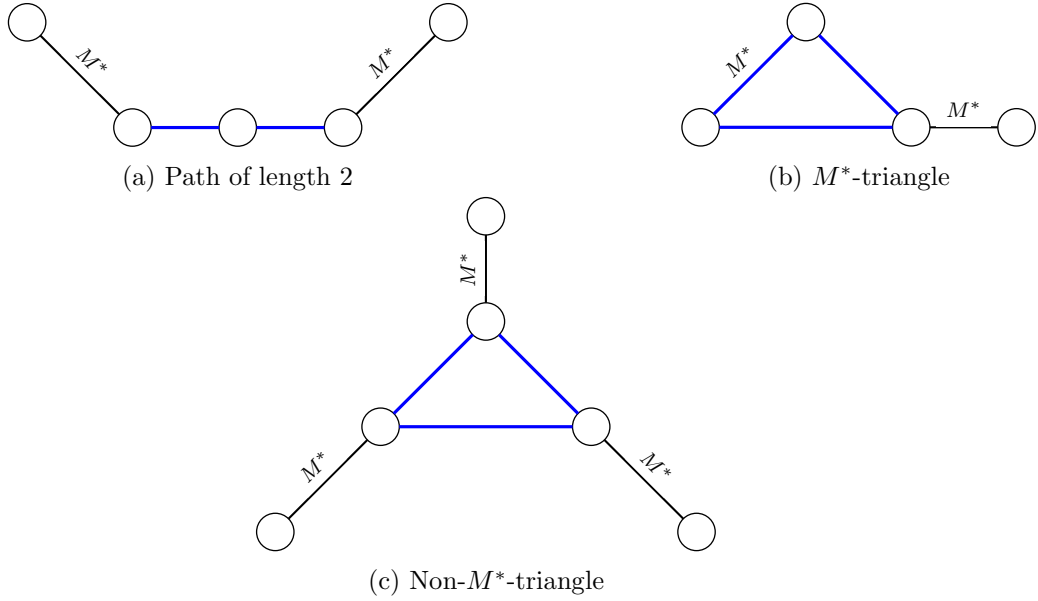

\begin{lemma} \label{lem:dn_greedy}
$3|A_1| \geq \dnpotential - \ftriangles$.
\end{lemma}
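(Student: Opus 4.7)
Plan: The proof is a charging argument from edges of $\dnpotential$ to edges of $A_1$. First I would observe that every edge $e=(u,v)\in \dnpotential$ automatically satisfies, at the moment it arrives in the second pass, the \emph{structural} prerequisites of the test used for populating $A_1$: namely, $u$ is a connection vertex of its component $C_u$ in $(V,P)$ and $|C_v|=1$ because $v$ is isolated. Consequently, $e$ is added to $A_1$ unless an earlier $A_1$-edge already intersects $C_u$ or uses $v$. In other words, every edge of $\dnpotential$ is either itself in $A_1$ or is ``blocked'' by some $A_1$-edge that shares a vertex with $C_u$ or equals $v$ on one side. I assign each $\dnpotential$-edge to a single $A_1$-edge responsible for it (itself if $e\in A_1$, otherwise an arbitrary blocking one).

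Next I would bound, for each $a=(x,y)\in A_1$ with $x$ a connection vertex of a partial triangle $C_x$ and $y$ an isolated vertex, the number $N(a)$ of $\dnpotential$-edges assigned to it. Any such edge must have one endpoint either at a connection vertex of $C_x$ or equal to $y$. Since $M^*$ is a matching, $y$ is hit by at most one $M^*$-edge, contributing at most $1$ to $N(a)$. The number of $M^*$-edges from connection vertices of $C_x$ to isolated vertices of $(V,P)$ is at most: $2$ when $C_x$ is a path of length $2$ (one per connection endpoint); $1$ when $C_x$ is an $M^*$-triangle (the two endpoints of the internal $M^*$-edge are already saturated inside the triangle and therefore cannot host another $M^*$-edge, so only the third connection vertex can contribute); and $3$ when $C_x$ is a non-$M^*$-triangle. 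Hence $N(a)\le 3$ in the first two cases, and $N(a)\le 4$ in the last case.

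Finally I would sum over $A_1$: every $\dnpotential$-edge is assigned to at least one $A_1$-edge, so
\[
\dnpotential \;\le\; \sum_{a\in A_1} N(a) \;\le\; 3|A_1| + \nu,
\]
where $\nu$ is the number of $A_1$-edges whose partial triangle $C_x$ is a non-$M^*$-triangle. Because the algorithm adds at most one $A_1$-edge per partial triangle, $\nu\le\ftriangles$, giving $\dnpotential \le 3|A_1|+\ftriangles$, which is exactly the required inequality.

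The main (minor) subtlety is the tight bound of $1$ in the $M^*$-triangle case; this is what allows the loss on non-$M^*$-triangles to be absorbed cleanly by the $-\ftriangles$ term on the right-hand side. Everything else is a direct consequence of the greedy structure of the second pass and the matching property of $M^*$.
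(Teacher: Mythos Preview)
Your proposal is correct and follows essentially the same charging argument as the paper: both assign every edge counted by $\dnpotential$ to an edge of $A_1$ that ``excludes'' (or, in your terminology, ``blocks'') it, then bound the number of charges per $A_1$-edge by $3$ except when the incident partial triangle is a non-$M^*$-triangle, where the bound is $4$, and finally use that $A_1$ hits each partial triangle at most once to cap the excess by $\ftriangles$. Your case analysis for the $M^*$-triangle is in fact slightly sharper than the paper's (you get $2$ rather than $3$), but since you ultimately use the common bound of $3$, the two proofs are effectively identical.
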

\begin{proof}
We say that an edge $e$ of $M^*$ counted by $\dnpotential$ is excluded by an edge $f \in A_1$ if $e$ and $f$ intersect the same connected component of $(V, P)$. One can observe that every edge $e$ counted by $\dnpotential$ is excluded by some edge of $A_1$ (possibly itself) when Algorithm~\ref{alg:triangles} terminates because otherwise Algorithm~\ref{alg:triangles} would have added $e$ to $A_1$, which would have resulted in $e$ excluding itself. Therefore, we can upper bound $\dnpotential$ by counting the number of edges excluded by the edges of $A_1$.

Let $(u, v)$ be an edge of $A_1$, and assume without loss of generality that $v$ is the end point of this edge which is an isolated vertex of $(V, P)$. This implies that $u$ is a connection vertex of a connected component $C_u$ of $(V, P)$ which is either a path of length $2$ or a triangle. If $C_u$ is a path of length $2$, then the edge $(u, v)$ can exclude only edges counted by $\dnpotential$ that intersect either $v$ or a connection vertex of $C_u$, and there can be only $3$ such edges because $M^*$ is a matching (see Figure~\ref{sfg:path_of_length_2}). Next, consider the case in which $C_u$ is a triangle which is not counted by $\ftriangles$. In this case there can be at most $2$ edges of $OPT$ intersecting $C_u$ (see Figure~\ref{sfg:M_star_triangle}), and therefore, even though $(u, v)$ can exclude any edge of $M^*$ intersecting $C_u$ or $v$, there can be only $3$ such edges. It remains to consider the case in which $C_u$ is a triangle counted by $\ftriangles$. In this case, $(u, v)$ can again exclude every edge of $M^*$ that intersects $C_u$ or $v$, and this time there can be at most $4$ such edges (see Figure~\ref{sfg:non_M_star_triangle}). Combining all the above, we get that the number of edges excluded by all the edges of $A_1$ is at most
\[
	3|A_1| + |\{e \in A_1 \mid \text{$e$ intersects a triangle counted by $\ftriangles$}\}|
	\enspace.
\]
As explained above, this expression is an upper bound on $\dnpotential$. Furthermore, since $A_1$ includes at most a single edge intersecting every connected component of $(V, P)$, the second term in this expression is upper bounded by $\ftriangles$. Therefore, we get
\[
	\dnpotential \leq 3|A_1| + \ftriangles
	\enspace.
\]
The lemma now follows by rearranging this inequality.
\end{proof}

The next corollary now follows by combining Lemmata~\ref{lem:dn_lower_bound} and~\ref{lem:dn_greedy}.
\begin{corollary} \label{cor:A_1_bound}
$
	2|M^*|
	\leq
	3|A_1| + 2\scomp + 4\dcomp + 4\tcomp
$.
\end{corollary}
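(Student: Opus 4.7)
The plan is to obtain the claimed bound by a direct chaining of the two preceding lemmata, with no further combinatorial work required. First, I would start from Lemma~\ref{lem:dn_lower_bound} and rearrange it to isolate the quantity $\dnpotential - \ftriangles$ on one side, yielding the lower bound
\[
	\dnpotential - \ftriangles \geq 2|M^*| - 2\scomp - 4\dcomp - 4\tcomp \enspace.
\]
Then I would invoke Lemma~\ref{lem:dn_greedy}, which supplies the matching upper bound $\dnpotential - \ftriangles \leq 3|A_1|$ on exactly the same quantity. Transitivity of $\leq$ through $\dnpotential - \ftriangles$ immediately gives $3|A_1| \geq 2|M^*| - 2\scomp - 4\dcomp - 4\tcomp$, and moving the structural counts to the right-hand side produces the statement of the corollary.

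Since both ingredient lemmata are already established, there is genuinely no obstacle in this proof beyond the algebraic substitution; the real technical content lives upstream. In particular, Lemma~\ref{lem:dn_greedy} is where the greedy selection rule for $A_1$ is connected to the number of $M^*$-edges with potential to be absorbed by $A_1$, via a careful case analysis of what a single edge of $A_1$ can ``exclude'' depending on whether its partial-triangle endpoint lies in a length-$2$ path, an $M^*$-triangle, or a non-$M^*$-triangle. Lemma~\ref{lem:dn_lower_bound}, in turn, is where the charging inequality~\eqref{eq:basic_inequality} is balanced against the auxiliary inequalities of Lemma~\ref{lem:auxiliary_inequalities} to convert the mixed potential counts into a clean bound involving only $\dnpotential - \ftriangles$ and the structural counts $\scomp$, $\dcomp$, $\tcomp$. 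The role of the corollary is simply to package these two facts into a form that relates the size of the optimum matching $|M^*|$ directly to the algorithm's $A_1$-output together with the first-pass structural counts, so that it can be combined later with an analogous bound for $A_2$ to derive the overall approximation guarantee.
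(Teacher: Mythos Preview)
Your proposal is correct and matches the paper's own proof exactly: the corollary is stated as an immediate consequence of combining Lemmata~\ref{lem:dn_lower_bound} and~\ref{lem:dn_greedy}, which is precisely the substitution you describe. Your additional commentary about where the real technical content lives is accurate and consistent with the paper's structure.
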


\begin{lemma} \label{lem:dn_dd_greedy}
$4|A_2| \geq \ddpotential + \dnpotential - \ftriangles$.
\end{lemma}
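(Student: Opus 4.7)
The plan is to mirror the proof of Lemma~\ref{lem:dn_greedy}, but now account simultaneously for the $\dnpotential$- and $\ddpotential$-edges of $M^*$. I would first say that an edge $e \in M^*$ of type $\dnpotential$ or $\ddpotential$ is \emph{excluded} by an edge $f \in A_2$ if $f$ intersects one of the (up to two) connected components of $(V,P)$ that contain an endpoint of $e$. Inspecting the If/ElseIf branches governing $A_2$ in Algorithm~\ref{alg:triangles}, every $\dnpotential$-edge satisfies the condition of the first If at its arrival, and every $\ddpotential$-edge satisfies the condition of the ElseIf (once the first If fails because $|C_v| \neq 1$). Consequently, each such $e$ is added to $A_2$ at its arrival unless a prior edge of $A_2$ already blocks it; in either case $e$ is excluded at termination by at least one edge of $A_2$ (possibly itself). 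This gives
\[
	\dnpotential + \ddpotential \;\leq\; \sum_{f \in A_2} \mathrm{excl}(f),
\]
where $\mathrm{excl}(f)$ denotes the number of $\dnpotential$- and $\ddpotential$-edges of $M^*$ excluded by $f$.

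The core of the argument is an upper bound on $\mathrm{excl}(f)$ for a fixed $f = (u,v) \in A_2$. By the conditions of the If/ElseIf branches, $u$ is a connection vertex of its component $C_u$, and $v$ is either an isolated vertex (Case A) or a connection vertex of its component $C_v$ (Case B). For a component $C$ of $(V,P)$, let $c(C)$ be the maximum possible number of $\dnpotential$- and $\ddpotential$-edges of $M^*$ incident on $C$. Since each vertex of $C$ lies in at most one $M^*$-edge, and since $M^*$-edges touching the middle vertex of a length-$2$ path or lying inside an $M^*$-triangle contribute only to $\dmpotential$, a routine check gives $c(\text{isolated vertex}) \leq 1$, $c(\text{length-}2\text{ path}) \leq 2$, $c(M^*\text{-triangle}) \leq 1$, and $c(\text{non-}M^*\text{-triangle}) \leq 3$. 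Because $\mathrm{excl}(f) \leq c(C_u) + c(C_v)$, enumerating the admissible type combinations for $(C_u, C_v)$ in Cases A and B shows that in every case
\[
	\mathrm{excl}(f) \;\leq\; 4 + t(f),
\]
where $t(f)$ is the number of non-$M^*$-triangle components among $\{C_u, C_v\}$ (the non-$M^*$-triangle is the only type whose $c$-value exceeds the ``budget'' of $2$ per side, and each contributes at most $1$ to the excess over $4$).

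Finally, summing over $f \in A_2$ and using that at most one edge of $A_2$ intersects any given connected component of $(V,P)$, the aggregate correction $\sum_{f \in A_2} t(f)$ is at most $\ftriangles$, so $\sum_{f \in A_2} \mathrm{excl}(f) \leq 4|A_2| + \ftriangles$. Combining with the first display yields $\dnpotential + \ddpotential \leq 4|A_2| + \ftriangles$, which rearranges to the lemma. The main obstacle I anticipate is keeping the case analysis honest: I will need to verify the $4 + t(f)$ bound against roughly eight combinations of component types at $f$'s two endpoints, and handle the mild subtlety that a $\ddpotential$-edge $e$ running between $C_u$ and $C_v$ consumes one capable vertex on each side, so that $c(C_u)+c(C_v)$ already accommodates such an $e$ without double counting.
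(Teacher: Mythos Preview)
Your proposal is correct and follows essentially the same approach as the paper's proof: define exclusion via shared components, argue every $\dnpotential$/$\ddpotential$ edge is excluded by some $A_2$ edge, bound each $\mathrm{excl}(f)$ by $4$ plus the number of non-$M^*$-triangles touched by $f$, and sum using that each component meets at most one edge of $A_2$. The paper merely sketches this by pointing back to the proof of Lemma~\ref{lem:dn_greedy}; your more explicit $c(\cdot)$-value case analysis is a faithful (and slightly tighter in spots, e.g., $c(M^*\text{-triangle})\leq 1$) expansion of the same argument.
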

\begin{proof}
The proof of Lemma~\ref{lem:dn_dd_greedy} is very similar to the proof of Lemma~\ref{lem:dn_greedy}, and therefore, we only sketch it. We first define that an edge $e \in A_2$ excludes an edge $f$ of $M^*$ counted by either $\ddpotential$ or $\dnpotential$ if they both intersect the same connected component of $(V, P)$. Like in the proof of Lemma~\ref{lem:dn_greedy}, it can be argued that $\ddpotential + \dnpotential$ is upper bounded by the total number of edges of $M^*$ excluded by the edges of $A_2$, and on the other hand, every edge $e$ of $A_2$ excludes up to $4 + T(e)$ edges, where $T(e)$ is the number of triangles counted by $\ftriangles$ that intersect $e$. Therefore,
\[
	\ddpotential + \dnpotential
	\leq
	\sum_{e \in A_2} [4 + T(e)]
	\leq
	4|A_2| + \ftriangles
	\enspace,
\]
where the second inequality holds since every connected component of $(V, P)$ intersects only a single edge of $A_2$. The lemma now follows by rearranging the last inequality.
\end{proof}

The next corollary follows by combining Lemma~\ref{lem:dn_dd_greedy} and the final inequality in Lemma~\ref{lem:auxiliary_inequalities}.
\begin{corollary} \label{cor:A_2_bound}
$
	|M^*|
	\leq
	4|A_2| + 2\scomp + \dcomp + \tcomp
$.
\end{corollary}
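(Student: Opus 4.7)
The plan is to obtain Corollary~\ref{cor:A_2_bound} by a direct substitution of Lemma~\ref{lem:dn_dd_greedy} into the final inequality of Lemma~\ref{lem:auxiliary_inequalities}. First, I would rewrite the concluding inequality of Lemma~\ref{lem:auxiliary_inequalities} as
\[
    |M^*| \leq (\dnpotential + \ddpotential - \ftriangles) + 2\scomp + \dcomp + \tcomp,
\]
which just groups the terms that Lemma~\ref{lem:dn_dd_greedy} controls. Then I would invoke Lemma~\ref{lem:dn_dd_greedy}, which gives $\dnpotential + \ddpotential - \ftriangles \leq 4|A_2|$, and substitute this upper bound into the right-hand side of the displayed inequality. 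The result is exactly the claimed bound $|M^*| \leq 4|A_2| + 2\scomp + \dcomp + \tcomp$.

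Since both ingredients have already been proved, the only subtlety to double-check is that the signs line up correctly so that the substitution yields a valid upper bound on $|M^*|$, rather than going in the wrong direction. The combination $\dnpotential + \ddpotential - \ftriangles$ appears with a positive coefficient in the inequality of Lemma~\ref{lem:auxiliary_inequalities}, and Lemma~\ref{lem:dn_dd_greedy} gives an upper bound on that same combination, so replacing it by $4|A_2|$ preserves the direction of the inequality. There is no real obstacle here; this is purely a bookkeeping step bridging the structural potential bound with the greedy guarantee on $|A_2|$.
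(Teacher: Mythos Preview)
Your proposal is correct and matches the paper's own proof exactly: the paper states that the corollary follows by combining Lemma~\ref{lem:dn_dd_greedy} with the final inequality in Lemma~\ref{lem:auxiliary_inequalities}, which is precisely the substitution you describe.
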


Let us now denote $L = \scomp + \dcomp + \tcomp + \max\{|A_1|, |A_2|\}$. We argue below that $L$ is a lower bound on the size of the solution produced by Algorithm~\ref{alg:triangles}. However, before proving this, let us show first that $L$ is large.
\begin{lemma} \label{lem:L_lower_bound}
$L \geq \nicefrac{7}{13}|M^*|$.
\end{lemma}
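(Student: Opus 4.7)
The plan is to combine Corollaries~\ref{cor:A_1_bound} and~\ref{cor:A_2_bound} via an appropriate convex combination and then add $\scomp + \dcomp + \tcomp$ to recover $L$. Rewriting the two corollaries as lower bounds on $|A_1|$ and $|A_2|$, we get
\[
    |A_1| \geq \tfrac{2|M^*| - 2\scomp - 4\dcomp - 4\tcomp}{3}
    \qquad \text{and} \qquad
    |A_2| \geq \tfrac{|M^*| - 2\scomp - \dcomp - \tcomp}{4}
    \enspace.
\]
For any $\lambda \in [0,1]$, since $\lambda |A_1| + (1-\lambda)|A_2| \leq \max\{|A_1|,|A_2|\}$, plugging in the above lower bounds yields a bound of the form $\max\{|A_1|,|A_2|\} \geq \alpha(\lambda)\,|M^*| - \beta_s(\lambda)\,\scomp - \beta_d(\lambda)(\dcomp + \tcomp)$ for explicit coefficients.

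The next step is to choose $\lambda$ so that the coefficient $\beta_d(\lambda) = \tfrac{4\lambda}{3} + \tfrac{1-\lambda}{4}$ equals exactly $1$ (this is the binding constraint, since $\dcomp$ and $\tcomp$ are the terms with the larger combined weight coming from the two corollaries). Solving this gives $\lambda = \nicefrac{9}{13}$, and one then computes $\alpha(\nicefrac{9}{13}) = \tfrac{6}{13} + \tfrac{1}{13} = \tfrac{7}{13}$ and $\beta_s(\nicefrac{9}{13}) = \tfrac{6}{13} + \tfrac{2}{13} = \tfrac{8}{13}$. Thus
\[
    \max\{|A_1|,|A_2|\} \geq \tfrac{7}{13}|M^*| - \tfrac{8}{13}\scomp - \dcomp - \tcomp
    \enspace.
\]

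Finally, adding $\scomp + \dcomp + \tcomp$ to both sides gives
\[
    L = \scomp + \dcomp + \tcomp + \max\{|A_1|,|A_2|\}
    \geq \tfrac{7}{13}|M^*| + \tfrac{5}{13}\scomp
    \geq \tfrac{7}{13}|M^*|
    \enspace,
\]
using $\scomp \geq 0$ in the last step. The only non-routine part is guessing the right weight $\lambda$, but this is forced by the requirement that the coefficient of $\dcomp + \tcomp$ (which is the tightest) not exceed $1$, so there is essentially no obstacle---the ratio $\nicefrac{7}{13}$ in the theorem is exactly what this linear programming style combination produces.
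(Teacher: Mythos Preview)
Your proof is correct and is essentially the same argument as the paper's: the paper substitutes $|A_i|\leq L-\scomp-\dcomp-\tcomp$ into each corollary and then adds three copies of the first resulting inequality to one copy of the second, obtaining $7|M^*|\leq 13L-5\scomp$. Your convex combination with $\lambda=9/13$ is exactly this $3{:}1$ weighting after normalizing by the coefficients $3$ and $4$ of $|A_1|$ and $|A_2|$, so the two presentations differ only cosmetically.
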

\begin{proof}
Plugging the definition of $L$ into Corollaries~\ref{cor:A_1_bound} and~\ref{cor:A_2_bound} yields the inequalities
\[
	2|M^*| \leq 3L - \scomp + \dcomp + \tcomp
\]
and
\[
	|M^*|	\leq 4L - 2\scomp - 3\dcomp - 3\tcomp
	\enspace.
\]
Adding the first of these inequalities three times to the second one gives
\[
	7|M^*|
	\leq
	13L - 5\scomp
	\leq
	13L
	\enspace,
\]
where the second inequality holds since $\scomp$ is non-negative by definition. The lemma now follows by rearranging the above inequality.
\end{proof}

As promised, we now argue that the size of the matching produced by Algorithm~\ref{alg:triangles} is at least $L$.
\begin{lemma} \label{lem:L_is_output}
Algorithm~\ref{alg:triangles} outputs a matching of size at least $L$.
\end{lemma}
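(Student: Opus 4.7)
The plan is to exploit the fact that Algorithm~\ref{alg:triangles} returns a maximum matching in $(V, P \cup A_1 \cup A_2)$, so it suffices to exhibit two matchings in this graph, one of size $\scomp + \dcomp + \tcomp + |A_1|$ and the other of size $\scomp + \dcomp + \tcomp + |A_2|$; the larger of the two then lower bounds the output matching by $L$.

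For the $A_1$-based construction, I would include every isolated edge of $P$ (contributing $\scomp$ edges) and then process each partial triangle $C$ of $(V,P)$ as follows. Every edge of $A_1$ is of the connection-to-isolated type, so if some $(u,v) \in A_1$ touches $C$ at the connection vertex $u$, I include both $(u,v)$ and an edge of $C$ that avoids $u$---this edge exists because $C$ is either a length-$2$ path or a triangle, and in either case at least one edge of $C$ is not incident to $u$. Such a $C$ then contributes $2$ matching edges. For every partial triangle untouched by $A_1$, I include any single one of its edges. Summing yields $\scomp + (\dcomp + \tcomp - |A_1|) + 2|A_1| = \scomp + \dcomp + \tcomp + |A_1|$.

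For the $A_2$-based construction, the logic is analogous, except one must additionally handle connection-to-connection edges of $A_2$. Let $a$ denote the number of connection-to-isolated edges of $A_2$ and $b$ the number of connection-to-connection ones (so $|A_2| = a+b$). The construction is: include every isolated edge of $P$; for each connection-to-isolated $(u,v) \in A_2$, include $(u,v)$ together with an edge of $C_u$ avoiding $u$ ($2$ edges); for each connection-to-connection $(u,v) \in A_2$, include $(u,v)$ together with an edge of $C_u$ avoiding $u$ and an edge of $C_v$ avoiding $v$ ($3$ edges); for every partial triangle untouched by $A_2$, include any of its edges ($1$ edge). Since the number of partial triangles untouched by $A_2$ is $\dcomp + \tcomp - a - 2b$, the total comes out to $\scomp + (\dcomp + \tcomp - a - 2b) + 2a + 3b = \scomp + \dcomp + \tcomp + |A_2|$.

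The step that I expect to require the most care is verifying that all the chosen edges genuinely form a matching. Within a single partial triangle this is immediate, since at most one of its connection vertices is ever forbidden and a length-$2$ path or triangle always has an edge missing any specified vertex. Across components, disjointness relies on the algorithm's explicit guarantees (already invoked in the proof of Observation~\ref{obs:streaming}): each connected component of $(V,P)$ is intersected by at most one edge of $A_i$, and the isolated-vertex side of a connection-to-isolated edge of $A_i$ is itself a whole component of $(V,P)$, so no two such edges can share that endpoint either.
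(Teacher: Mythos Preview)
Your proof is correct and follows essentially the same approach as the paper: both arguments reduce to exhibiting a matching of size $\scomp + \dcomp + \tcomp + |A_i|$ inside $(V, P \cup A_i)$ for $i \in \{1,2\}$, and both do so by contributing one edge from each untouched partial triangle, two from a partial triangle augmented toward an isolated vertex, and three from two partial triangles joined by an $A_2$ edge. The only cosmetic difference is that the paper organizes the count around the connected components of $(V, P \cup A_2)$, whereas you build the matching directly edge by edge; the underlying case analysis and arithmetic are identical.
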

\begin{proof}
Since Algorithm~\ref{alg:triangles} outputs a maximum matching in $(V, P \cup A_1 \cup A_2)$, to prove the lemma it suffices to show that the graph $(V, P \cup A_1)$ includes a matching of size $\scomp + \dcomp + \tcomp + |A_1|$ and the graph $(V, P \cup A_2)$ includes a matching of size $\scomp + \dcomp + \tcomp + |A_2|$. We prove below only the claim regarding $(V, P \cup A_2)$. The claim regarding $(V, P \cup A_1)$ can be proved analogously.

Let $H$ be the number of edges in $A_2$ that connect two non-isolated vertices of $(V, P)$. Then, we classify the connected components of $(V, P \cup A_2)$ as follows, and show how to build a large matching $M$ based on this classification.
\begin{itemize}
	\item $(V, P \cup A_2)$ includes $\scomp + \dcomp + \tcomp - |A_2| - H$ connected components that are (i) not an isolated node, and (ii) appear also in $(V, P)$. Each one of these connected components contains at least one edge, and therefore, can contribute some edge to $M$.
	\item $(V, P \cup A_2)$ includes $|A_2| - H$ connected components that consist of a connected component $C$ of $(V, P)$ that has connection vertices and an edge $e$ connecting a connection vertex of $C$ to an isolated vertex of $(V, P)$. One can observe that the combination of $C$ and $e$ must be either a path of length $3$ or a triangle and an edge attached to one of its vertices, and in both cases this combined connected component contains two vertex disjoint edges which it can contribute to the matching $M$.
	\item $(V, P \cup A_2)$ includes $H$ connected components that consist of two connected components $C_1, C_2$ of $(V, P)$ that have connection vertices and an edge $e$ connecting a connecting vertex of $C_1$ with a connecting vertex of $C_2$. There are three shapes that the connected component obtained in this way can take: a path of length $5$, a triangle with a path of length $3$ attached to one of its vertices or two triangles and an edge connecting them. However, one can observe that all these shapes include three vertex disjoint edges that can be contributed to the matching $M$.
\end{itemize}
By collecting from every connected component of $(V, P \cup A_2)$ the edges that it can contribute to $M$ according to the above analysis, we get a matching in $(V, P \cup A_2)$ of size at least
\begin{align*}
	&
	[\scomp + \dcomp + \tcomp - |A_2| - H] + 2[|A_2| - H] + 3H\\
	={} &
	\scomp + \dcomp + \tcomp + |A_2|
	\enspace.
	\qedhere
\end{align*}
\end{proof}

Lemmata~\ref{lem:L_lower_bound} and~\ref{lem:L_is_output} imply together the following corollary. Together with Observation~\ref{obs:streaming}, this corollary implies Theorem~\ref{thm:two_pass}.
\begin{corollary}
Algorithm~\ref{alg:triangles} is a $\nicefrac{7}{13}$-approximation algorithm.
\end{corollary}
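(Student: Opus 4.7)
The corollary is an immediate composition of the two preceding lemmas, so my plan is simply to chain them together. By Lemma~\ref{lem:L_is_output}, the matching returned by Algorithm~\ref{alg:triangles} has size at least $L = \scomp + \dcomp + \tcomp + \max\{|A_1|, |A_2|\}$. By Lemma~\ref{lem:L_lower_bound}, this quantity $L$ is itself at least $\nicefrac{7}{13}|M^*|$. Composing these two inequalities shows that the output matching has size at least $\nicefrac{7}{13}|M^*|$, which is precisely the claimed approximation ratio.

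There is essentially no obstacle here — all the work has been done in the preceding lemmas. The only thing worth remarking on is that the chain uses $L$ purely as a bridge between an algorithmic lower bound (the size of a matching actually extracted from $(V, P \cup A_1 \cup A_2)$) and a combinatorial lower bound derived from the charging argument (relating $|A_1|$, $|A_2|$, and the component counts $\scomp, \dcomp, \tcomp$ to $|M^*|$). So the proof proposal is a one-line composition, and I would write it as: the size of the matching output by Algorithm~\ref{alg:triangles} is at least $L \geq \nicefrac{7}{13}|M^*|$, where the first inequality is Lemma~\ref{lem:L_is_output} and the second is Lemma~\ref{lem:L_lower_bound}.
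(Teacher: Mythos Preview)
Your proposal is correct and matches the paper's approach exactly: the paper simply states that the corollary follows from Lemmata~\ref{lem:L_lower_bound} and~\ref{lem:L_is_output}, which is precisely the composition you describe.
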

\section{Three-Pass Algorithm for Triangle-Free Graphs} \label{sec:triangle_free}

In this section we prove Theorem~\ref{thm:improvement_paths_triangle_free}, which we repeat here for convenience.
\ThmImprovementPathsTriangleFree*

We refer to the algorithm whose existence is guaranteed by Theorem~\ref{thm:improvement_paths_triangle_free} as {\trianglefreealg}. In its first pass, {\trianglefreealg} constructs a maximal matching $M_0$ of $G$. Formally, the pseudocode for this pass appears as Algorithm~\ref{alg:first_pass}.

\begin{algorithm}[th]
\caption{\textsc{{\trianglefreealg} -- First Pass}} \label{alg:first_pass}
Let $M_0 \gets \varnothing$.\\
\For{every edge $e$ that arrives}
{
	Add $e$ to $M_0$ if it does not intersect any edge that already belongs to $M_0$.
}
\end{algorithm}

We say that an edge $e \in E$ is a \emph{wing} if $e$ includes exactly one vertex of $V(M_0)$. Intuitively, the reason we are interested in wings is that one can obtain an augmenting path\footnote{A path $P$ is an augmenting path for a matching $M$ if $M \oplus E(P)$ is a valid matching of size $|M| + 1$.} for $M_0$ by combining an edge $(u, v) \in M_0$ with two wings: one wing that intersects $u$ and one wing that intersects $v$. The second pass of {\trianglefreealg} grows a set $W$ of wings. Since we hope to construct multiple augmenting paths using these wings, the algorithm makes sure to limit the number of wings in $W$ that intersect any given vertex $u$ (specifically, the algorithm allows only a single wing in $W$ to intersect $u$ if $u \in V(M_0)$, and otherwise it allows up to two wings of $W$ to intersect $u$). The pseudocode of this second pass appears as Algorithm~\ref{alg:second_pass}.

Algorithm~\ref{alg:second_pass} also includes a post-processing step in which a set $\cP_1$ of augmenting paths (with respect to $M_0$) is constructed using $W$. This is done by constructing an auxiliary multi-graph $G_A$ over the vertices of $V \setminus V(M_0)$ in which there is an edge between two nodes $u, v \in V \setminus V(M_0)$ for every path $P_{u, v}$ of length $3$ in $W \cup M_0$ between them. One can note that every such path $P_{u, v}$ must be an augmenting path consisting of an edge $e \in M_0$ and two wings from $W$: one intersecting $u$ and an end-point of $e$, and the other intersecting $v$ and the other end-point of $e$. Algorithm~\ref{alg:second_pass} finds a maximum size matching $M_A$ in $G_A$, and then sets $\cP_1$ to be the collection of (augmenting) paths corresponding to the edges of $M_A$.
\begin{algorithm}[th]
\caption{\textsc{{\trianglefreealg} -- Second Pass}} \label{alg:second_pass}
Let $W \gets \varnothing$.\\
\For{every edge $e$ that arrives}
{
	\If{$e$ intersects exactly one vertex $u \in V(M_0)$}
	{
		Let $v$ denote the other end-point of $e$ (\ie, the end-point that is not $u$).\\
		\If{$\deg_W(u) < 1$ and $\deg_W(v) < 2$ \label{line:degrees_condition}}
		{
			Add $e$ to $W$.
		}
	}
}

\BlankLine

\tcp{Post-processing}
Let $G_A$ be a multi-graph over the vertices $V \setminus V(M_0)$. For every path $P_{u, v}$ of length $3$ in $W \cup M_0$ between two vertices $u, v \in V \setminus V(M_0)$, we add an edge $(u, v)$ to the graph $G_A$. \tcp{This is a multi-graph because there might be multiple such paths between a pair of vertices of $V \setminus V(M_0)$.}
Find a maximum size matching $M_A$ in $G_A$.\\
Let $\cP_1 \gets \{P_{u, v} \mid (u, v) \in M_A\}$.
\end{algorithm}

Consider now an edge $e \in M_0$ that does not appear in any path of $\cP_1$ and is connected by some wing $w \in W$ to some vertex $u \not \in V(M_0) \cup V(\cP_1)$. The pair $e, w$ can be extended into an augmenting path if one can find another wing $w'$ connecting the other end of $e$ (the end that does not intersect $w$) to a vertex $v \not \in V(M_0) \cup V(\cP_1)$ that is not $u$. The third pass of {\trianglefreealg} greedily constructs a collection $\cP_2$ of augmenting paths in this way. A pseudocode of this pass appears as Algorithm~\ref{alg:third_pass}. After completing the pass, Algorithm~\ref{alg:third_pass} returns the matching obtained by augmenting $M_0$ with the augmenting paths of $\cP_1$ and $\cP_2$.
\begin{algorithm}[th]
\DontPrintSemicolon
\caption{\textsc{{\trianglefreealg} -- Third Pass}} \label{alg:third_pass}
Let $\cP_2 \gets \varnothing$.\\
\For{every edge $w'$ that arrives}
{
	\If{there exist $4$ vertices $u, a, b, v \in V \setminus (V(\cP_1) \cup V(\cP_2))$ such that: (i) $u \not \in V(M_0)$, (ii) $w' = (u, a)$, (iii) $(a, b) \in M_0$ and (iv) $(b, v) \in W$ \label{line:P2_condition}}
	{
		Add the path $u, a, b, v$ to $\cP_2$. \tcp{Note that $u \neq v$ because otherwise $u, a, b, v$ would have been a triangle.}
	}
}

\BlankLine

\Return{$M_0 \oplus \left(\bigcup_{P \in \cP_1 \cup \cP_2} E(P)\right)$}.
\end{algorithm}

We begin the analysis of {\trianglefreealg} with the following lemma, which shows that this algorithm returns a matching, and also gives a basic lower bound on the size of this matching.
\begin{lemma} \label{lem:basic_matching_size}
The paths in $\cP_1$ and $\cP_2$ are vertex disjoint, and therefore, the output of \textnormal{\trianglefreealg} is a matching of size $|M_0| + |\cP_1| + |\cP_2|$.
\end{lemma}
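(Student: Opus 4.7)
The plan is to first observe the structural form of every path in $\cP_1 \cup \cP_2$ and then prove disjointness separately for $\cP_1$ and for $\cP_2$. Every path produced by either post-processing step or by Algorithm~\ref{alg:third_pass} has the shape $u, a, b, v$ with $(u,a), (b,v) \in W$ and $(a,b) \in M_0$; by the definition of a wing, each of $(u,a)$ and $(b,v)$ contains exactly one vertex of $V(M_0)$, so necessarily $a,b \in V(M_0)$ and $u,v \notin V(M_0)$. In particular, each such path is an augmenting path for $M_0$.

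To prove that the paths in $\cP_1$ are pairwise vertex disjoint, I would split the argument into the endpoints (the $u,v$ vertices, which lie in $V \setminus V(M_0)$) and the middle vertices (the $a,b$ vertices, which lie in $V(M_0)$). Endpoint-disjointness is immediate from the fact that $M_A$ is a matching in $G_A$ on the vertex set $V \setminus V(M_0)$. For middle-vertex disjointness, the key tool is the degree cap from Line~\ref{line:degrees_condition} of Algorithm~\ref{alg:second_pass}, which guarantees $\deg_W(x) \le 1$ for every $x \in V(M_0)$. Suppose two distinct paths $u_1,a_1,b_1,v_1$ and $u_2,a_2,b_2,v_2$ of $\cP_1$ shared a middle vertex $a_1 = a_2$: then $(u_1,a_1),(u_2,a_1) \in W$ would force $u_1 = u_2$, contradicting the fact that the corresponding edges of $M_A$ are vertex disjoint. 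The case $b_1 = b_2$ is symmetric, and the case $a_1 = b_2$ reduces to one of the previous cases after noting that $M_0$ is a matching, so the $M_0$-edge of both paths must be the same $(a_1,b_1) = (b_2,a_2)$, and then both of its endpoints would carry two wings.

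The disjointness between $\cP_2$ and $\cP_1$, as well as within $\cP_2$ itself, follows directly from the explicit guard in Line~\ref{line:P2_condition} of Algorithm~\ref{alg:third_pass}, which only admits a new path whose four vertices all lie in $V \setminus (V(\cP_1) \cup V(\cP_2))$. Combining this with the previous paragraph yields that the paths in $\cP_1 \cup \cP_2$ are pairwise vertex disjoint.

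Finally, I would invoke the standard fact about vertex-disjoint augmenting paths: if $P_1, \ldots, P_k$ are pairwise vertex-disjoint augmenting paths for a matching $M_0$, then $M_0 \oplus \bigcup_i E(P_i)$ is itself a matching of size $|M_0| + k$. Applied with $k = |\cP_1| + |\cP_2|$, this gives the claimed size. The only subtle step is the middle-vertex disjointness within $\cP_1$, which is where the degree cap enforced during the second pass is essential; everything else is either a direct unpacking of definitions or the built-in disjointness check of the third pass.
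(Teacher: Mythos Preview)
Your proof is correct and follows essentially the same approach as the paper: split the $\cP_1$ disjointness into endpoints (handled via $M_A$ being a matching) and internal vertices (handled via the degree cap $\deg_W\le 1$ on $V(M_0)$, together with $M_0$ being a matching), and derive the $\cP_2$ disjointness directly from the guard in Line~\ref{line:P2_condition}. One minor inaccuracy worth fixing: for paths in $\cP_2$ the first edge $w'=(u,a)$ is an arbitrary arriving wing and need not lie in $W$, but this does not affect your argument since the structural conclusion $a,b\in V(M_0)$, $u,v\notin V(M_0)$ still follows from conditions (i)--(iv), and your $\cP_2$ disjointness uses only the guard.
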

\begin{proof}
Given the above discussion, it is clear that all the paths in $\cP_1 \cup \cP_2$ are augmentation paths with respect to $M_0$, which implies that the first part of the lemma indeed implies the second part. Furthermore, one can observe that the condition in Line~\ref{line:P2_condition} of Algorithm~\ref{alg:third_pass} guarantees that the paths in $\cP_2$ are vertex disjoint from each other and from the paths of $\cP_1$. Thus, to complete the proof of the lemma, it remains to argue that the paths in $\cP_1$ are also vertex disjoint.

Recall that the end-points of every path in $\cP_1$ belong to $V \setminus V(M_0)$ and the internal points of these paths belong to $V(M_0)$. Therefore, to show that the paths in $\cP_1$ are vertex disjoint, it suffices to argue this separately for their end-points and their internal nodes. Every path $P_{u, v} \in \cP_1$ corresponds to an edge $(u, v)$ in the matching $M_A$. Since the end-points of the path $P_{u, v}$ are also the end-points of this edge, we get that the paths in $\cP_1$ must have disjoint end-points because $M_A$ is a matching. Consider now some path $P_{u, v} \in \cP_1$, and let us denote the internal nodes of this path by $a$ and $b$. Since $a$ and $b$ appear only in the edge $(a, b)$ of $M_0$ (because $M_0$ is a matching), we get that if one of them belongs to a path of $\cP_1$, then the other belongs to this path as well. Furthermore, by Line~\ref{line:degrees_condition} of Algorithm~\ref{alg:second_pass}, $\deg_W(a) = \deg_W(b) = 1$, which implies that any path of $\cP_1$ that includes the nodes $a$ and $b$ as internal nodes must in fact be identical to $P_{u, v}$ itself. Hence, no two paths in $\cP_1$ share internal nodes.
\end{proof}

Using the last lemma we can also bound the space complexity of Algorithm~\ref{alg:third_pass}.

\begin{corollary} \label{cor:space_complexity}
\textnormal{\trianglefreealg} is a semi-streaming algorithm.
\end{corollary}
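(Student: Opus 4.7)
The goal is to show that every set the algorithm stores between and during passes fits in $O(n \log n)$ bits. Since every vertex (and hence every edge) can be encoded in $O(\log n)$ bits, it suffices to bound the number of edges (or paths of constant length) that are kept alive at any point in time by $O(n)$. My plan is to walk through the three passes in order and account for the data structures used in each.

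For the first pass, $M_0$ is a matching, so it contains at most $n/2$ edges and requires $O(n \log n)$ bits. For the second pass, the condition in Line~\ref{line:degrees_condition} of Algorithm~\ref{alg:second_pass} guarantees that every vertex $u \in V(M_0)$ has $\deg_W(u) \leq 1$; since every wing in $W$ has exactly one endpoint in $V(M_0)$, this yields $|W| \leq |V(M_0)| \leq n$. Thus both $M_0$ and $W$ together use $O(n \log n)$ space.

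The one step that requires a short argument is the post-processing that builds $G_A$ and finds $M_A$. I will observe that a length-$3$ path in $W \cup M_0$ between two vertices of $V \setminus V(M_0)$ must alternate wing--$M_0$-edge--wing, and since each endpoint of an edge $e \in M_0$ is incident to at most one wing of $W$, there is at most one such length-$3$ path through each $e \in M_0$. Consequently $G_A$ has at most $|M_0| \leq n/2$ edges, so it is stored in $O(n \log n)$ bits, and a maximum matching $M_A$ in $G_A$ can be computed offline in this space using any polynomial-time matching algorithm. Finally $|\cP_1| = |M_A| \leq n/2$, and each path in $\cP_1$ has four vertices, so storing $\cP_1$ takes $O(n \log n)$ bits.

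For the third pass, Lemma~\ref{lem:basic_matching_size} asserts that the paths of $\cP_1 \cup \cP_2$ are pairwise vertex-disjoint; since each such path uses exactly four vertices, we get $|\cP_1| + |\cP_2| \leq n/4$, so $\cP_2$ also fits in $O(n \log n)$ bits. Adding everything, the algorithm keeps at most $M_0 \cup W \cup \cP_1 \cup \cP_2$ together with a constant number of auxiliary pointers, for a total of $O(n \log n)$ bits of working memory, which is the semi-streaming budget. I do not anticipate a genuine obstacle here; the only mildly non-routine point is the $|E(G_A)| \leq |M_0|$ bound, which must be stated explicitly so that the in-memory computation of $M_A$ is known to respect the space budget.
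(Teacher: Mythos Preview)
Your proof is correct and follows essentially the same approach as the paper: bound $|M_0|$, $|W|$, and $|\cP_1 \cup \cP_2|$ each by $O(n)$, invoking Lemma~\ref{lem:basic_matching_size} for the vertex-disjointness of the paths. The one place you go beyond the paper's own proof is in explicitly bounding $|E(G_A)| \leq |M_0|$ so that the offline construction of $G_A$ and computation of $M_A$ respect the space budget; the paper's proof leaves this implicit (since $G_A$ is derivable from $M_0 \cup W$, which are already stored), but your explicit treatment is arguably more careful.
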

\begin{proof}
Aside from a constant number of other vertices and edges, {\trianglefreealg} has to store only the edges of $M_0 \cup W$ and the paths of $\cP_1 \cup \cP_2$. As these paths are of constant length (specifically, a length of $3$), to prove the corollary we only need to argue that $M_0$, $W$, $\cP_1$ and $\cP_2$ are all of size $O(n)$. Below we argue that this is indeed the case.
\begin{compactitem}
	\item $M_0$ is a matching in the graph $G$, and therefore, its size is at most $n/2$.
	\item Every edge of $W$ is a wing, and thus, has one end point in $M_0$. Since Line~\ref{line:degrees_condition} of Algorithm~\ref{alg:second_pass} guarantees that $\deg_W(u) \leq 1$ for every vertex $u \in M_0$, this implies $|W| \leq 2|M_0| \leq n$.
	\item Since the paths in $\cP_1 \cup \cP_2$ are vertex disjoint by Lemma~\ref{lem:basic_matching_size}, and each path contains $4$ vertices, the number of paths in both sets together cannot exceed $n/4$. \qedhere
\end{compactitem}
\end{proof}

It remains to analyze the approximation ratio of {\trianglefreealg}. Our analysis roughly follows the flow of the algorithm, and thus, we begin by observing that the matching $M_0$ constructed in the first pass of this algorithm is of size at least $|M^*|/2$ (recall that $M^*$ is a maximum size matching of $G$) because $M_0$ is a maximal matching of $G$ by construction.

In its second pass, {\trianglefreealg} constructs the set $W$ of wings. Our next objective is to lower bound the size of $W$. Towards this goal, we need to define $W_M$ to be the set of all edges of $M^*$ that are wings (we recall that an edge $e$ is a wing if exactly one of its end points appear in $V(M_0)$).
\begin{observation} \label{obs:W_M_size}
$|W_M| \geq 2(|M^*| - |M_0|)$.
\end{observation}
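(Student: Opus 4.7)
The plan is to exploit the maximality of $M_0$ together with the fact that $M^*$ is a matching. Since $M_0$ is maximal, every edge of $E$ (and in particular every edge of $M^*$) must share a vertex with some edge of $M_0$, i.e.\ it has at least one endpoint in $V(M_0)$. Consequently, each edge of $M^*$ falls into exactly one of two classes: edges with exactly one endpoint in $V(M_0)$ (these are wings, so they form $W_M$) and edges with both endpoints in $V(M_0)$; let $D_M$ denote the latter set. Then $|M^*| = |W_M| + |D_M|$.

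Next I would count the vertices of $V(M_0)$ that are covered by $M^*$. Because $M^*$ is a matching, each vertex of $V(M_0)$ is covered by at most one edge of $M^*$; each edge of $W_M$ contributes exactly one such vertex and each edge of $D_M$ contributes exactly two. Hence
\begin{equation*}
    |W_M| + 2|D_M| \;\leq\; |V(M_0)| \;=\; 2|M_0|.
\end{equation*}
Substituting $|D_M| = |M^*| - |W_M|$ and rearranging yields
\begin{equation*}
    |W_M| \;\geq\; 2(|M^*| - |M_0|),
\end{equation*}
which is exactly the claimed bound.

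There is no real obstacle here: the whole argument is a two-line double-counting once one observes that maximality of $M_0$ forces every edge of $M^*$ to be either a wing or an edge inside $V(M_0)$. The only thing to be careful about is to use maximality (not just that $M_0$ is a matching) to rule out edges of $M^*$ with zero endpoints in $V(M_0)$; if such an edge existed it could have been added to $M_0$, contradicting the construction in Algorithm~\ref{alg:first_pass}.
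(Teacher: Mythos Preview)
Your proof is correct and follows essentially the same approach as the paper: both use maximality of $M_0$ to ensure every edge of $M^*$ has at least one endpoint in $V(M_0)$, then double-count the vertices of $V(M_0)$ hit by $M^*$ (using that $M^*$ is a matching) to obtain $|W_M| + 2|M^*\setminus W_M| \leq 2|M_0|$, which rearranges to the claim.
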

\begin{proof}
Since $M_0$ is a maximal matching, every edge of $M^*$ intersects at least one edge of $M_0$. Hence, every edge of $W_M$ includes a single end-point of an edge of $M_0$, and every edge of $M^* \setminus W_M$ includes two end-points of edges of $M_0$ (the two end-points might belong to different edges or to the same edge), which implies $|M_0| \geq  (|W_M| + 2|M^* \setminus W_M|)/2 = |M^*| - |W_M|/2$. Rearranging this inequality completes the proof of the observation.
\end{proof}

%Using the last observation, we can now derive a lower bound on the size of the set $W$ maintained by {\trianglefreealg}.

\begin{lemma} \label{lem:W_lower_bound}
$|W| \geq \tfrac{2}{3} |W_M| \geq \tfrac{4}{3} (|M^*| - |M_0|)$.
\end{lemma}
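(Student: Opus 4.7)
The plan is to prove the two inequalities in the statement separately. The second, $|W_M| \geq 2(|M^*|-|M_0|)$, is exactly Observation~\ref{obs:W_M_size}, so the entire burden is to establish $|W| \geq \tfrac{2}{3}|W_M|$, or equivalently $|W_M| \leq \tfrac{3}{2}|W|$.

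For this I would use a fractional charging scheme in which every edge $e=(u,v)\in W_M$ (with $u\in V(M_0)$ and $v\notin V(M_0)$) disburses exactly one unit of credit among the wings of $W$, split into three cases dictated by Line~\ref{line:degrees_condition} of Algorithm~\ref{alg:second_pass}. \textbf{(i)} If $e \in W$, it simply credits itself with $1$. \textbf{(ii)} If $e\notin W$ and $\deg_W(u)=1$ at the moment $e$ was read, then the unique wing $w_u\in W$ incident to $u$ is credited with $1$. \textbf{(iii)} Otherwise $\deg_W(u)=0$ at the arrival of $e$, forcing $\deg_W(v)=2$ at that moment, in which case each of the two wings of $W$ incident to $v$ receives $\tfrac{1}{2}$. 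These three cases are exhaustive because Algorithm~\ref{alg:second_pass} accepts $e$ exactly when $\deg_W(u)<1$ and $\deg_W(v)<2$, and the capacities used by the algorithm cap $\deg_W(u)$ at $1$ and $\deg_W(v)$ at $2$. By construction the total credit disbursed equals $|W_M|$, so by double-counting $|W_M|$ also equals the total credit received by the wings of $W$.

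The crux is then to bound the credit received by a single wing $w=(u_w,v_w)\in W$ by $\tfrac{3}{2}$. Because $W_M$ is a matching, there is at most one $W_M$-edge incident to $u_w$ and at most one incident to $v_w$. A $W_M$-edge at $u_w$ different from $w$ can charge $w$ only through case~(ii), contributing at most $1$; a $W_M$-edge at $v_w$ different from $w$ can charge $w$ only through case~(iii), contributing at most $\tfrac{1}{2}$; together these give at most $\tfrac{3}{2}$. If, on the other hand, $w$ itself lies in $W_M$, the matching property of $W_M$ immediately rules out any other incident $W_M$-edge at $u_w$ or at $v_w$, so the only charge $w$ receives is its self-credit of $1$. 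Either way the bound $\tfrac{3}{2}$ holds, and summing over $w\in W$ yields $|W_M|\leq \tfrac{3}{2}|W|$, completing the proof once combined with Observation~\ref{obs:W_M_size}.

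The step I expect to require the most care is the case analysis around the overlap $W\cap W_M$: one must verify that a wing which is itself a $W_M$-edge cannot simultaneously be debited for another $W_M$-edge incident to the same endpoint, for otherwise the per-wing bound would degrade. Apart from this bookkeeping, the argument is a standard greedy-$b$-matching charging scheme tailored to the asymmetric capacities $1$ on $V(M_0)$ and $2$ on $V\setminus V(M_0)$ imposed by Algorithm~\ref{alg:second_pass}.
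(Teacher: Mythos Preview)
Your proof is correct, and the charging argument goes through exactly as you describe; in particular the ``dangerous'' case $w\in W\cap W_M$ is harmless because the matching property of $W_M$ forbids any other $W_M$-edge at either endpoint of $w$, so the self-credit of $1$ is the only charge.

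The paper takes a different route. Rather than a fractional per-edge charging scheme, it works on the vertex side: it sets $I = V(M_0)\cap V(W_M)$ and $I_F\subseteq I$ the vertices of $I$ not covered by $W$, and proves two integer inequalities,
\[
|W|\;\geq\;2|I_F| \qquad\text{and}\qquad |W|\;\geq\;|W_M|-|I_F|,
\]
which it then combines with weights $1$ and $2$ to eliminate $|I_F|$. The dictionary between the two arguments is straightforward: your cases (i) and (ii) correspond to $a\in I\setminus I_F$ (the $M_0$-endpoint is already covered by a $W$-wing), while your case (iii) corresponds to $a\in I_F$ (the $M_0$-endpoint is uncovered, forcing the free endpoint to have $W$-degree $2$). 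So the underlying combinatorics is the same; what differs is the packaging. Your single $\tfrac{3}{2}$-per-wing bound is arguably cleaner and avoids having to guess the right linear combination, while the paper's two separate inequalities make the two extremal regimes (all $M_0$-endpoints covered versus none covered) individually visible, which can be useful if one later wants to refine the analysis.
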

\begin{proof}
Let $I = V(M_0) \cap V(W_M)$, and let $I_F$ be the set of vertices in $I$ that do not appear in any edge of $W$. Every vertex $a \in I_F \subseteq I$ must belong to some wing $w(a) \in W_M$ by the definition $I$. However, this wing was not added to $W$ (because $a \in I_F$), which implies that the condition in Line~\ref{line:degrees_condition} of Algorithm~\ref{alg:second_pass} evaluated to FALSE when $w(a)$ arrived. Since $a$ is not covered by any edge of $W$ (\ie, $\deg_W(v) = 0$), the fact that this condition evaluated to FALSE implies that the end point of $w(a)$ that does not belong to $V(M_0)$ must have a degree of $2$ under $W$. Formally, if we denote by $u(a)$ the end point of $w(a)$ that does not belong to $V(M_0)$, then we must have $\deg_W(u(a)) = 2$.

We now observe that (i) every wing in $W_M$ contains a disjoint vertex of $V \setminus V(M_0)$ because $W_M$ is a subset of the optimal matching $M^*$, and (ii) every wing in $W$ contains only one vertex of $V \setminus V(M)$ because it is a wing. These two observations imply together
\begin{equation} \label{eq:fail_side}
	|W|
	\geq
	\sum_{a \in I_F} \deg_W(u(a))
	=
	2|I_F|
	\enspace.
\end{equation}
In contrast, since (i) every wing in $W$ contains a single vertex of $V(M_0)$, and (ii) all the vertices of $I \setminus I_F \subseteq V(M_0)$ appear in some wing of $W$,
\begin{equation} \label{eq:success_side}
	|W|
	\geq
	|I| - |I_F|
	=
	|W_M| - |I_F|
	\enspace,
\end{equation}
where the equality holds since every edge of $W_M$ is a wing, and therefore, intersects a single vertex of $V(M_0)$. The lemma now follows by adding two copies of Inequality~\eqref{eq:success_side} to Inequality~\eqref{eq:fail_side}.
\end{proof}

We now get to the analysis of the third pass of {\trianglefreealg}, and our first goal in this analysis is to identify a set of paths that have a potential (in some sense) to end up in $\cP_2$. Let $\cP'$ be the set of paths of length $3$ in $G$ that consist of a wing of $W_M$ followed by an edge of $M_0$ and then a wing of $W$. We think of the paths in $\cP'$ as directed from their $W_M$ to their $W$ edge, and consider two paths that differ only in their direction to be different paths. This is important because if there is an edge $e \in M_0$ incident to two edges $w_1, w_2 \in W \cap W_M$, then the path $w_1, e, w_2$ fulfills the requirements to belong to $\cP'$ both when $w_1$ is considered the first edge in it and when $w_2$ is considered the first edge of the path. Thus, the fact that we treat the direction of the path as part of the path's definition allows both the paths $w_1, e, w_2$ and $w_2, e, w_1$ to appear in $\cP'$.
\begin{observation} \label{obs:cPprime_size}
$|\cP'| \geq \tfrac{10}{3}|M^*| - \tfrac{16}{3}|M_0|$.
\end{observation}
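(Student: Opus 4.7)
The plan is to evaluate, for each $e \in M_0$, the number $X_e$ of paths in $\cP'$ whose (unique) $M_0$-edge is $e$, and then to sum over all such $e$. Since the $M_0$-edge of a path in $\cP'$ is determined by the path's two internal vertices, $|\cP'|=\sum_{e\in M_0} X_e$.

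First I would establish an explicit formula for $X_e$. For $e=(a,b)\in M_0$, a path of $\cP'$ with middle edge $e$ has the shape $(u,a),(a,b),(b,v)$ with $(u,a)\in W_M$ and $(b,v)\in W$, or the reverse orientation. Since $W_M\subseteq M^*$ is a matching, $\deg_{W_M}(v)\in\{0,1\}$ for every $v\in V(M_0)$; and line~\ref{line:degrees_condition} of Algorithm~\ref{alg:second_pass} forces $\deg_W(v)\in\{0,1\}$ for every $v\in V(M_0)$. Consequently
\[
X_e \;=\; \deg_{W_M}(a)\cdot \deg_W(b)\;+\;\deg_{W_M}(b)\cdot \deg_W(a)\enspace.
\]
Moreover, each such walk is a genuine path on four distinct vertices, since the triangle-free hypothesis rules out the only possible coincidence $u=v$ (which would close the walk into a triangle on $\{u,a,b\}$).

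Next I would verify the pointwise inequality
\[
X_e \;\geq\; \deg_{W_M}(a)+\deg_{W_M}(b)+\deg_W(a)+\deg_W(b)-2
\]
by a brief check of the $16$ possible value assignments to the four $\{0,1\}$-indicators. Only the cases where they sum to $3$ or $4$ make the right-hand side positive, and in each of them at least one of the two products in $X_e$ is forced to equal $1$ (with $X_e=2$ when all four indicators are $1$). This local inequality is essentially the only nontrivial step in the whole argument.

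Finally, summing this inequality over $e\in M_0$ and using that each wing has exactly one endpoint in $V(M_0)$, so that $\sum_{v\in V(M_0)}\deg_{W_M}(v)=|W_M|$ and $\sum_{v\in V(M_0)}\deg_W(v)=|W|$, I obtain
\[
|\cP'|\;\geq\;|W|+|W_M|-2|M_0|\enspace.
\]
Combining this with $|W_M|\geq 2(|M^*|-|M_0|)$ from Observation~\ref{obs:W_M_size} and $|W|\geq \tfrac{2}{3}|W_M|$ from Lemma~\ref{lem:W_lower_bound} yields $|W|+|W_M|\geq \tfrac{5}{3}|W_M|\geq \tfrac{10}{3}(|M^*|-|M_0|)$, and hence $|\cP'|\geq \tfrac{10}{3}|M^*|-\tfrac{16}{3}|M_0|$, as desired. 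Everything after the pointwise inequality is bookkeeping.
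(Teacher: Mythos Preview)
Your proof is correct and arrives at the same intermediate bound $|\cP'|\geq |W|+|W_M|-2|M_0|$ that the paper establishes, after which the two arguments are identical. The only difference is cosmetic: the paper obtains that bound by a one-line set-theoretic inclusion--exclusion (writing $|\cP'|$ as $|A\cap B|$ for two subsets $A,B\subseteq V(M_0)$ and using $|A\cap B|\geq |A|+|B|-|V(M_0)|$), whereas you sum the equivalent pointwise $\{0,1\}$-inequality over the edges of $M_0$; your $16$-case check is exactly the per-vertex inequality $[c\in A][c\in B]\geq [c\in A]+[c\in B]-1$ added over the two endpoints of each $M_0$-edge.
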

\begin{proof}
Since $\deg_W(a) \leq 1$ for every vertex $a \in V(M_0)$, there are $|W|$ end-points of $M_0$ that intersect an edge of $W$. Let us denote these end-points by $V_W$, and for every end-point $a \in V_W$, we denote by $b(a)$ the other end-point of the same edge of $M_0$. Formally, $V_W = V(M_0) \cap V(W)$, and $b(a)$ is the single element of the set $\{b \mid (a, b) \in M_0\}$. One can now observe that $\cP'$ includes a (distinct) path for every wing of $W_M$ that intersect $b(a)$ for some vertex $a \in V_W$. Therefore,
\begin{align*}
	|\cP'|
	={} &
	|\{b(a) \mid a \in V_W\} \cap V(W_M)\}|\\
	\geq{} &
	|\{b(a) \mid a \in V_W\}| + |V(W_M) \cap V(M_0)\}| - |V(M_0)|
	=
	|W| + |W_M| - |V(M_0)|\\
	\geq{} &
	\tfrac{4}{3} (|M^*| - |M_0|) + 2(|M^*| - |M_0|) - |V(M_0)|
	=
	\tfrac{10}{3}|M^*| - \tfrac{16}{3}|M_0|
	\enspace,
\end{align*}
where the first equality holds since $\{b(a) \mid a \in V_W\}$ is a subset of $V(M_0)$, and the last inequality follows from Observation~\ref{obs:W_M_size} and Lemma~\ref{lem:W_lower_bound}.
\end{proof}

A path in $\cP'$ has a potential to be added to $\cP_2$ only if none of its vertices appears in $\cP_1$. Let $\cP''$ be the set of such paths (formally, $\cP'' = \{P \in \cP' \mid V(P) \cap V(\cP_1) = \varnothing\}$). The following lemma lower bounds the size of $\cP''$.
\begin{lemma} \label{lem:cPDoublePrime_size}
$|\cP''| \geq |\cP'| - 6|\cP_1| \geq \tfrac{10}{3}|M^*| - \tfrac{16}{3}|M_0| - 6|\cP_1|$.
\end{lemma}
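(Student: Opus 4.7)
The plan is to prove the first inequality $|\cP''| \geq |\cP'| - 6|\cP_1|$ by a per-path counting argument: I will show that every path $P \in \cP_1$ has $V(P) \cap V(P') \neq \varnothing$ for at most six paths $P' \in \cP'$. Since the paths of $\cP_1$ are vertex-disjoint by Lemma~\ref{lem:basic_matching_size}, this yields $|\cP' \setminus \cP''| \leq 6|\cP_1|$, and the second inequality of the lemma follows by substituting Observation~\ref{obs:cPprime_size} into the first.

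Fix $P \in \cP_1$ and write its vertex sequence as $u, a, b, v$, so that $(u,a), (b,v) \in W$, $(a,b) \in M_0$, and $u, v \in V \setminus V(M_0)$. Two structural facts drive the argument. First, by Line~\ref{line:degrees_condition} of Algorithm~\ref{alg:second_pass} every vertex of $V(M_0)$ has $W$-degree at most one, so $(u,a)$ is the unique wing of $W$ at $a$ and $(b,v)$ is the unique wing of $W$ at $b$; additionally every vertex has $W_M$-degree at most one (since $W_M \subseteq M^*$), and every vertex outside $V(M_0)$ has $W$-degree at most two. Second, any $P' \in \cP'$ has its two internal vertices in $V(M_0)$ (joined by an $M_0$-edge) and its two endpoints outside $V(M_0)$, so within $V(P)$ the vertices $a, b$ can only match internal nodes of $P'$, while $u, v$ can only match endpoints.

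I then partition the paths $P' \in \cP'$ meeting $V(P)$ according to which $M_0$-edge $P'$ uses. If $P'$ uses $(a,b)$, the uniqueness of the $W$-wings at $a$ and at $b$ leaves only two orientations for $P'$, each pinned down by the at-most-one $W_M$-wing at the appropriate end of $(a,b)$, giving at most two such paths. Otherwise $P'$ uses a different $M_0$-edge, and since $(a,b)$ is the unique $M_0$-edge through either $a$ or $b$, the path $P'$ can contain neither $a$ nor $b$; thus $P'$ must have $u$ or $v$ as an endpoint. For $u$, either $u$ is the $W_M$-endpoint of $P'$ and its unique $W_M$-wing (if any) determines the rest of $P'$, or $u$ is the $W$-endpoint of $P'$ and the wing used must be a $W$-wing at $u$ different from $(u,a)$ (otherwise the $M_0$-edge would be $(a,b)$), of which at most one exists because $\deg_W(u) \leq 2$; this contributes at most two paths through $u$, and symmetrically at most two through $v$. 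In total, at most $2 + 2 + 2 = 6$ paths of $\cP'$ meet $V(P)$, as required. The only real delicacy is the bookkeeping that guarantees no blocked path escapes these three cases, but this reduces immediately to $M_0$ being a matching together with the degree bounds recalled above, so no further obstacles are expected.
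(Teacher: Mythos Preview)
Your argument is correct and follows essentially the same strategy as the paper's proof: bound, for each $P\in\cP_1$, the number of paths of $\cP'$ that can meet it, using the degree constraints $\deg_W\le 1$ on $V(M_0)$, $\deg_W\le 2$ off $V(M_0)$, and $\deg_{W_M}\le 1$ everywhere. The only organizational difference is that the paper first proves the structural fact ``if $P'\in\cP'$ meets $P\in\cP_1$ at an internal vertex then it also meets $P$ at an endpoint'' and then shows any $u\in V\setminus V(M_0)$ lies in at most three paths of $\cP'$ (two as $W$-endpoint, one as $W_M$-endpoint), giving $2\cdot 3=6$; you instead split on whether $P'$ uses the edge $(a,b)\in M_0$, getting $2+2+2=6$. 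Both routes are equivalent reshufflings of the same case analysis. (One minor remark: the vertex-disjointness of $\cP_1$ that you invoke is not actually needed for the inequality $|\cP'\setminus\cP''|\le 6|\cP_1|$, which is just a union bound.)
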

\begin{proof}
The second inequality of the lemma follows from Observation~\ref{obs:cPprime_size}, and therefore, we concentrate on proving the first inequality. Towards this goal, assume that $P' \in \cP'$ is a path that intersects with a path $P_1 \in \cP_1$ on an internal vertex. Since the middle edge of both paths is an edge of $M_0$, this implies that the two paths intersect on both their internal vertices. Furthermore, since both end-edges of $P_1$ and one end-edge of $P'$ belong to $W$, there must be an internal vertex $a \in V(M_0)$ of both paths that intersects an edge of $W$ in both paths. However, since $\deg_W(a) \leq 1$, the edges of $W$ intersecting $a$ in both paths must be identical, which implies that the paths $P'$ and $P_1$ intersect also on some end-point. Since $P'$ and $P_1$ where chosen as general paths of $\cP'$ and $\cP_1$, respectively, that intersect on an internal node, this implies that the difference $|\cP'| - |\cP''|$ is equal to the number of paths in $\cP'$ that intersect a path of $\cP_1$ in an \emph{end-point}. The rest of the proof is devoted to proving that the last number is at most $6|\cP_1|$.

Since each path of $\cP_1$ has only two end points, to prove that the paths of $\cP_1$ intersect at most $6|\cP_1|$ paths of $\cP'$ at an end-point, it suffices to show that every vertex of $V \setminus V(M_0)$ can appear in at most $3$ paths of $\cP'$. To see why that is the case, consider an arbitrary vertex $u \in V \setminus V(M_0)$. If $u$ belongs to some path $P' \in \cP'$, then it must be in one of two roles as follows.
\begin{itemize}
	\item If $u$ is the last vertex of the path, then the last edge of the path is an edge $e \in W$ that includes $u$, and the other edges of the path $P'$ are the single edge of $M_0$ intersecting $e$ and the single edge of $W_M$ intersecting $e$. Note that this means that the identity of the entire path is determined by the edge $e$, and therefore, the number of paths of $\cP'$ in which $u$ is the last vertex can be upper bounded by $\deg_W(u) \leq 2$.
	\item If $u$ is the first vertex of the path, then the first edge of the path is the single edge $e \in W_M$ that includes $u$, and the other edges of the path are the single edge $e' \in M_0$ that intersect $e$ and the single edge $e'' \in W$ that intersects $e'$. Hence, the entire path is determined by the fact that $u$ is its first vertex, and therefore, there can be only a single path in $\cP'$ in which $u$ is the first vertex. \qedhere
\end{itemize}
\end{proof}

Originally, all the paths of $\cP''$ can be picked in the third pass of {\trianglefreealg} (Algorithm~\ref{alg:third_pass}) since they are vertex disjoint from the paths of $\cP_1$. However, as Algorithm~\ref{alg:third_pass} starts to add paths to $\cP_2$, it stops being possible to add some paths of $\cP''$ to $\cP_2$. Still, we can lower bound the size of $\cP_2$ in terms of the size of $\cP''$.
\begin{lemma} \label{lem:cP2_size}
$|\cP_2| \geq \tfrac{1}{6}|\cP''| \geq \tfrac{5}{9}|M^*| - \tfrac{8}{9}|M_0| - |\cP_1|$.
\end{lemma}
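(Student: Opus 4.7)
The plan is to establish the first inequality $|\cP_2| \geq \tfrac{1}{6}|\cP''|$ via a charging argument; the second inequality then follows immediately from Lemma~\ref{lem:cPDoublePrime_size}. First I would observe that for each $P = (u, a, b, v) \in \cP''$ the leading edge $(u, a) \in W_M$ is an actual edge of $G$ that arrives in pass~3, and at its arrival Algorithm~\ref{alg:third_pass} attempts to extend it through the unique $M_0$-edge at $a$ and the unique $W$-edge at $b$ (uniqueness of the latter using $\deg_W(b) \leq 1$ from Line~\ref{line:degrees_condition} of Algorithm~\ref{alg:second_pass}); the extension is exactly $P$ itself. Hence $P$ is added to $\cP_2$ iff all four vertices of $P$ lie outside $V(\cP_1) \cup V(\cP_2)$ at that moment. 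I would then charge each $P \in \cP''$ either to itself (if $P \in \cP_2$) or to the earliest $Q \in \cP_2$ with $V(Q) \cap V(P) \neq \varnothing$ (such a $Q$ must exist since $P$ failed to be added), reducing the task to showing that each $Q \in \cP_2$ receives at most $6$ charges.

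Fix $Q = (p, q, r, s) \in \cP_2$, where $(p, q)$ was its triggering edge, $(q, r) \in M_0$, and $(r, s) \in W$. I would bound the paths of $\cP''$ sharing a vertex with $Q$ by the shared vertex: at most $3$ through $p$ (one starting at $p$ via its unique $W_M$-edge plus up to $2$ ending at $p$ through its $\leq 2$ $W$-edges), at most $3$ through $s$ by symmetry, and at most $2$ through $\{q, r\}$ (both directions through the $M_0$-edge $(q, r)$). The naive total of $8$ reduces to $7$ through the always-present overlap that the ``same-direction'' $\{q,r\}$-path must end at $s$ (since $\deg_W(r) \leq 1$ and $(r,s) \in W$ force $r$'s $W$-partner to be $s$), placing it simultaneously in the ``through-$\{q,r\}$'' and ``through-$s$'' groups.

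To cut one more and reach $6$, I would split on whether $(p,q) \in W_M$. In the affirmative case, $Q$ itself belongs to $\cP''$ and appears in all three groups, and inclusion-exclusion immediately gives $\leq 6$. In the negative case, the only candidate that could inflate the count to $7$ is the ``reverse-direction'' path $P_{\mathrm{rev}} = (u_r, r, q, v_q)$, and I would rule it out as an independent charge via sub-case analysis: if $\deg_W(q) = 0$ the path $P_{\mathrm{rev}}$ does not exist; if $(p,q) \in W$ then $v_q = p$ and $P_{\mathrm{rev}}$ coincides with a path-through-$p$ already counted; and otherwise $q$'s $W$-partner $x \notin \{p,s\}$ would, upon the arrival of $(q,x)$ in pass~3, itself trigger the path $(x,q,r,s)$---which cannot be added (as doing so would later block $Q$, contradicting $Q \in \cP_2$), forcing either $x \in V(\cP_1)$ (so $P_{\mathrm{rev}} \notin \cP''$) or $x \in V(\cP_2)$ already at that point (so $P_{\mathrm{rev}}$'s first blocker is the earlier $\cP_2$-path using $x$, not $Q$).

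The main obstacle is this last sub-case, which requires carefully tracking the stream order of pass~2 (to determine when $(q,x)$ and the wings $(p,q_1),(p,q_2)$ entered $W$) together with the first-blocker convention, so as to exhaust every configuration in which $P_{\mathrm{rev}}$ could still be charged to $Q$. Once the per-$Q$ bound of $6$ is verified, summing over $Q \in \cP_2$ yields $|\cP''| \leq 6|\cP_2|$, and combining with Lemma~\ref{lem:cPDoublePrime_size} gives the two inequalities of the lemma.
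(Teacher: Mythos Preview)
Your charging framework is right and matches the paper's, but the per-$Q$ bound of $6$ hinges on a structural fact you never invoke: by the \emph{maximality} of the auxiliary matching $M_A$, no edge of $M_0$ can be joined by two distinct wings of $W$ to vertices of $V\setminus(V(M_0)\cup V(\cP_1))$ (otherwise those two wings together with the $M_0$-edge form a length-$3$ path whose $G_A$-edge has both endpoints unmatched by $M_A$). The paper opens with this observation and then uses it to show that any $P''\in\cP''$ meeting $Q=(p,q,r,s)\in\cP_2$ must meet an \emph{endpoint} of $Q$; the bound $2\times 3=6$ then falls out immediately from the ``$\leq 3$ paths of $\cP'$ through any non-$M_0$ vertex'' count already established in Lemma~\ref{lem:cPDoublePrime_size}.

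Your sub-case (iii) is exactly where this observation is needed. If $q$'s $W$-partner is $x\notin\{p,s\}$ with $x\notin V(\cP_1)$, then $(q,r)\in M_0$ carries two $W$-wings $(r,s)$ and $(q,x)$ to non-$\cP_1$ vertices---impossible by the observation above, so $P_{\mathrm{rev}}\notin\cP''$ and there is nothing to rule out. Your proposed stream-order argument does not close this gap: it works only when $(q,x)$ arrives before the trigger $(p,q)$. If $(q,x)$ arrives afterward (which is consistent with $(p,q)\notin W$ once $\deg_W(p)=2$ at $(p,q)$'s pass-2 arrival), then at $(q,x)$'s pass-3 arrival the vertices $q,r,s$ are already blocked \emph{by $Q$ itself}, and you learn nothing about $x$ or about any $\cP_2$-path earlier than $Q$ touching $P_{\mathrm{rev}}$. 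Chasing the other candidate triggers $(r,s)$, $(u_r,r)$, or the wings $(p,q_1),(p,q_2)$ hits the same wall; none of them constrains $x$. The fix is not more case analysis but the one-line appeal to the maximality of $M_A$, after which sub-cases (ii)--(iii) and the ``first-blocker'' bookkeeping become unnecessary.
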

\begin{proof}
We begin the proof by observing that no edge $e \in M_0$ is connect by two distinct wings $w_1, w_2 \in W$ to vertices of $V \setminus (V(M_0) \cup V(\cP_1))$. Assume towards a contradiction that this is not true, then there is an edge $e$ in $G_A$ corresponds to the path $P$ defined as $w_1, e, w_2$. Since $M_A$ is a maximum matching in $G_A$, it must include at least one edge that contains some end-point of $P$ (otherwise, the edge corresponding to $P$ could be added to $M_A$, which violates its maximality); which contradicts the definition of either $w_1$ or $w_2$.

For every path $P'' \in \cP''$, let us charge a cost of $1$ to some path of $\cP_2$ that intersects it. To see why such a path must exist, let us denote by $e_M$ the edge of $P''$ that belongs to $W_M$ (the first edge of $P''$). When $e_M$ arrives, the path $P''$ was one candidate to be added to $\cP_2$ by Algorithm~\ref{alg:third_pass}. If this candidate was still feasible at this time (in the sense that it was vertex disjoint from $\cP_2$), then Algorithm~\ref{alg:third_pass} must have added either $P''$ to $\cP_2$ or another path that includes $e_M$. In either case, following the arrival of $e_M$, some path intersecting $P''$ (which is possibly $P''$ itself) appears in $\cP_2$---and can be charged.

Our next goal is to show that the total cost charged to any single path of $\cP_2$ is at most $6$, which implies the lemma because the total cost charged to all the paths of $\cP_2$ is exactly $|\cP''|$. We do that by making two observations.
\begin{itemize}
	\item Since $\cP'' \subseteq \cP'$, we get by the proof of Lemma~\ref{lem:cPDoublePrime_size} that at most $3$ paths of $\cP''$ can include any given vertex $u \in V \setminus V(M_0)$.
	\item Our second observation is that, if a path $P'' \in \cP''$ intersects a path $P_2 \in \cP_2$, then they must intersect on an end-point of $P_2$. Assume towards a contradictions that they only intersect on an internal node $a$. Since the middle edges of both paths are edges of $M_0$ that include $a$, both internal edges must be the same. Let us denote this internal edge by $e$. Furthermore, as explained above, there can be only a single edge $w \in W$ that intersects $e$ and does not include a vertex of $V(\cP_1)$. This edge must belong also to both paths, and therefore, the end-point of $w$ that does not belong to $V(M_0)$ is an end-point of both $P''$ and $P_2$.
\end{itemize}
Combining the above two observations, we get that, for every path $P_2 \in \cP_2$, only paths of $\cP''$ intersecting an end-point of $P_2$ can charge a cost to $P_2$, and there can be at most $3$ paths of $\cP''$ intersecting each such end-point. Since $P_2$ has only two end-points, this implies that at most $6$ paths of $\cP''$ can charge $P_2$.
\end{proof}

\begin{corollary} \label{cor:approximation_three_passes}
The size of the output of {\trianglefreealg} is $|M_0| + |\cP_1| + |\cP_2| \geq \tfrac{11}{18}|M^*| = (\tfrac{1}{2} + \tfrac{1}{9})|M^*|$.
\end{corollary}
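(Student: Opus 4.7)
The plan is to combine the lower bound on $|\cP_2|$ from Lemma~\ref{lem:cP2_size} with the well-known lower bound on the size of any maximal matching, using the fact that the $|\cP_1|$ terms will cancel exactly. Since by Lemma~\ref{lem:basic_matching_size} the output is a valid matching of size $|M_0|+|\cP_1|+|\cP_2|$, the entire work lies in lower bounding this sum.

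First, I would plug the inequality $|\cP_2| \geq \tfrac{5}{9}|M^*| - \tfrac{8}{9}|M_0| - |\cP_1|$ from Lemma~\ref{lem:cP2_size} directly into the sum $|M_0| + |\cP_1| + |\cP_2|$. The $+|\cP_1|$ from the output size and the $-|\cP_1|$ from the bound on $|\cP_2|$ cancel, leaving
\[
	|M_0| + |\cP_1| + |\cP_2|
	\geq
	|M_0| - \tfrac{8}{9}|M_0| + \tfrac{5}{9}|M^*|
	=
	\tfrac{1}{9}|M_0| + \tfrac{5}{9}|M^*|
	\enspace.
\]

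Second, I would invoke the standard observation that any maximal matching is a $\nicefrac{1}{2}$-approximation of the maximum matching; since $M_0$ is a maximal matching of $G$ by construction (Algorithm~\ref{alg:first_pass}), we have $|M_0| \geq \tfrac{1}{2}|M^*|$. Substituting this into the previous inequality yields
\[
	|M_0| + |\cP_1| + |\cP_2|
	\geq
	\tfrac{1}{9} \cdot \tfrac{1}{2}|M^*| + \tfrac{5}{9}|M^*|
	=
	\tfrac{1}{18}|M^*| + \tfrac{10}{18}|M^*|
	=
	\tfrac{11}{18}|M^*|
	\enspace,
\]
which is exactly the desired bound.

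There is no real obstacle at this final stage; all of the technical work has already been absorbed into the previous lemmas (especially Lemma~\ref{lem:cP2_size}, which itself leverages Observation~\ref{obs:W_M_size}, Lemma~\ref{lem:W_lower_bound}, Observation~\ref{obs:cPprime_size}, and Lemma~\ref{lem:cPDoublePrime_size}). The only thing worth flagging is to double-check that $|\cP_1|$ indeed cancels cleanly (so that we do not need a separate lower bound on $|\cP_1|$) and that the coefficient $\tfrac{1}{9}$ multiplying $|M_0|$ is non-negative, so that weakening $|M_0|$ down to $\tfrac{1}{2}|M^*|$ is legitimate and gives the best bound in the worst case. Combined with Corollary~\ref{cor:space_complexity}, this proves Theorem~\ref{thm:improvement_paths_triangle_free}.
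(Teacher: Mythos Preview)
Your proof is correct and follows essentially the same approach as the paper: plug the bound from Lemma~\ref{lem:cP2_size} into $|M_0|+|\cP_1|+|\cP_2|$ so that the $|\cP_1|$ terms cancel, then use $|M_0|\geq \tfrac{1}{2}|M^*|$ to finish. The paper's argument is identical in structure and in the arithmetic.
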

\begin{proof}
The size of the output of {\trianglefreealg} is $|M_0| + |\cP_1| + |\cP_2|$ by Lemma~\ref{lem:basic_matching_size}, thus, we only need to lower bound this sum. To do this, note that
\begin{align*}
	|M_0| + |\cP_1| + |\cP_2|
	\geq{} &
	|M_0| + |\cP_1| + \{\tfrac{5}{9}|M^*| - \tfrac{8}{9}|M_0| - |\cP_1|\}\\
	={} &
	\tfrac{5}{9}|M^*| + \tfrac{1}{9}|M_0|
	\geq
	\tfrac{5}{9}|M^*| + \tfrac{1}{18}|M^*|
	=
	\tfrac{11}{18}|M^*|
	\enspace,
\end{align*}
where the first inequality follows from Lemma~\ref{lem:cP2_size}, and the second inequality follows from the observation made at the beginning of this section (namely, that $|M_0|$ is a $\nicefrac{1}{2}$-approximation for $|M^*|$ because $M_0$ is a maximal matching).
\end{proof}

Theorem~\ref{thm:improvement_paths_triangle_free} now follows from Corollaries~\ref{cor:space_complexity} and~\ref{cor:approximation_three_passes}.
\section{Three-Pass Algorithm for General Graphs} \label{sec:general}

In this section we prove Theorem~\ref{thm:improvement_paths_general}, which we repeat here for convenience.
\ThmImprovementPathsGeneral*

The algorithm that we use to prove Theorem~\ref{thm:improvement_paths_general} is given as Algorithm~\ref{alg:algo_general}. Since this algorithm is very similar to the algorithm {\trianglefreealg} presented in Section~\ref{sec:triangle_free}, we use below the terminology and notation defined in the last section.

Intuitively, the reason why {\trianglefreealg} does not apply to general graphs is that given an edge $(a, b) \in M_0$, a wing $(u, a) \in W_M$ and a wing $(b, v) \in W$, we are not guaranteed that these three edges form an augmenting path for the matching $M_0$ because they might represent a triangle. To overcome this hurdle, Algorithm~\ref{alg:algo_general} constructs two sets of edges in its second pass: a set $W_1$ constructed exactly like the set $W$ in {\trianglefreealg}, and a set $W_2$ constructed in the same way, but while excluding the edges of $W_1$. Since $W_1$ and $W_2$ are disjoint, given an edge $(a, b) \in M_0$ and a wing $(u, a) \in W_M$, at most one of the sets $W_1$ or $W_2$ can contain a wing that forms a triangle together with these two edges, which intuitively allows us to bound the deterioration in the approximation guarantee resulting from the existence of such triangles.

\begin{algorithm}[th]
\caption{\textsc{Maximum Matching via Augmenting Paths -- General Graphs}} \label{alg:algo_general}
\tcp{First Pass}
Let $M_0 \gets \varnothing$.\\
\For{every edge $e$ that arrives}
{
	Add $e$ to $M_0$ if it does not intersect any edge that already belongs to $M_0$.
}
\BlankLine
\tcp{Second Pass}

Let $W_1 \gets \varnothing$, $W_2 \gets \varnothing$.\\
\For{every edge $e$ that arrives}
{
	\If{$e$ intersects exactly one vertex $u \in V(M_0)$}
	{
		Let $v$ denote the other end-point of $e$ (\ie, the end-point that is not $u$).\\
		\If{$\deg_{W_1}(u) < 1$ and $\deg_{W_1}(v) < 2$}
		{
			Add $e$ to $W_1$.
		}
		\ElseIf{$\deg_{W_2}(u) < 1$ and $\deg_{W_2}(v) < 2$}
		{
			Add $e$ to $W_2$.
		}

	}
}

\BlankLine

\tcp{Post-processing}
Let $G_A$ be a multi-graph over the vertices $V \setminus V(M_0)$. For every path $P_{u, v}$ of length $3$ in $W_1 \cup W_2 \cup M_0$ between two distinct vertices $u, v \in V \setminus V(M_0)$, we add an edge $(u, v)$ to the graph $G_A$. \tcp{This is a multi-graph because there might be multiple such paths between a pair of vertices of $V \setminus V(M_0)$.}
Find a maximum size matching $M_A$ in $G_A$.\\
Let $\cP_1 \gets \{P_{u, v} \mid (u, v) \in M_A\}$.

\BlankLine
\tcp{Third Pass}

Let $\cP_2 \gets \varnothing$.\\
\For{every edge $w'$ that arrives}
{
	\If{there exist $4$ vertices $u, a, b, v \in V \setminus (V(\cP_1) \cup V(\cP_2))$ such that: (i) $u \not \in V(M_0)$, (ii) $w' = (u, a)$, (iii) $(a, b) \in M_0$, (iv) $(b, v) \in W_1 \cup W_2 $ and (v) $u \neq v$}
	{
		Add the path $u, a, b, v$ to $\cP_2$. 
	}
}

\BlankLine

\Return{$M_0 \oplus \left(\bigcup_{P \in \cP_1 \cup \cP_2} E(P)\right)$}.
\end{algorithm}

We note that the analysis of {\trianglefreealg} up to Lemma~\ref{lem:W_lower_bound} applies to Algorithm~\ref{alg:algo_general} with two differences.
\begin{itemize}
	\item The proof of Corollary~\ref{cor:space_complexity} upper bounds by $n$ the size of the set $W$ (recall that this set is identical to the set $W_1$ in Algorithm~\ref{alg:algo_general}). To make this proof apply to Algorithm~\ref{alg:algo_general} as well, we need to observe that the size of the set $W_2$ is at most $n$ due to the same argument. In particular, this implies that Algorithm~\ref{alg:algo_general} is a semi-streaming algorithm.
	\item Lemma~\ref{lem:W_lower_bound} provides a lower bound on the size of the set $W$, which translates into an identical lower bound on the size of the corresponding set $W_1$ in Algorithm~\ref{alg:algo_general}.
\end{itemize}
 
In the rest of this section, it will be convenient to work with the set $W'_2$ constructed by Algorithm~\ref{alg:W'_2} (note that Algorithm~\ref{alg:W'_2} is used for analysis purposes only). Intuitively, $W'_2$ is constructed in the same general way in which $W_1$ and $W_2$ are constructed; however, while all the edges of the input stream are considered in the construction of $W_1$, and only the edges of $E \setminus W_1$ are considered in the construction of $W_2$, the construction of $W'_2$ takes into account the edges of $(E \setminus W_1) \cup W_M$.

\begin{algorithm}
\caption{Construction of $W'_2$}\label{alg:W'_2}
Let $W'_2 \gets W_2$.\\
\For{every edge $(u, v) \in W_1 \cap W_M$}
{
	Assume without loss of generality that $u$ is the end point of $(u, v)$ that belongs to $V(M_0)$.\\
	\If{$\deg_{W'_2} (u) < 1$ and $\deg_{W'_2} (v) < 2$}
	{
		Add $(u,v)$ to $W'_2$.
	}
}
\end{algorithm}

Since $W'_2$ is a subset of $W_1 \cup W_2$ by construction, the set $W_1 \cup W_2$ that is often referred to by Algorithm~\ref{alg:algo_general} is identical to the set $W_1 \cup W'_2$. Furthermore, one can observe that the lower bound proved by Lemma~\ref{lem:W_lower_bound} for $W_1$ applies also to $W'_2$ because all the edges of $W_M$ are considered for addition to $W'_2$ at some point (either during the construction of $W_2$ or in Algorithm~\ref{alg:W'_2}). This implies the following observation.

\begin{observation} \label{obs:two_W_size}
$|W_1| + |W'_2| \geq \tfrac{4}{3}|W_M|$.
\end{observation}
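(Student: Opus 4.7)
My plan is to prove the bound by establishing $|W_1| \geq \tfrac{2}{3}|W_M|$ and $|W'_2| \geq \tfrac{2}{3}|W_M|$ separately and then summing. The first inequality is immediate: the construction of $W_1$ inside Algorithm~\ref{alg:algo_general} is word-for-word identical to the construction of $W$ inside \trianglefreealg{}, so Lemma~\ref{lem:W_lower_bound} applies verbatim and yields $|W_1| \geq \tfrac{2}{3}|W_M|$. All the work therefore concentrates on reproducing the same bound for $W'_2$.

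For the $W'_2$ bound I would re-run the proof of Lemma~\ref{lem:W_lower_bound} with $W'_2$ in place of $W$. The key structural ingredient that must be re-verified is the ``failure implication'' used there: if some wing $w \in W_M$ does not end up in $W'_2$, then at the moment $w$ was offered for insertion the relevant degree condition was already violated. This is the one place where the two-stage construction of $W'_2$ needs care, and I expect it to be the main point of the proof. The verification splits on whether $w \in W_1$: if $w \notin W_1$ then $w$ was offered to $W_2$ during the second pass of Algorithm~\ref{alg:algo_general}, and the degree check there was made against $W_2 \subseteq W'_2$; if $w \in W_1 \cap W_M$ then $w$ was offered inside Algorithm~\ref{alg:W'_2}, where the check is made directly against (a subset of) $W'_2$. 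In either case, because edges are never removed from $W'_2$, a violation at offer-time persists as a violation in the final $W'_2$ on the same endpoint.

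Given this, the remainder of the argument is a transcription of Lemma~\ref{lem:W_lower_bound}. Define $I = V(M_0) \cap V(W_M)$ and let $I'_F \subseteq I$ be the vertices of $I$ not touched by any edge of $W'_2$. For each $a \in I'_F$, denote by $w(a) \in W_M$ the wing incident to $a$ and by $u(a) \in V \setminus V(M_0)$ its other endpoint; since $\deg_{W'_2}(a) = 0$, the failure implication forces $\deg_{W'_2}(u(a)) = 2$. Since $W_M \subseteq M^*$ is a matching the vertices $\{u(a)\}_{a \in I'_F}$ are distinct, and since every wing of $W'_2$ has a unique non-$V(M_0)$-endpoint this gives $|W'_2| \geq \sum_{a \in I'_F}\deg_{W'_2}(u(a)) = 2|I'_F|$. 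Separately, each vertex of $I \setminus I'_F$ accounts for a distinct edge of $W'_2$ through its $V(M_0)$-endpoint, so $|W'_2| \geq |I| - |I'_F| = |W_M| - |I'_F|$. Adding twice the second bound to the first yields $|W'_2| \geq \tfrac{2}{3}|W_M|$, and summing with the corresponding bound on $|W_1|$ completes the proof.
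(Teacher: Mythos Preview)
Your proposal is correct and follows exactly the approach the paper takes: the paper justifies the observation by remarking (in the paragraph preceding it) that the argument of Lemma~\ref{lem:W_lower_bound} applies to $W'_2$ because every edge of $W_M$ is offered for insertion into $W'_2$ at some point, either during the construction of $W_2$ or inside Algorithm~\ref{alg:W'_2}. You have simply supplied the detailed verification of that remark---in particular the case split on whether the rejected wing lies in $W_1$---that the paper leaves implicit.
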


We now define a multi-set $\cP'$ similar to the set of the same name used in the analysis of {\trianglefreealg}. Specifically, $\cP'$ includes every triangle or path obtained by combining an edge $(u, a) \in W_M$, an edge $(a, b) \in M_0$ and an edge $(b, v)$ of either $W_1$ or $W'_2$. Moreover, if there are multiple options to obtain a path or triangle in this way, then the multiplicity of the path or triangle in $\cP'$ will be equal to the number of these options. To make this point clearer, we provide a pseudocode for constructing $\cP'$ as Algorithm~\ref{alg:P_prime} (again, Algorithm~\ref{alg:P_prime} is used for analysis purposes only).

\begin{algorithm}
\caption{Construction of $\cP'$}\label{alg:P_prime}
Let $\cP' \gets \varnothing$.\\
\For{every edge $(u, a) \in W_M$}
{
	\For{every edge $(a, b) \in M_0$}
	{
		\For{every edge $(b, v) \in W_1$}
		{
			Add the path/triangle $(u, a), (a, b), (b, v)$ to $\cP'$. \label{line:W_1_additions}
		}
		\For{every edge $(b, v) \in W'_2$}
		{
			Add the path/triangle $(u, a), (a, b), (b, v)$ to $\cP'$. \label{line:W'_2_additions}
		}
	}
}
\end{algorithm}

\begin{observation} \label{obs:cPprime_size_general}
$|\cP'| \geq \tfrac{20}{3}|M^*| - \tfrac{32}{3}|M_0|$.
\end{observation}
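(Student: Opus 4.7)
The plan is to adapt the proof of Observation~\ref{obs:cPprime_size} (the triangle-free analogue) to the setting where paths are built using wings from both $W_1$ and $W'_2$. The extra factor of two in the bound will come from having two distinct "wing" sets available at the last step.

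First I would fix notation. Each edge of $W_M$ is a wing of $M^*$, and therefore has exactly one endpoint in $V(M_0)$; call it the \emph{anchor} of the wing. Since $W_M \subseteq M^*$ is a matching, the anchors of distinct $W_M$-edges are distinct, giving a set $I \subseteq V(M_0)$ with $|I| = |W_M|$. For each $a \in V(M_0)$, let $b(a)$ denote the $M_0$-partner of $a$, and set $B = \{b(a) : a \in I\}$. The map $b$ is an involution on $V(M_0)$, so $|B| = |I| = |W_M|$.

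Next I would rewrite $|\cP'|$ exactly. In Algorithm~\ref{alg:P_prime}, for each $W_M$-edge $(u,a)$ the middle loop produces the unique $(a,b(a)) \in M_0$, and the inner loops then add one path per incident $W_1$-edge at $b(a)$ and one path per incident $W'_2$-edge at $b(a)$. Because $b(a) \in V(M_0)$ and the construction of $W_1$ (respectively, of $W'_2$ via Algorithm~\ref{alg:W'_2}) enforces $\deg_{W_1}(b(a)) \le 1$ and $\deg_{W'_2}(b(a)) \le 1$, these degrees are just indicator values, and
\begin{align*}
	|\cP'|
	={} &
	\sum_{a \in I}\bigl[\deg_{W_1}(b(a)) + \deg_{W'_2}(b(a))\bigr]
	=
	|B \cap V(W_1)| + |B \cap V(W'_2)|
	\enspace.
\end{align*}

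Finally I would apply inclusion--exclusion twice inside $V(M_0)$, using $|V(W_1) \cap V(M_0)| = |W_1|$ and $|V(W'_2) \cap V(M_0)| = |W'_2|$ (each $W_1$- or $W'_2$-edge is a wing contributing exactly one $V(M_0)$-endpoint). This gives
\begin{align*}
	|\cP'|
	\geq{} &
	\bigl(|B| + |W_1| - |V(M_0)|\bigr) + \bigl(|B| + |W'_2| - |V(M_0)|\bigr) \\
	={} &
	2|W_M| + (|W_1| + |W'_2|) - 4|M_0|
	\enspace.
\end{align*}
Plugging in Observation~\ref{obs:two_W_size} ($|W_1| + |W'_2| \ge \tfrac{4}{3}|W_M|$) and then Observation~\ref{obs:W_M_size} ($|W_M| \ge 2(|M^*| - |M_0|)$) yields $|\cP'| \ge \tfrac{10}{3}|W_M| - 4|M_0| \ge \tfrac{20}{3}|M^*| - \tfrac{32}{3}|M_0|$, as required.

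There is no real obstacle here; the only subtlety is to verify that $W'_2$ inherits the lower bound argument of Lemma~\ref{lem:W_lower_bound} (so that Observation~\ref{obs:two_W_size} is legitimate) and that the anchor-endpoints of $W_M$ are genuinely distinct, so that the sum over $a \in I$ really counts distinct contributions to $B$ and no double counting inflates the bound.
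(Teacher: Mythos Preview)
Your proof is correct and follows essentially the same approach as the paper: both arrive at the intermediate bound $|\cP'| \geq 2|W_M| + |W_1| + |W'_2| - 2|V(M_0)|$ via the inclusion--exclusion argument from Observation~\ref{obs:cPprime_size} applied once for $W_1$ and once for $W'_2$, and then plug in Observations~\ref{obs:two_W_size} and~\ref{obs:W_M_size}. Your write-up is in fact a bit more explicit about why the anchors are distinct and why the $V(M_0)$-degrees are indicator values, but the underlying argument is identical.
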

\begin{proof}
Repeating the proof of Observation~\ref{obs:cPprime_size}, we get that at least $|W_1| + |W_M| - |V(M_0)|$ paths are added to $\cP'$ in Line~\ref{line:W_1_additions} of Algorithm~\ref{alg:P_prime}, and at least $|W'_2| + |W_M| - |V(M_0)|$ paths are added to $\cP'$ in Line~\ref{line:W'_2_additions} of Algorithm~\ref{alg:P_prime}. Therefore,
\begin{align*}
	|\cP'|
	\geq{} &
	|W_1| + |W'_2| + 2|W_M| - 2|V(M_0)|
	\geq
	(\tfrac{4}{3} + 2)|W_M| - 2|V(M_0)|\\
	\geq{} &
	2(\tfrac{4}{3} + 2)(|M^*| - |M_0|) - 2|V(M_0)|
	=
	\tfrac{20}{3}|M^*| - \tfrac{32}{3}|M_0|
	\enspace,
\end{align*}
where the second inequality follows from Observation~\ref{obs:two_W_size}, and the last inequality follows from Observation~\ref{obs:W_M_size}.
\end{proof}

An element (path or triangle) of $\cP'$ has a potential to be added to $\cP_2$ by Algorithm~\ref{alg:algo_general} only if it is a path (\ie, not a triangle) and none of its vertices appears in $\cP_1$. Let $\cP''$ be the multi-set of such paths. The following lemma lower bounds the size of $\cP''$.
\begin{lemma}\label{lem:cP''_size}
$|\cP''| \geq |\cP'| - 12|\cP_1| - |M_0| \geq \tfrac{20}{3}|M^*| - \tfrac{35}{3}|M_0| - 12|\cP_1|$.
\end{lemma}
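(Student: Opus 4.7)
The second inequality of the lemma is obtained by substituting the bound from Observation~\ref{obs:cPprime_size_general} into the first inequality, so I focus on proving the first. Since $\cP''$ consists of those elements of $\cP'$ that are both paths (not triangles) and do not intersect $V(\cP_1)$, the difference $|\cP'| - |\cP''|$ equals the number of triangles in $\cP'$ plus the number of paths of $\cP'$ sharing a vertex with $V(\cP_1)$. The plan is to bound this by partitioning the contributing elements according to whether one of their endpoints in $V \setminus V(M_0)$ lies in $V(\cP_1)$.

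For the elements whose $V \setminus V(M_0)$-endpoint lies in $V(\cP_1)$, I would adapt the vertex-by-vertex count from the proof of Lemma~\ref{lem:cPDoublePrime_size}. A vertex $v \in V \setminus V(M_0)$ is the last vertex of at most $\deg_{W_1}(v) + \deg_{W'_2}(v) \le 4$ elements of $\cP'$, and the first vertex of at most $\deg_{W_1}(b) + \deg_{W'_2}(b) \le 2$ elements (where $b$ is determined by the unique $W_M$-edge at $v$ and its adjacent $M_0$-edge), so it is an endpoint of at most $6$ elements. Summing over the $2|\cP_1|$ endpoints of $\cP_1$-paths in $V \setminus V(M_0)$ gives a bound of $12|\cP_1|$. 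To complete the handling of paths in this class, I also need an analog of the internal-intersection claim from Lemma~\ref{lem:cPDoublePrime_size}: that any path in $\cP'$ intersecting some $P_1 \in \cP_1$ at an internal vertex must in fact share an endpoint with $P_1$. This step is delicate in the general setting because a vertex of $V(M_0)$ may now carry two wings in $W_1 \cup W'_2$, so the direct argument from Section~\ref{sec:triangle_free} must be refined by case analysis exploiting $W_1 \cap W_2 = \varnothing$.

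For the remaining elements---triangles whose third (and only) endpoint in $V \setminus V(M_0)$ is outside $V(\cP_1)$---I would show that for every $(a,b) \in M_0$ at most one such triangle uses $(a,b)$ as its middle edge, yielding the $|M_0|$ bound. Indeed, for a fixed $(a,b) \in M_0$ there are only two candidate triangles in $\cP'$: one with third vertex $u_a$ equal to the unique $W_M$-neighbor of $a$, and one with third vertex $u_b$ equal to the unique $W_M$-neighbor of $b$. Suppose toward contradiction that both candidates are present with $u_a, u_b \notin V(\cP_1)$. Then the length-$3$ path $u_a, b, a, u_b$ lies entirely in $W_1 \cup W_2 \cup M_0$ (its end-edges $(b,u_a)$ and $(a,u_b)$ are in $W_1 \cup W'_2 = W_1 \cup W_2$ by the assumed triangle structure, and its middle edge is $(a,b)$), and its endpoints $u_a \neq u_b$ both lie in $V \setminus V(M_0)$ (the equality $u_a = u_b$ would place two distinct $W_M$-edges at the same vertex, contradicting that $W_M \subseteq M^*$ is a matching). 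Therefore $(u_a, u_b) \in E(G_A)$, and since $V(M_A) \subseteq V(\cP_1)$ neither endpoint is covered by $M_A$, contradicting the maximality of $M_A$.

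Combining the two contributions yields $|\cP'| - |\cP''| \le 12|\cP_1| + |M_0|$, which is the first inequality. I expect the main obstacle to be the generalization of the internal-intersection step for paths: the single-wing structure that powered the analogous argument in the triangle-free setting no longer holds, and must be replaced by a case analysis relying on the disjointness of $W_1$ and $W_2$.
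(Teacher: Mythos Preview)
Your proposal follows essentially the same route as the paper: bound the elements of $\cP'$ that meet $V(\cP_1)$ by $12|\cP_1|$ via a per-endpoint count, and bound the remaining triangles by $|M_0|$ using the maximality of $M_A$. The paper gets the $12|\cP_1|$ term by invoking the proof of Lemma~\ref{lem:cPDoublePrime_size} separately for the $W_1$- and the $W'_2$-contributions ($6|\cP_1|+6|\cP_1|$); you merge the two and count six elements per endpoint directly. Your contradiction for the triangle bound via the path $u_a,b,a,u_b$ is exactly the paper's second case ($e_1\neq e_2$). As for the internal-intersection step you flag as delicate, the paper supplies no more detail there than you do: it simply writes ``repeating the proof of Lemma~\ref{lem:cPDoublePrime_size}''.

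There is one step you assert without justification. You claim that through a fixed $(a,b)\in M_0$ there are ``only two candidate triangles in $\cP'$'', one per choice of $W_M$-neighbor. Since $\cP'$ is a \emph{multi}-set, a single underlying triangle may appear twice---once with its third edge counted in $W_1$ and once in $W'_2$---so a priori there could be up to four. The paper handles this explicitly (its case $e_1=e_2$): the third edge $(b,u_a)$ cannot lie in $W_1\cap W'_2$, because $W_1\cap W_2=\varnothing$ and $W'_2\setminus W_2\subseteq W_1\cap W_M$ force $W_1\cap W'_2\subseteq W_M$; but $u_a$ already carries the $W_M$-edge $(u_a,a)$, so $(b,u_a)\in W_M$ would put two $M^*$-edges at $u_a$. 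You need this observation (or an equivalent) before your $u_a\neq u_b$ argument completes the $|M_0|$ bound.
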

\begin{proof}
The second inequality of the lemma follows from Observation~\ref{obs:cPprime_size_general}, and therefore, we concentrate on proving the first inequality. Let $\tilde{\cP}'$ be the multi-set of paths/triangles from $\cP'$ that do not intersect any vertex of $\cP_1$. Repeating the proof of Lemma~\ref{lem:cPDoublePrime_size}, we get that $\tilde{\cP}'$ contains all the paths/triangles added to $\cP'$ by Line~\ref{line:W_1_additions} of Algorithm~\ref{alg:P_prime} except for up to $6|\cP_1|$ paths/triangles, and the same is true for the paths/triangles added to $\cP'$ by Line~\ref{line:W'_2_additions} of Algorithm~\ref{alg:P_prime}. Since every path/triangle in $\cP'$ was added to this mutli-set by either Line~\ref{line:W_1_additions} or Line~\ref{line:W'_2_additions} of Algorithm~\ref{alg:P_prime}, we get
\[
	|\tilde{\cP}'|
	\geq
	|\cP'| - 12|\cP_1|
	\enspace.
\]

Since $\cP''$ includes every path of $\tilde{\cP}'$, to complete the proof of the lemma it remains to show that $\tilde{\cP}'$ contains at most $|M_0|$ triangles. To see that this is indeed the case, we recall that every triangle (or path) in $\tilde{\cP}'$ must include a single edge of $M_0$, and we claim that no two triangles in $\tilde{\cP}'$ can share this edge (and therefore, the number of triangles is upper bounded by the number of edges in $M_0$). Assume towards a contradiction that this claim does not hold, \ie, that there exist two triangles $T_1,T_2 \in \tilde{\cP}'$ sharing an edge $e \in M_0$. Each one of these triangles must include one edge of $W_M$. Let $e_1$ and $e_2$ denote the edges of $W_M$ in $T_1$ and $T_2$, respectively, and let $e'_1$ the single edge of $T_1$ which is not $e$ or $e_1$ and $e'_2$ be the single edge of $T_2$ which is not either $e$ or $e_2$. We now need to consider two cases. The first case is when $e_1 = e_2$. In this case $e'_1$ and $e'_2$ must be also identical, and cannot belong to $W_M$ because $e_1 = e_2$ belongs to $W_M$ and $W_M$ is a subset of the matching $M^*$. However, this leads to a contradiction because one of the edges $e'_1$ or $e'_2$ must belong to $W_1$, and the other of these edges must belong to $W'_2$, and the sets $W_1$ and $W'_2$ can intersect only on edges of $W_M$.

It remains to consider the case in which $e_1 \neq e_2$. Let $u_1, u_2$ be the end-points of these edges, respectively, that do not belong to the edge $e$ of $M_0$. Since $e_1 \neq e_2$ are edges of the $W_M$, which is a subset of the matching $M^*$, $u_1$ and $u_2$ must be distinct. Consider now the path $e'_1, e, e'_2$. One can observe that this is indeed a path because (i) $u_1 \neq u_2$ and (ii) the fact that $e_1$ and $e_2$ are vertex disjoint implies that $e'_1$ and $e'_2$ intersect different end-points of $e$. Furthermore, since $T_1, T_2 \in \tilde{\cP}'$, this path does not intersect any vertex of $\cP_1$, and thus, its existence contradicts the maximality of the matching $M_A$ constructed by Algorithm~\ref{alg:algo_general} because both $e'_1$ and $e'_2$ belong to $W_1 \cup W'_2 = W_1 \cup W_2$.
\end{proof}

We are now ready to lower bound the number of augmenting paths found by Algorithm~\ref{alg:algo_general} during its third pass.
\begin{lemma}\label{lem:P_2_size_general}
$|P_2| \geq |\cP''|/12 \geq \tfrac{5}{9}|M^*| - \tfrac{35}{36}|M_0| - |\cP_1|$.
\end{lemma}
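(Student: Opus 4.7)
The lemma has two inequalities. The second is immediate from Lemma~\ref{lem:cP''_size}: dividing the bound $|\cP''| \geq \tfrac{20}{3}|M^*| - \tfrac{35}{3}|M_0| - 12|\cP_1|$ by $12$ yields exactly $\tfrac{5}{9}|M^*| - \tfrac{35}{36}|M_0| - |\cP_1|$.

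For the first inequality, my plan is to mimic the charging argument used in Lemma~\ref{lem:cP2_size} for the triangle-free case. For every (copy of a) path $P'' \in \cP''$, I would charge a cost of $1$ to some path of $\cP_2$ that intersects $P''$. The existence of this target path is argued as follows: let $e_M \in W_M$ be the first edge of $P''$; this is itself a wing, so when it arrives during the third pass, the conditional check in Algorithm~\ref{alg:algo_general} is tested on $P''$. If $P''$ is still vertex-disjoint from the current $\cP_2$ at that moment, then Algorithm~\ref{alg:algo_general} adds either $P''$ or another path containing $e_M$ to $\cP_2$, and this path intersects $P''$. Otherwise, $P''$ is already blocked by a path of $\cP_2$ sharing a vertex with it. Since the total charge equals $|\cP''|$, the desired inequality $|\cP_2| \geq |\cP''|/12$ reduces to bounding the charge received by each $P_2 \in \cP_2$ by $12$.

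To obtain this per-$P_2$ bound, I plan to combine two ingredients, each a general-case analogue of the triangle-free proof. The first is an observation drawn from the maximality of $M_A$ in $G_A$: no edge $e \in M_0$ can be connected by two wings $w_1, w_2 \in W_1 \cup W_2$ lying at \emph{different} ends of $e$ to distinct vertices of $V \setminus (V(M_0) \cup V(\cP_1))$, since the length-$3$ path $w_1, e, w_2$ would otherwise be an edge of $G_A$ augmentable into $M_A$. The second is an upgrade of the per-vertex count from the proof of Lemma~\ref{lem:cPDoublePrime_size}: every vertex $u \in V \setminus V(M_0)$ is contained, counted with multiplicity in $\cP'$, in at most $6$ paths---at most $2$ as the $W_M$-side endpoint (unique $W_M$-wing times the $\leq 2$ multiplicity of the $W_1 \cup W'_2$-wing at the far $V(M_0)$-vertex, since $\deg_{W_1}+\deg_{W'_2} \leq 1+1 = 2$ there) and at most $4$ as the $W_1 \cup W'_2$-side endpoint (since $\deg_{W_1}(u) + \deg_{W'_2}(u) \leq 2+2 = 4$). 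Given these ingredients, one argues that any $P'' \in \cP''$ intersecting $P_2$ must ultimately share an endpoint with $P_2$: any intersection on an internal vertex forces the shared middle $M_0$-edge, and then the observation (combined with the structural fact that a path $P'' \in \cP''$ cannot have its $W_M$-wing in $W_1 \cup W_2$, for otherwise the same observation would force $u_{P''} = v_{P''}$, contradicting that $P''$ is a simple path) matches up the wings of $P''$ and $P_2$ on the shared edge so that an endpoint is shared. Since $P_2$ has two endpoints, each contained in at most $6$ paths of $\cP'$, the charge received by $P_2$ is at most $2 \cdot 6 = 12$.

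The main obstacle, relative to the triangle-free setting of Lemma~\ref{lem:cP2_size}, is justifying the step that internal intersections imply endpoint intersections. There the constraint $\deg_W(\cdot) \leq 1$ on $V(M_0)$ made the two wings on the shared middle $M_0$-edge literally coincide, trivializing the claim. Here $\deg_{W_1 \cup W_2}(\cdot) \leq 2$ allows two distinct wings of $W_1 \cup W_2$ at the \emph{same} end of the shared edge, and the maximality observation only restricts configurations with wings at opposite ends. The hard part of the write-up is therefore a careful case analysis on the orientation of $P''$'s $W_1 \cup W'_2$-wing relative to that of $P_2$: the opposite-end sub-cases are handled directly by the observation, whereas the same-end sub-cases are resolved using the additional restriction that a path of $\cP''$ whose $W_M$-wing lies in $W_1 \cup W_2$ would itself contradict the observation (by collapsing its two endpoints). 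Once this orientation-by-orientation bookkeeping is in place, the $2 \times 6 = 12$ counting delivers the claimed bound.
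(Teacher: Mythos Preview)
Your overall approach coincides with the paper's: the paper's one-line proof splits $\cP''$ by whether the last edge was contributed by Line~\ref{line:W_1_additions} ($W_1$) or Line~\ref{line:W'_2_additions} ($W'_2$) and asserts that each half charges any $P_2\in\cP_2$ at most $6$, exactly as in Lemma~\ref{lem:cP2_size}. Your unified per-vertex count of $6$ is just $(1+2)+(1+2)$ for the two halves, so it is the same decomposition in disguise. The second inequality, the existence of a charging target, the per-vertex count, and the opposite-end sub-case of the endpoint-reduction step are all handled correctly.

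The gap is in your resolution of the same-end sub-case. The structural fact you invoke---that a path $P''\in\cP''$ cannot have its $W_M$-wing in $W_1\cup W_2$---is true, but it is the \emph{obstruction} rather than the fix: it is precisely because $(u,a')\notin W_1\cup W_2$ that the maximality-of-$M_A$ observation cannot be applied to this wing to force a shared endpoint. Concretely, take $e'=(a',b')\in M_0$, the unique $W_M$-wing $(u_a,a')\notin W_1\cup W_2$, a $W_1$-wing $(b',v_1)$ and a $W_2$-wing $(b',v')$ at $b'$ with $v_1\ne v'$, no $(W_1\cup W_2)$-wing at $a'$ landing outside $V(\cP_1)$, and let $P_2=(u',a'),e',(b',v')$ be added upon arrival of an edge $(u',a')$ that need not lie in $W_1\cup W_2$. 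Then $P''=(u_a,a'),e',(b',v_1)\in\cP''$ intersects $P_2$ only on $\{a',b'\}$ even when $u_a,v_1,u',v'$ are pairwise distinct; nothing in your argument forces a shared endpoint here, and without the endpoint reduction the $2\times 6$ accounting does not close. The paper's sketch (``very similar to Lemma~\ref{lem:cP2_size}'') glosses over exactly this point, so the difficulty is not unique to your write-up---but since you explicitly claim to resolve the sub-case, you need either a genuine argument for it or a refined charge accounting that shows this extra middle-edge path is absorbed elsewhere in the $12$-budget.
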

\begin{proof}
The proof of the lemma is very similar to the proof of Lemma~\ref{lem:cP2_size}, except that now every path of $\cP_2$ might get a charge of up to $12$ because the paths of $\cP''$ originally added to $\cP'$ by Line~\ref{line:W_1_additions} of Algorithm~\ref{alg:P_prime} can contribute up to $6$ to this charge, and the same goes for the paths of $\cP''$ originally added to $\cP'$ by Line~\ref{line:W'_2_additions} of this algorithm.
\end{proof}

Theorem~\ref{thm:improvement_paths_general} now follows from Corollary~\ref{cor:space_complexity} and the next corollary.
\begin{corollary}
The size of the matching produced by Algorithm~\ref{alg:algo_general} is at least $(\tfrac{1}{2} + \tfrac{1}{14.4})|M^*|$.
\end{corollary}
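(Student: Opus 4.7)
The plan is to combine Lemma~\ref{lem:P_2_size_general} with the fact that $M_0$ is a maximal matching and hence gives a $\nicefrac{1}{2}$-approximation on its own, in exactly the same spirit as Corollary~\ref{cor:approximation_three_passes} did for the triangle-free case.

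First, I would observe that the output size of Algorithm~\ref{alg:algo_general} is $|M_0| + |\cP_1| + |\cP_2|$. This is the direct analog of Lemma~\ref{lem:basic_matching_size} applied to Algorithm~\ref{alg:algo_general}: the condition $u \neq v$ added explicitly to the third-pass rule, together with the fact that $u \not\in V(M_0)$, guarantees that every element of $\cP_2$ is a genuine augmenting path (and not a triangle) with respect to $M_0$; the paths of $\cP_1$ are built by finding a matching $M_A$ in the auxiliary multigraph $G_A$ whose vertices lie in $V \setminus V(M_0)$, exactly as in the triangle-free algorithm, so they are augmenting paths for $M_0$ as well. Vertex-disjointness of $\cP_1 \cup \cP_2$ follows by the very same argument used to prove Lemma~\ref{lem:basic_matching_size}: end-points of paths of $\cP_1$ are distinct because $M_A$ is a matching on $V \setminus V(M_0)$; internal nodes are disjoint because each internal vertex $a \in V(M_0)$ has $\deg_{W_1 \cup W_2}(a) \leq 2$ with at most one edge from each of $W_1, W_2$, and the construction of $G_A$ plus maximality of $M_A$ still forces any shared internal edge to determine the path uniquely; and the condition defining $\cP_2$ explicitly excludes vertices of $V(\cP_1) \cup V(\cP_2)$.

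Next, I would plug Lemma~\ref{lem:P_2_size_general} into this identity:
\begin{align*}
|M_0| + |\cP_1| + |\cP_2|
&\geq |M_0| + |\cP_1| + \tfrac{5}{9}|M^*| - \tfrac{35}{36}|M_0| - |\cP_1| \\
&= \tfrac{5}{9}|M^*| + \tfrac{1}{36}|M_0|.
\end{align*}
Since $M_0$ is a maximal matching of $G$, we have $|M_0| \geq |M^*|/2$, so the right-hand side is at least
\begin{align*}
\tfrac{5}{9}|M^*| + \tfrac{1}{72}|M^*|
= \tfrac{41}{72}|M^*|
= \left(\tfrac{1}{2} + \tfrac{5}{72}\right)|M^*|
= \left(\tfrac{1}{2} + \tfrac{1}{14.4}\right)|M^*|.
\end{align*}

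There is no real obstacle in this final step; all of the hard work has already been done by Lemma~\ref{lem:P_2_size_general}. The only mildly non-trivial point is verifying that the matching identity $|M_0| + |\cP_1| + |\cP_2|$ still holds for Algorithm~\ref{alg:algo_general}, but this is an easy adaptation of Lemma~\ref{lem:basic_matching_size} using the two sets $W_1, W_2$ in place of $W$ and the explicit $u \neq v$ check that rules out triangles in general graphs.
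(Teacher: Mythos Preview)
Your proposal is correct and takes essentially the same approach as the paper: invoke Lemma~\ref{lem:basic_matching_size} to get that the output has size $|M_0|+|\cP_1|+|\cP_2|$, plug in Lemma~\ref{lem:P_2_size_general}, and finish with $|M_0|\ge \tfrac{1}{2}|M^*|$, arriving at $\tfrac{5}{9}|M^*|+\tfrac{1}{36}|M_0|\ge \tfrac{41}{72}|M^*|=(\tfrac{1}{2}+\tfrac{1}{14.4})|M^*|$. The paper simply cites Lemma~\ref{lem:basic_matching_size} directly (having stated earlier in the section that the analysis of {\trianglefreealg} through Lemma~\ref{lem:W_lower_bound} carries over to Algorithm~\ref{alg:algo_general}), whereas you spell out why the vertex-disjointness argument adapts; this is extra detail rather than a different route.
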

\begin{proof}
By Lemma~\ref{lem:basic_matching_size}, the size of the matching produced by Algorithm~\ref{alg:algo_general} is at least
\[
	|M_0| + |\cP_1| + |\cP_2|
	\geq
	\tfrac{5}{9}|M^*| + \tfrac{1}{36}|M_0|
	\geq
	\tfrac{5}{9}|M^*| + \tfrac{1}{72}|M^*|
	=
	(\tfrac{1}{2} + \tfrac{1}{14.4})|M^*|
	\enspace,
\]
where the first inequality holds by Lemma~\ref{lem:P_2_size_general}, and the second inequality holds since $M_0$ (as a maximal matching) is of size at least $\frac{1}{2}|M^*|$.
\end{proof}

\appendix

\section{Three-Pass Non-{\MMF} Algorithm} \label{app:three_pass}

In this section we prove Theorem~\ref{thm:three_pass}, which we repeat below for convenience.
\ThmThreePass*

The algorithm used for proving Theorem~\ref{thm:three_pass} is a modified version of Algorithm~\ref{alg:triangles} that appears as Algorithm~\ref{alg:3_passes} and manages to obtain an improved approximation ratio at the cost of making an additional pass (\ie, it makes $3$ passes). The first pass of Algorithm~\ref{alg:3_passes} is identical to the first pass of Algorithm~\ref{alg:triangles}, however, the second and third passes of Algorithm~\ref{alg:3_passes} each consider only one of the two kinds of edges considered together in the second pass of Algorithm~\ref{alg:triangles}. To describe this in more details we use the terminology defined in Section~\ref{sec:two_passes} for describing Algorithm~\ref{alg:triangles}. In the second pass of Algorithm~\ref{alg:3_passes}, we construct a set $A_1$ in the same way in which this is done by Algorithm~\ref{alg:triangles}, \ie, by greedily adding to $A_1$ edges that connect a connection vertex of a {\naive} partial triangle with an isolated vertex. Then, in the third pass of Algorithm~\ref{alg:3_passes}, we greedily collect into another set, termed $A_2$, edges that connect connection vertices of two distinct {\naive} partial triangles. We stress that the construction of $A_2$ by Algorithm~\ref{alg:3_passes} is slightly different compared to the construction of the set carrying the same name in Algorithm~\ref{alg:triangles}. 
Upon termination of its third pass, Algorithm~\ref{alg:3_passes} outputs a maximum matching in the set of all the edges that it kept.

\begin{algorithm}[th]
\caption{\textsc{Maximum Matching via Greedy Triangles - 3 passes}} \label{alg:3_passes}
\tcp{First Pass}
Let $P \gets \varnothing$.\\
\For{every edge $e$ that arrives}
{
	\If{every connected component of the graph $(V, P \cup \{e\})$ is either a path of length at most $2$ or a triangle (cycle of size $3$)}
	{
		Add $e$ to $P$.
	}
}

\BlankLine

\tcp{Second Pass}
Let $A_1 \gets \varnothing$.\\
\For{every edge $(u, v) \not \in P$ that arrives}
{
	Let $C_u$ and $C_v$ be the connected components of $u$ and $v$, respectively, in $(V, P)$. We assume without loss of generality that $|C_u| > 1$, otherwise we swap the roles of $u$ and $v$. \tcp{Note that we cannot have $|C_u| = |C_v| = 1$ because the edge $(u, v)$ was not added to $P$ in the first pass.}
	\If{no edge of $A_1$ intersects $C_u$ and $C_v$, $|C_v| = 1$ and $u$ is a connection vertex of $C_u$}
	{
		Add the edge $(u, v)$ to $A_1$.
	}
}

\BlankLine

\tcp{Third Pass}
Let $A_2 \gets \varnothing$.\\
\For{every edge $(u, v) \not \in P \cup A_1$ that arrives}
{
	Let $C_u$ and $C_v$ be the connected components of $u$ and $v$, respectively, in $(V, P)$. We assume without loss of generality that $|C_u| > 1$, otherwise we swap the roles of $u$ and $v$. \tcp{Again, we cannot have $|C_u| = |C_v| = 1$.}
	\If{no edge of $A_1 \cup A_2$ intersects $C_u$ and $C_v$, and $u$ and $v$ are connection vertices of $C_u$ and $C_v$, respectively}
	{
		Add the edge $(u, v)$ to $A_2$.
	}
}

\BlankLine

\Return{a maximum matching in the graph $(V, P \cup A_1 \cup A_2)$}.
\end{algorithm}

The proof of Observation~\ref{obs:streaming} applies to Algorithm~\ref{alg:3_passes} as well, and therefore, Algorithm~\ref{alg:3_passes} is a semi-streaming algorithm. Below we concentrate on analyzing the approximation guarantee of this algorithm. It is important to note that the analysis of the approximation ratio of Algorithm~\ref{alg:triangles} up to Lemma~\ref{lem:dn_lower_bound} only depends on the behavior of the algorithm during its first pass, and therefore, applies also to Algorithm~\ref{alg:3_passes} since the two algorithms have identical first passes.

In principle, the proof of Lemma~\ref{lem:dn_greedy} applies also to Algorithm~\ref{alg:3_passes} since this proof is based on the method used by Algorithm~\ref{alg:triangles} to construct the set $A_1$, and this set is constructed in the same way by the two algorithms. However, it turns out that we need in this section a slightly stronger version of Lemma~\ref{lem:dn_greedy}. Specifically, Lemma~\ref{lem:dn_greedy} includes the value $\ftriangles$ in one of its terms. This value counts the number of connected components in $(V, P)$ that are triangles and do not include within them any edge of $M^*$. Each such connected component is intersected by at most a single edge of $A_1$ or $A_2$, and in this section we need to count separately the connected components of this kind that intersect edges from each one of these sets. Formally, we let $\ftrianglesdn$ be the number of connected components of $(V, P)$ that (1) are triangles, (2) do not include any edge of $M^*$, and (3) intersect an edge of $A_1$. Similarly, $\ftrianglesdd$ is the number of connected components of $(V, P)$ that (1) are triangles, (2) do not include any edge of $M^*$, and (3) intersect an edge of $A_2$. Since every partial triangle in $(V, P)$ intersects at most a single edge of $A_1 \cup A_2$, and the sets $A_1$ and $A_2$ are disjoint, we immediately get from these definitions $\ftrianglesdn + \ftrianglesdd \leq \ftriangles$. Furthermore, it is not difficult to verify that the proof of Lemma~\ref{lem:dn_greedy} in fact implies the following stronger version of the lemma.

\begin{lemma}[Stronger version of Lemma~\ref{lem:dn_greedy}] \label{lem:dn_greedy_strong}
\[3|A_1| \geq \dnpotential - \ftrianglesdn\enspace.\]
\end{lemma}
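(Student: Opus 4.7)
The plan is to essentially rerun the proof of Lemma~\ref{lem:dn_greedy} while tracking more precisely which triangles cost us the extra unit per excluded edge. Recall that in that proof we defined an edge $e \in M^*$ counted by $\dnpotential$ to be excluded by $f \in A_1$ if $e$ and $f$ intersect the same connected component of $(V,P)$. Exactly as before, every edge counted by $\dnpotential$ must be excluded by some edge of $A_1$ upon termination, since otherwise Algorithm~\ref{alg:3_passes} would have added it to $A_1$ during its second pass.

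The main case analysis of Lemma~\ref{lem:dn_greedy} applies verbatim here, since the second pass of Algorithm~\ref{alg:3_passes} constructs $A_1$ in the exact same way as Algorithm~\ref{alg:triangles}. For an edge $(u,v) \in A_1$ with $v$ an isolated vertex of $(V,P)$ and $u$ a connection vertex of a component $C_u$, the number of edges of $M^*$ counted by $\dnpotential$ that can be excluded by $(u,v)$ is at most $3$, except when $C_u$ is a triangle that does not contain an edge of $M^*$, in which case the bound becomes $4$. Following the same bookkeeping as in Lemma~\ref{lem:dn_greedy}, this yields
\[
	\dnpotential
	\leq
	3|A_1| + |\{f \in A_1 \mid \text{$f$ intersects a triangle of $(V,P)$ that contains no edge of $M^*$}\}|
	\enspace.
\]

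The key (and only) new observation is that the edges of $A_1$ counted in the second term are in one-to-one correspondence with the triangles counted by $\ftrianglesdn$. Indeed, a triangle $C$ of $(V,P)$ that contains no edge of $M^*$ is counted by $\ftrianglesdn$ precisely when it intersects an edge of $A_1$; and since $A_1$ contains at most one edge intersecting any given connected component of $(V,P)$, the number of such edges of $A_1$ equals the number of such triangles. Hence the second term is bounded by $\ftrianglesdn$, giving
\[
	\dnpotential \leq 3|A_1| + \ftrianglesdn
	\enspace,
\]
and rearranging proves the stronger lemma. No step of this argument is delicate; the only thing to be careful about is to use the precise definition of $\ftrianglesdn$ (triangles without $M^*$-edges that intersect an edge of $A_1$) rather than the coarser $\ftriangles$ used in the original proof, so that triangles which are not hit by $A_1$ at all are not charged here (they will instead be absorbed by the corresponding analysis of $A_2$ later).
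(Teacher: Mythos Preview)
Your proof is correct and follows essentially the same approach the paper intends: rerun the exclusion argument of Lemma~\ref{lem:dn_greedy} and observe that the extra ``$+1$'' term is incurred only for edges of $A_1$ hitting a non-$M^*$-triangle, whose count is exactly $\ftrianglesdn$ rather than the coarser $\ftriangles$. This is precisely the refinement the paper alludes to (and spells out explicitly in the analogous step of Lemma~\ref{lem:size_of_misseddd}).
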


Lemma~\ref{lem:dn_greedy_strong} lower bounds the size of the set $A_1$. Our next objective is to find a lower bound also for the size of $A_2$. As a first step towards this goal, we upper bound the number of edges that have a potential to be added to $A_2$ immediately after the first pass of Algorithm~\ref{alg:3_passes}, but are removed from this potential during the second pass of the algorithm. To formalize this notion, let us recall that $\ddpotential$ is the set of edges of $M^*$ that connect connection vertices of two distinct partial triangles of $(V, P)$. Intuitively, $\ddpotential$ counts edges that have a potential to be added to $A_2$; however, for such an edge to really end up in $A_2$, it is required that the two partial triangles it intersect remain {\naive} after the second pass. Therefore, the size of the ``lost potential'' is the number of edges that are counted by $\ddpotential$, but intersect at least one partial triangle of $(V, P)$ that is also intersected by an edge of $A_1$. In the following, we denote this number by $\misseddd$.

\begin{lemma} \label{lem:size_of_misseddd}
\[
	\misseddd
	\leq
	3|A_1| - \dnpotential + \ftrianglesdn
	\enspace.
\]
\end{lemma}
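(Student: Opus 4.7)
The target inequality can be rewritten as $\dnpotential + \misseddd \leq 3|A_1| + \ftrianglesdn$, and I would prove this combined form by extending the exclusion-counting argument of Lemma~\ref{lem:dn_greedy_strong} so that it simultaneously charges edges from $\dnpotential$ and from $\misseddd$ to edges of $A_1$.

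First I would set up a joint attribution scheme. For every $e \in \dnpotential$, greediness of the second pass (exactly as in the proof of Lemma~\ref{lem:dn_greedy_strong}) guarantees that $e$ shares a connected component of $(V, P)$ with some $f \in A_1$; pick such an $f$ and attribute $e$ to it. For every $e \in \misseddd \subseteq \ddpotential$, the very definition of $\misseddd$ says that at least one of the two partial triangles containing the endpoints of $e$ is intersected by some $f \in A_1$; attribute $e$ to such an $f$.

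Next I would bound the number of attributions to a fixed $f = (x, y) \in A_1$, with $y$ isolated in $(V, P)$ and $x$ lying in a partial triangle $C$. The $\dnpotential$-attributions to $f$ come through $y$ (at most one, since $M^*$ is a matching) or through a connection vertex of $C$, while the $\misseddd$-attributions necessarily come through a connection vertex of $C$ (both endpoints of a $\misseddd$ edge lie in partial triangles, hence none can be $y$). Splitting by the type of $C$: (i) if $C$ is a path of length $2$, then each of its two connection vertices contributes at most one attribution in total, regardless of whether the edge is of type $\dnpotential$ or $\misseddd$, because a vertex lies in at most one edge of $M^*$, giving at most $2 + 1 = 3$; (ii) if $C$ is a triangle containing an $M^*$-edge, then the two endpoints of that internal edge are saturated by a $\dmpotential$ edge which does not enter the sum, so only the remaining vertex can contribute, for a total of $1 + 1 = 2$; (iii) if $C$ is a non-$M^*$-triangle, then each of its three vertices contributes at most one attribution, for a total of $3 + 1 = 4$. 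Summing over $f \in A_1$ and noting that each non-$M^*$-triangle is intersected by at most one edge of $A_1$, the extra $+1$ in case (iii) appears exactly $\ftrianglesdn$ times, yielding $\dnpotential + \misseddd \leq 3|A_1| + \ftrianglesdn$, which rearranges to the statement of the lemma.

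The main subtlety, as usual in this style of argument, lies in the unified counting in case (i): separately bounding the two types of attributions through a single connection vertex would give $2 + 2 + 1 = 5$ rather than $3$ and lose the factor we need. The observation that any vertex lies in at most one $M^*$-edge, and hence contributes at most one attribution regardless of type, is precisely what matches the constant $3$ on the right-hand side. The remaining cases are straightforward once this point is noted.
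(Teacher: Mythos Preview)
Your proposal is correct and follows essentially the same approach as the paper: both define an exclusion/attribution of $\dnpotential$- and $\ddpotential$-type $M^*$ edges to edges of $A_1$ via shared components of $(V,P)$, and then bound the number of attributions per $f\in A_1$ by the same case analysis on the type of the partial triangle containing $f$'s non-isolated endpoint. The only cosmetic difference is that the paper packages the argument as $\misseddd \le |J| - \dnpotential$ together with $|J|\le 3|A_1|+\ftrianglesdn$, whereas you prove the combined bound $\dnpotential+\misseddd\le 3|A_1|+\ftrianglesdn$ directly (and you observe a slightly sharper bound of $2$ in case~(ii), which the paper does not need).
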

\begin{proof}
The proof of this lemma is similar to the proof of Lemma~\ref{lem:dn_greedy}, however, we write it fully for completeness.

We say that an edge $e$ of $M^*$ counted by $\dnpotential$ or $\ddpotential$ is excluded by an edge $f \in A_1$ if $e$ and $f$ intersect the same connected component of $(V, P)$. One can observe that every edge $e$ counted by $\dnpotential$ is excluded by some edge of $A_1$ (possibly itself) when Algorithm~\ref{alg:3_passes} terminates because otherwise Algorithm~\ref{alg:3_passes} would have added $e$ to $A_1$, which would have resulted in $e$ excluding itself. Therefore, the number of edges counted by $\ddpotential$ that are excluded by some edge of $A_1$, which is exactly $\misseddd$, can be upper bound by the difference $|J| - \dnpotential$, where $J$ is the set of edges counted by either $\dnpotential$ or $\ddpotential$ that are excluded by the edges of $A_1$. In other words,
\begin{equation} \label{eq:exclusion_leftover}
	\misseddd
	\leq
	|J| - \dnpotential
	\enspace.
\end{equation}

Let $(u, v)$ be an edge of $A_1$, and assume without loss of generality that $v$ is the end point of this edge which is an isolated vertex of $(V, P)$. This implies that $u$ is a connection vertex of a connected component $C_u$ of $(V, P)$ which is either a path of length $2$ or a triangle. If $C_u$ is a path of length $2$, then the edge $(u, v)$ can exclude only edges counted by either $\dnpotential$ or $\ddpotential$ that intersect either $v$ or a connection vertex of $C_u$, and there can be only $3$ such edges because $M^*$ is a matching. Next, consider the case in which $C_u$ is a triangle which is not counted by $\ftriangles$. In this case there can be at most $2$ edges of $M^*$ intersecting $C_u$, and therefore, even though $(u, v)$ can exclude any edge of $\dnpotential$ or $\ddpotential$ intersecting $C_u$ or $v$, there can be only $3$ such edges. It remains to consider the case in which $C_u$ is a triangle counted by $\ftriangles$. In this case, $(u, v)$ can again exclude every edge of $\dnpotential$ or $\ddpotential$ that intersects $C_u$ or $v$, and this time there can be at most $4$ such edges. Combining all the above, we get that the number $|J|$ of edges excluded by all the edges of $A_1$ is at most
\begin{align*}
	&
	3|A_1| + |\{e \in A_1 \mid \text{$e$ intersects a triangle counted by $\ftriangles$}\}|\\
	={} &
	3|A_1| + \ftrianglesdn
	\enspace,
\end{align*}
where the equality holds because a triangle counted by $\ftriangles$ is counted also by $\ftrianglesdn$ if and only if some edge of $A_1$ intersects it. Plugging the last upper bound on $|J|$ into Inequality~\eqref{eq:exclusion_leftover} completes the proof of the lemma.
\end{proof}

We can now prove the promised lower bound on the size of $A_2$. 

\begin{lemma} \label{lem:size_of_A2}
\[
	4|A_2| \geq \ddpotential - \misseddd - \ftrianglesdd
	\enspace.
\]
\end{lemma}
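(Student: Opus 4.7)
The plan is to mirror the exclusion-counting argument used in Lemmata~\ref{lem:dn_greedy} and~\ref{lem:size_of_misseddd}, this time counting how edges of $A_2$ block edges of $\ddpotential$. First, I would say that an edge $e \in M^*$ counted by $\ddpotential$ is \emph{excluded} by an edge $f \in A_2$ if $e$ and $f$ share some connected component of $(V,P)$. The key initial step is to argue that every edge of $\ddpotential$ which is not removed from the potential by $A_1$ must be excluded by some edge of $A_2$ (possibly by itself): if $e=(u,v)\in\ddpotential$ is available (so neither $C_u$ nor $C_v$ is hit by $A_1$) and neither $C_u$ nor $C_v$ is already hit by $A_2$ when $e$ arrives in the third pass, then Algorithm~\ref{alg:3_passes} adds $e$ itself to $A_2$. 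Since the number of available $\ddpotential$ edges is exactly $\ddpotential - \misseddd$, this yields
\[
\ddpotential - \misseddd \;\leq\; \bigl|\{e \in \ddpotential : e \text{ is excluded by some } f \in A_2\}\bigr|.
\]

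The main step is then to bound, for a given $f=(u,v)\in A_2$, the number of $\ddpotential$ edges it can exclude. Both $C_u$ and $C_v$ are partial triangles whose endpoints of $f$ are connection vertices, and an edge of $\ddpotential$ intersecting a component must intersect a connection vertex of that component. Using that $M^*$ is a matching, I would case-analyze on the shape of each of $C_u,C_v$: a path of length $2$ has $2$ connection vertices and hence at most $2$ outgoing $\ddpotential$ edges; an $M^*$-triangle has $3$ connection vertices but two of them are paired in an internal $M^*$ edge (which is itself not in $\ddpotential$ since its endpoints lie in the same component), so at most $1$ outgoing $\ddpotential$ edge; and a non-$M^*$-triangle can support up to $3$ outgoing $\ddpotential$ edges. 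Letting $T(f)\in\{0,1,2\}$ denote the number of non-$M^*$-triangles among $\{C_u,C_v\}$, these bounds combine to
\[
\bigl|\{e \in \ddpotential : e \text{ is excluded by } f\}\bigr| \;\leq\; 4 + T(f).
\]

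Finally, I would sum over $f \in A_2$. Since every partial triangle of $(V,P)$ is intersected by at most one edge of $A_2$, each triangle counted by $\ftrianglesdd$ is hit by exactly one $f \in A_2$, and hence $\sum_{f \in A_2} T(f) = \ftrianglesdd$. Therefore
\[
\ddpotential - \misseddd \;\leq\; \sum_{f \in A_2}(4 + T(f)) \;=\; 4|A_2| + \ftrianglesdd,
\]
which rearranges to the claimed inequality. The main obstacle in executing this plan is the case analysis for the per-edge exclusion bound: one must carefully verify, component type by component type, that only edges touching connection vertices of \emph{distinct} components contribute to $\ddpotential$, so that the counts $2$, $1$, and $3$ are indeed correct on a per-component basis, and also that the internal $M^*$-edge of an $M^*$-triangle does not need to be accounted for separately.
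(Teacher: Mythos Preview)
Your proposal is correct and follows essentially the same exclusion-counting argument as the paper's proof. The only cosmetic difference is that you split the non--$\ftriangles$ case into ``path of length~$2$'' (bound $2$) and ``$M^*$-triangle'' (bound $1$), whereas the paper uses the single bound~$2$ for both; since both analyses arrive at the same per-edge cap of $4+T(f)$ and the same summation $\sum_{f\in A_2} T(f)=\ftrianglesdd$, the proofs are effectively identical.
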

\begin{proof}
Recall that $\misseddd$ counts a subset of the edges that are counted by $\ddpotential$. Let $D$ be the set of edges (of $M^*$) counted by $\ddpotential$ but not by $\misseddd$. We say that an edge $e \in D$ is excluded by an edge $f \in A_2$ if $e$ and $f$ intersect the same connected component of $(V, P)$. One can observe that every edge $e \in D$ is excluded by some edge of $A_2$ (possibly itself) when Algorithm~\ref{alg:3_passes} terminates because otherwise Algorithm~\ref{alg:3_passes} would have added $e$ to $A_2$, which would have resulted in $e$ excluding itself. Therefore, we can upper bound the size of $D$ by counting the number of edges excluded by the edges of $A_2$.

Let $(u, v)$ be an edge of $A_2$, and let $C_u$ and $C_v$ be the connected components of $(V, P)$ that include $u$ and $v$ respectively. Notice that since $(u, v) \in A_2$, both $C_u$ and $C_v$ must be either paths of length $2$ or triangles. The edge $(u, v)$ excludes every edge of $D$ that intersects either $C_u$ or $C_v$. The number of $D \subseteq M^*$ edges that intersect $C_u$ can be at most $2$, unless $C_u$ is a triangle counted by $\ftriangles$, in which case there might be $3$ edges of $D$ intersecting $C_u$. Since a similar claim applies to $C_v$, we get that the number of edges excluded by all the edges of $A_2$ is at most
\[
	4|A_2| + \sum_{e \in A_2} T(e)
	=
	4|A_2| + \ftrianglesdd
	\enspace,
\]
where $T(e)$ is the number of triangles counted by $\ftriangles$ that intersect $e$, and the equality holds since a triangle is counted by $\ftrianglesdd$ if and only if it is both counted by $\ftriangles$ and intersects an edge of $A_2$. As explained above, the last expression is an upper bound on the size of $D$. Therefore, we get
\begin{align*}
	\ddpotential - \misseddd
	={} &
	|D|\\
	\leq{} &
	4|A_2| + \ftrianglesdd
	\enspace.
\end{align*}
The lemma now follows by rearranging this inequality.
\end{proof}

\begin{corollary} \label{cor:A1_A2}
\begin{align*}
	12|A_1| + 12|A_2| \geq{} & 4\dnpotential + 3\ddpotential \\&- 4\ftrianglesdn - 3\ftrianglesdd
	\enspace.
\end{align*}
\end{corollary}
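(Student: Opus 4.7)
The plan is to derive the corollary by linearly combining the three main inequalities already established: Lemma~\ref{lem:dn_greedy_strong}, Lemma~\ref{lem:size_of_misseddd}, and Lemma~\ref{lem:size_of_A2}. The variable $\misseddd$ appears only in the latter two, so the first step is to eliminate it.

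First, I would substitute the upper bound on $\misseddd$ from Lemma~\ref{lem:size_of_misseddd} into the lower bound for $4|A_2|$ given by Lemma~\ref{lem:size_of_A2}. Concretely, Lemma~\ref{lem:size_of_A2} says
\[
4|A_2| \geq \ddpotential - \misseddd - \ftrianglesdd,
\]
and plugging in $\misseddd \leq 3|A_1| - \dnpotential + \ftrianglesdn$ yields
\[
4|A_2| \geq \ddpotential + \dnpotential - 3|A_1| - \ftrianglesdn - \ftrianglesdd,
\]
which after rearranging becomes the intermediate inequality
\[
3|A_1| + 4|A_2| \geq \dnpotential + \ddpotential - \ftrianglesdn - \ftrianglesdd.
\]

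Next, I would find the correct nonnegative combination of this intermediate inequality with Lemma~\ref{lem:dn_greedy_strong} (which gives $3|A_1| \geq \dnpotential - \ftrianglesdn$) that matches the target coefficients. Multiplying the intermediate inequality by $3$ produces $9|A_1| + 12|A_2|$ on the left and the right side has coefficient $3$ on both $\dnpotential$ and $\ddpotential$; adding Lemma~\ref{lem:dn_greedy_strong} once contributes another $3|A_1|$ on the left and an extra $\dnpotential - \ftrianglesdn$ on the right, producing exactly $12|A_1| + 12|A_2|$ on the left and $4\dnpotential + 3\ddpotential - 4\ftrianglesdn - 3\ftrianglesdd$ on the right, which is the claimed inequality.

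There is no real obstacle here beyond checking the algebra; all three input inequalities are stated earlier, and the combination is a straightforward nonnegative linear combination (weights $1$ on Lemma~\ref{lem:dn_greedy_strong}, $3$ on the derived inequality $3|A_1|+4|A_2|\geq\dnpotential+\ddpotential-\ftrianglesdn-\ftrianglesdd$), so the corollary follows immediately. The only thing worth double-checking is the sign handling when substituting the upper bound on $\misseddd$ into a lower bound on $4|A_2|$, since the term $-\misseddd$ flips the direction of that inequality correctly only because $\misseddd$ appears with a negative sign.
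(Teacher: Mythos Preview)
Your proof is correct and follows essentially the same route as the paper: substitute Lemma~\ref{lem:size_of_misseddd} into Lemma~\ref{lem:size_of_A2}, rearrange to get $3|A_1|+4|A_2|\geq\dnpotential+\ddpotential-\ftrianglesdn-\ftrianglesdd$, multiply by $3$, and add Lemma~\ref{lem:dn_greedy_strong} once. The paper presents the same combination with slightly different intermediate bookkeeping, but the argument is identical.
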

\begin{proof}
Plugging Lemma~\ref{lem:size_of_misseddd} into Lemma~\ref{lem:size_of_A2}, we get
\begin{align*}
	4|A_2|
	\geq{} &
	\ddpotential - \misseddd - \ftrianglesdd\\
	\geq{} &
	\ddpotential - (3|A_1| - \dnpotential + \ftrianglesdn) \\&- \ftrianglesdd
	\enspace.
\end{align*}
Rearranging the last inequality, and multiplying it by $3$, yields
\begin{align*}
	9|A_1| + 12|A_2|
	\geq{} &
	3\ddpotential + 3\dnpotential \\&- 3\ftrianglesdn - 3\ftrianglesdd
	\enspace.
\end{align*}
The corollary now follows by adding Lemma~\ref{lem:dn_greedy_strong} to the last inequality.
\end{proof}

Let us now define $L_2 = \scomp + \dcomp + \tcomp + |A_1| + |A_2|$. The following lemma shows that one can obtain an approximation guarantee for Algorithm~\ref{alg:3_passes} by lower bounding $L_2$. Since the proof of this lemma is very similar to the proof of Lemma~\ref{lem:L_is_output}, we omit it.
\begin{lemma} \label{lem:L_2_bounds_solution}
Algorithm~\ref{alg:3_passes} outputs a matching of size at least $L_2$.
\end{lemma}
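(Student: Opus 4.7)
The plan is to mirror the proof of Lemma~\ref{lem:L_is_output}: I classify the connected components of $(V, P \cup A_1 \cup A_2)$ that contain at least one edge, exhibit a large matching $M$ by picking vertex-disjoint edges from each component, and sum the contributions. The key structural observation driving the classification is that in Algorithm~\ref{alg:3_passes} the third pass refuses to add an edge to $A_2$ if either of its endpoints lies in a partial triangle already intersected by an edge of $A_1 \cup A_2$. Together with the fact (argued exactly as in Observation~\ref{obs:streaming}) that each partial triangle intersects at most one edge of $A_1$ and at most one edge of $A_2$, this yields the cleaner property that in the three-pass algorithm no single partial triangle is touched by both an $A_1$-edge and an $A_2$-edge. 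Consequently, every partial triangle of $(V,P)$ ends up in exactly one of three states: \emph{untouched}, \emph{extended} by a single $A_1$-edge connecting a connection vertex to an isolated vertex of $(V,P)$, or \emph{merged} with another partial triangle via a single $A_2$-edge between connection vertices. Noting also that isolated edges of $P$ (counted by $\scomp$) have no connection vertices and therefore are always untouched, I obtain $\scomp + \dcomp + \tcomp - |A_1| - 2|A_2|$ untouched components, $|A_1|$ extended components, and $|A_2|$ merged components.

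I would then enumerate the shape of each type of component and extract a matching. An untouched partial triangle contributes one edge to $M$. An extended component is either a length-$3$ path (when the base is a length-$2$ path) or a triangle with a pendant edge (when the base is a triangle); both shapes contain two vertex-disjoint edges. A merged component is a length-$5$ path, a triangle with a length-$3$ path attached, or two triangles joined by an edge, and each of these shapes contains three vertex-disjoint edges. Collecting one, two, or three edges respectively from each component gives a matching in $(V, P \cup A_1 \cup A_2)$ of size
\[
    [\scomp + \dcomp + \tcomp - |A_1| - 2|A_2|] + 2|A_1| + 3|A_2| = \scomp + \dcomp + \tcomp + |A_1| + |A_2| = L_2,
\]
and since Algorithm~\ref{alg:3_passes} outputs a maximum matching in $(V, P \cup A_1 \cup A_2)$, its output has size at least $L_2$.

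The only non-routine step is the three-state classification of partial triangles, which crucially uses the third-pass condition that forbids $A_2$ from attaching to an $A_1$-touched component; this is precisely why the $2|A_2|$ term (rather than something messier) appears in the count of untouched components. The rest is essentially the same shape enumeration as in Lemma~\ref{lem:L_is_output}, merely augmented by the single extra case for $A_1$-extended components, so the author's remark that the proof is analogous and may be omitted is well justified.
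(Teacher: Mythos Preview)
Your proposal is correct and is precisely the adaptation of Lemma~\ref{lem:L_is_output} that the paper has in mind. The key structural point you identify---that the third-pass condition ``no edge of $A_1 \cup A_2$ intersects $C_u$ and $C_v$'' forces every partial triangle to be hit by at most one edge of $A_1 \cup A_2$---is exactly what collapses the parameter $H$ from the proof of Lemma~\ref{lem:L_is_output} into the clean split $|A_1|$ extended components and $|A_2|$ merged components, yielding the sum $|A_1|+|A_2|$ rather than a $\max$.
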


It remains now to lower bound $L_2$, which we do in the next lemma. Together with Lemma~\ref{lem:L_2_bounds_solution} and the above observation that Algorithm~\ref{alg:3_passes} is a semi-streaming algorithm, this lemma completes the proof of Theorem~\ref{thm:three_pass}.

\begin{lemma} \label{lem:L2_lower_bound}
$L_2 \geq \nicefrac{5}{9} |M^*|$.
\end{lemma}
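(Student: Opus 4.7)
The plan is to obtain $L_2 \geq \nicefrac{5}{9}|M^*|$ as a weighted sum of two master inequalities. The first, call it (K1), is the bound $3|A_1| + 4|A_2| \geq |M^*| - 2\scomp - \dcomp - \tcomp$. The second, call it (K2), is Corollary~\ref{cor:A_1_bound} applied to Algorithm~\ref{alg:3_passes}, i.e.\ $2|M^*| \leq 3|A_1| + 2\scomp + 4\dcomp + 4\tcomp$. The introduction of this appendix already observes that (K2) carries over to Algorithm~\ref{alg:3_passes}, because both the construction of $P$ and the construction of $A_1$ are identical to those of Algorithm~\ref{alg:triangles}, so Lemma~\ref{lem:dn_lower_bound} together with Lemma~\ref{lem:dn_greedy_strong} still yields Corollary~\ref{cor:A_1_bound}.

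To prove (K1), the idea is to chain the two new lemmata specific to Algorithm~\ref{alg:3_passes} with an inequality from the two-pass section. Substituting the upper bound on $\misseddd$ from Lemma~\ref{lem:size_of_misseddd} into the lower bound on $4|A_2|$ from Lemma~\ref{lem:size_of_A2} gives $3|A_1| + 4|A_2| \geq \dnpotential + \ddpotential - \ftrianglesdn - \ftrianglesdd$. The inequality $\ftrianglesdn + \ftrianglesdd \leq \ftriangles$ already noted in the text (because every partial triangle of $(V, P)$ is intersected by at most one edge of the disjoint union $A_1 \cup A_2$) upgrades this to $3|A_1| + 4|A_2| \geq \dnpotential + \ddpotential - \ftriangles$. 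Finally, the final inequality of Lemma~\ref{lem:auxiliary_inequalities}---still valid here because its proof depends only on the first pass---rearranges into $\dnpotential + \ddpotential - \ftriangles \geq |M^*| - 2\scomp - \dcomp - \tcomp$, and transitively combining with the previous bound delivers (K1).

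Once (K1) and (K2) are in hand, the proof closes with a one-line linear combination: summing (K1) once with $2 \cdot$ (K2) produces
\[
5|M^*| \;\leq\; 9|A_1| + 4|A_2| + 6\scomp + 9\dcomp + 9\tcomp.
\]
Because $|A_2|, \scomp \geq 0$, the right-hand side is at most $9(\scomp + \dcomp + \tcomp + |A_1| + |A_2|) = 9 L_2$, whence $L_2 \geq \nicefrac{5}{9}|M^*|$.

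The main obstacle is identifying the right weighted combination: many linear inequalities among the basic quantities are available in this appendix and the two-pass section, and naively summing them---for instance, using only Corollary~\ref{cor:A1_A2} together with Lemma~\ref{lem:auxiliary_inequalities}---is insufficient. The multipliers $(1,2)$ for $(K1, K2)$ can be pinned down by a short LP-duality computation, after which the remainder of the argument is pure rearrangement.
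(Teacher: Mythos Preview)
Your proof is correct. Both (K1) and (K2) are valid for Algorithm~\ref{alg:3_passes}: (K1) follows exactly as you describe from Lemmata~\ref{lem:size_of_misseddd} and~\ref{lem:size_of_A2} together with the final inequality of Lemma~\ref{lem:auxiliary_inequalities}, and (K2) is Corollary~\ref{cor:A_1_bound}, which carries over because Lemma~\ref{lem:dn_lower_bound} depends only on the first pass and Lemma~\ref{lem:dn_greedy_strong} is at least as strong as Lemma~\ref{lem:dn_greedy}. The final combination $5|M^*| \leq 9|A_1| + 4|A_2| + 6\scomp + 9\dcomp + 9\tcomp \leq 9L_2$ is arithmetically correct.

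Your route differs from the paper's. The paper does \emph{not} invoke Corollary~\ref{cor:A_1_bound} here; instead it starts from Corollary~\ref{cor:A1_A2} (which packages Lemmata~\ref{lem:dn_greedy_strong}, \ref{lem:size_of_misseddd} and~\ref{lem:size_of_A2} into the single bound $12|A_1|+12|A_2| \geq 4\dnpotential + 3\ddpotential - 4\ftrianglesdn - 3\ftrianglesdd$), and then runs a longer chain of substitutions, successively applying Inequality~\eqref{eq:basic_inequality} and each of Inequalities~\eqref{eq:partition}--\eqref{eq:auxilary_2} with carefully chosen fractional multipliers to reach $12L_2 \geq \tfrac{20}{3}|M^*|$. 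Your argument is shorter and more transparent: by reusing the already-packaged Corollary~\ref{cor:A_1_bound} from the two-pass analysis, you avoid reopening Inequality~\eqref{eq:basic_inequality} and the individual inequalities of Lemma~\ref{lem:auxiliary_inequalities}, and the final step becomes a single integer-weighted combination rather than a chain with coefficients like $\tfrac{8}{3}$ and $\tfrac{28}{3}$. The paper's approach, on the other hand, stays closer to the primitive quantities and does not rely on the reader recognizing that a corollary stated for Algorithm~\ref{alg:triangles} still holds verbatim for Algorithm~\ref{alg:3_passes}. One small remark: your aside that ``using only Corollary~\ref{cor:A1_A2} together with Lemma~\ref{lem:auxiliary_inequalities}'' is insufficient is slightly misleading, since the paper's proof succeeds with precisely those ingredients plus Inequality~\eqref{eq:basic_inequality}; but this does not affect the validity of your argument.
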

\begin{proof}
Observe that
\begin{align*}
	12L_2
	={} &
	12\scomp + 12\dcomp + 12\tcomp + 12|A_1| + 12|A_2|\\
	\geq{} &
	12\scomp + 12\dcomp + 12\tcomp + 4\dnpotential \\&+ 3\ddpotential - 4\ftrianglesdn \\&- 3\ftrianglesdd\\
	\geq{} &
	12\scomp + 12\dcomp + 12\tcomp + 4\dnpotential \\&+ 3\ddpotential - 4\ftriangles
	\enspace,
\end{align*}
where the first inequality follows from Corollary~\ref{cor:A1_A2}; and the second inequality holds since  we already observed that $\ftriangles \geq \ftrianglesdn + \ftrianglesdd$, and the value $\ftrianglesdd$ is non-negative by definition.

To further develop the last inequality, we recall that the analysis from Section~\ref{sec:two_passes} up until, and including, Lemma~\ref{lem:auxiliary_inequalities} applies to Algorithm~\ref{alg:3_passes} as well. Therefore,
\begin{align*}
	12L_2
	\geq{} &
	12\scomp + 12\dcomp + 12\tcomp + 4\dnpotential \\&+ 3\ddpotential - 4\ftriangles\\
	\geq{} &
	\tfrac{28}{3}\scomp + 4\dcomp + 4\tcomp + \tfrac{20}{3}\dnpotential \\&+ \tfrac{25}{3}\ddpotential + \tfrac{8}{3}\sspotential + 4\dspotential \\&+ \tfrac{8}{3}\dmpotential - 4\ftriangles\\
	\geq{} &
	\tfrac{28}{3}\scomp + \tfrac{20}{3}\dnpotential + \tfrac{25}{3}\ddpotential \\&+ \tfrac{8}{3}\sspotential + 4\dspotential + \tfrac{20}{3}\dmpotential\\
	\geq{} &
	\tfrac{20}{3}\dnpotential + \tfrac{25}{3}\ddpotential + 12\sspotential \\&+ \tfrac{26}{3}\dspotential + \tfrac{20}{3}\dmpotential\\
	\geq{} &
	\tfrac{20}{3}|M^*| + \tfrac{5}{3}\ddpotential + \tfrac{16}{3}\sspotential + 2\dspotential
	\enspace,
\end{align*}
where the second Inequality holds by Inequality~\eqref{eq:basic_inequality}, the third inequality follows from Inequality~\eqref{eq:auxilary_1} (of Lemma~\ref{lem:auxiliary_inequalities}), the fourth inequality follows from Inequality~\eqref{eq:auxilary_2} (of Lemma~\ref{lem:auxiliary_inequalities}), and the last inequality holds by Inequality~\eqref{eq:partition} (of Lemma~\ref{lem:auxiliary_inequalities}).

The lemma now follows by rearranging the last inequality and observing that $\sspotential$, $\ddpotential$ and $\dspotential$ are all non-negative values by definition.
\end{proof}

\bibliographystyle{plain}
\bibliography{Matching}

\begin{thebibliography}{10}

\bibitem{DBLP:journals/topc/AhnG18}
Kook~Jin Ahn and Sudipto Guha.
\newblock Access to data and number of iterations: Dual primal algorithms for
  maximum matching under resource constraints.
\newblock {\em {ACM} Trans. Parallel Comput.}, 4(4):17:1--17:40, 2018.

\bibitem{DBLP:conf/icalp/AssadiB21}
Sepehr Assadi and Soheil Behnezhad.
\newblock Beating two-thirds for random-order streaming matching.
\newblock In {\em 48th International Colloquium on Automata, Languages, and
  Programming ({ICALP})}, pages 19:1--19:13, 2021.

\bibitem{DBLP:conf/sosa/AssadiLT21}
Sepehr Assadi, S.~Cliff Liu, and Robert~E. Tarjan.
\newblock An auction algorithm for bipartite matching in streaming and
  massively parallel computation models.
\newblock In {\em 4th Symposium on Simplicity in Algorithms ({SOSA})}, pages
  165--171, 2021.

\bibitem{BalinskiG91}
Michel~L. Balinski and Jaime Gonzalez.
\newblock Maximum matchings in bipartite graphs via strong spanning trees.
\newblock {\em Networks}, 21(2):165--179, 1991.

\bibitem{ChitnisCEHMMV16}
Rajesh Chitnis, Graham Cormode, Hossein Esfandiari, MohammadTaghi Hajiaghayi,
  Andrew McGregor, Morteza Monemizadeh, and Sofya Vorotnikova.
\newblock Kernelization via sampling with applications to finding matchings and
  related problems in dynamic graph streams.
\newblock In Robert Krauthgamer, editor, {\em Proceedings of the Twenty-Seventh
  Annual {ACM-SIAM} Symposium on Discrete Algorithms ({SODA})}, pages
  1326--1344. {SIAM}, 2016.

\bibitem{DBLP:conf/esa/CormodeJMM17}
Graham Cormode, Hossein Jowhari, Morteza Monemizadeh, and S.~Muthukrishnan.
\newblock The sparse awakens: Streaming algorithms for matching size estimation
  in sparse graphs.
\newblock In {\em 25th Annual European Symposium on Algorithms ({ESA})}, pages
  29:1--29:15, 2017.

\bibitem{CrouchS14}
Michael~S. Crouch and Daniel~M. Stubbs.
\newblock Improved streaming algorithms for weighted matching, via unweighted
  matching.
\newblock In Klaus Jansen, Jos{\'{e}} D.~P. Rolim, Nikhil~R. Devanur, and
  Cristopher Moore, editors, {\em Approximation, Randomization, and
  Combinatorial Optimization. Algorithms and Techniques ({APPROX/RANDOM})},
  volume~28 of {\em LIPIcs}, pages 96--104. Schloss Dagstuhl - Leibniz-Zentrum
  f{\"{u}}r Informatik, 2014.

\bibitem{edmonds1965maximum}
Jack Edmonds.
\newblock Maximum matching and a polyhedron with 0, 1-vertices.
\newblock {\em Journal of research of the National Bureau of Standards B},
  69(125-130):55--56, 1965.

\bibitem{DBLP:journals/siamdm/EpsteinLMS11}
Leah Epstein, Asaf Levin, Juli{\'{a}}n Mestre, and Danny Segev.
\newblock Improved approximation guarantees for weighted matching in the
  semi-streaming model.
\newblock {\em {SIAM} J. Discret. Math.}, 25(3):1251--1265, 2011.

\bibitem{DBLP:journals/talg/EsfandiariHLMO18}
Hossein Esfandiari, MohammadTaghi Hajiaghayi, Vahid Liaghat, Morteza
  Monemizadeh, and Krzysztof Onak.
\newblock Streaming algorithms for estimating the matching size in planar
  graphs and beyond.
\newblock {\em {ACM} Trans. Algorithms}, 14(4):48:1--48:23, 2018.

\bibitem{FeigenbaumKMSZ04}
Joan Feigenbaum, Sampath Kannan, Andrew McGregor, Siddharth Suri, and Jian
  Zhang.
\newblock On graph problems in a semi-streaming model.
\newblock In Josep D{\'{\i}}az, Juhani Karhum{\"{a}}ki, Arto Lepist{\"{o}}, and
  Donald Sannella, editors, {\em Automata, Languages and Programming: 31st
  International Colloquium, ({ICALP})}, volume 3142 of {\em Lecture Notes in
  Computer Science}, pages 531--543. Springer, 2004.

\bibitem{DBLP:journals/corr/abs-2106-04179}
Manuela Fischer, Slobodan Mitrovic, and Jara Uitto.
\newblock Deterministic (1+{\(\epsilon\)})-approximate maximum matching with
  poly(1/{\(\epsilon\)}) passes in the semi-streaming model.
\newblock {\em CoRR}, abs/2106.04179, 2021.

\bibitem{DBLP:conf/soda/GoelKK12}
Ashish Goel, Michael Kapralov, and Sanjeev Khanna.
\newblock On the communication and streaming complexity of maximum bipartite
  matching.
\newblock In Yuval Rabani, editor, {\em Proceedings of the Twenty-Third Annual
  {ACM-SIAM} Symposium on Discrete Algorithms (SODA)}, pages 468--485. {SIAM},
  2012.

\bibitem{HopcroftK73}
John~E. Hopcroft and Richard~M. Karp.
\newblock An $n^{5/2}$ algorithm for maximum matchings in bipartite graphs.
\newblock {\em {SIAM} J. Comput.}, 2(4):225--231, 1973.

\bibitem{KaleT17}
Sagar Kale and Sumedh Tirodkar.
\newblock Maximum matching in two, three, and a few more passes over graph
  streams.
\newblock In Klaus Jansen, Jos{\'{e}} D.~P. Rolim, David Williamson, and
  Santosh~S. Vempala, editors, {\em Approximation, Randomization, and
  Combinatorial Optimization. Algorithms and Techniques ({APPROX/RANDOM})},
  volume~81 of {\em LIPIcs}, pages 15:1--15:21. Schloss Dagstuhl -
  Leibniz-Zentrum f{\"{u}}r Informatik, 2017.

\bibitem{DBLP:conf/soda/Kapralov13}
Michael Kapralov.
\newblock Better bounds for matchings in the streaming model.
\newblock In Sanjeev Khanna, editor, {\em Proceedings of the Twenty-Fourth
  Annual {ACM-SIAM} Symposium on Discrete Algorithms ({SODA})}, pages
  1679--1697. {SIAM}, 2013.

\bibitem{DBLP:conf/soda/Kapralov21}
Michael Kapralov.
\newblock Space lower bounds for approximating maximum matching in the edge
  arrival model.
\newblock In {\em Proceedings of the 2021 {ACM-SIAM} Symposium on Discrete
  Algorithms, ({SODA})}, pages 1874--1893, 2021.

\bibitem{DBLP:conf/soda/KapralovKS14}
Michael Kapralov, Sanjeev Khanna, and Madhu Sudan.
\newblock Approximating matching size from random streams.
\newblock In {\em Proceedings of the Twenty-Fifth Annual {ACM-SIAM} Symposium
  on Discrete Algorithms ({SODA})}, pages 734--751, 2014.

\bibitem{Konrad18}
Christian Konrad.
\newblock A simple augmentation method for matchings with applications to
  streaming algorithms.
\newblock In Igor Potapov, Paul~G. Spirakis, and James Worrell, editors, {\em
  43rd International Symposium on Mathematical Foundations of Computer Science
  ({MFCS})}, volume 117 of {\em LIPIcs}, pages 74:1--74:16. Schloss Dagstuhl -
  Leibniz-Zentrum f{\"{u}}r Informatik, 2018.

\bibitem{DBLP:journals/corr/abs-2107-07841}
Christian Konrad and Kheeran~K. Naidu.
\newblock On two-pass streaming algorithms for maximum bipartite matching.
\newblock {\em CoRR}, abs/2107.07841, 2021.

\bibitem{DBLP:conf/approx/McGregor05}
Andrew McGregor.
\newblock Finding graph matchings in data streams.
\newblock In {\em Approximation, Randomization and Combinatorial Optimization,
  Algorithms and Techniques, 8th International Workshop on Approximation
  Algorithms for Combinatorial Optimization Problems ({APPROX}) and 9th
  International Workshop on Randomization and Computation ({RANDOM})}, pages
  170--181, 2005.

\bibitem{DBLP:conf/approx/McGregorV16}
Andrew McGregor and Sofya Vorotnikova.
\newblock Planar matching in streams revisited.
\newblock In {\em Approximation, Randomization, and Combinatorial Optimization.
  Algorithms and Techniques ({APPROX/RANDOM})}, pages 17:1--17:12, 2016.

\bibitem{0001V18}
Andrew McGregor and Sofya Vorotnikova.
\newblock A simple, space-efficient, streaming algorithm for matchings in low
  arboricity graphs.
\newblock In Raimund Seidel, editor, {\em 1st Symposium on Simplicity in
  Algorithms ({SOSA})}, volume~61 of {\em {OASICS}}, pages 14:1--14:4. Schloss
  Dagstuhl - Leibniz-Zentrum f{\"{u}}r Informatik, 2018.

\bibitem{PazS17}
Ami Paz and Gregory Schwartzman.
\newblock A ($2 + \epsilon$)-approximation for maximum weight matching in the
  semi-streaming model.
\newblock In Philip~N. Klein, editor, {\em Proceedings of the Twenty-Eighth
  Annual {ACM-SIAM} Symposium on Discrete Algorithms ({SODA})}, pages
  2153--2161. {SIAM}, 2017.

\bibitem{DBLP:journals/algorithmica/Zelke12}
Mariano Zelke.
\newblock Weighted matching in the semi-streaming model.
\newblock {\em Algorithmica}, 62(1-2):1--20, 2012.

\end{thebibliography}

\end{document}